\theoremstyle{definition}
\newtheorem{theorem}{Theorem}[section]
\newtheorem{lemma}{Lemma}[section]
\newtheorem{proposition}{Proposition}[section]
\newtheorem{remark}{Remark}[section]
\title{Robust Decoding  from  Binary    Measurements with  Cardinality Constraint    Least Squares}
\author{Zhao Ding\thanks{School of Mathematics and Statistics, Wuhan University, Wuhan 430072, P.R. China. (zd1998@whu.edu.cn)}
\and  Junjun Huang \thanks{School of Mathematics and Statistics, Wuhan University, Wuhan 430072, P.R. China. (hjj$\_$wd@whu.edu.cn)}
\and Yuling Jiao\thanks{School of Mathematics and Statistics, and
Hubei Key Laboratory of Computational Science, Wuhan University, Wuhan 430072, P.R. China. (yulingjiaomath@whu.edu.cn)}
\and Xiliang Lu\thanks{School of Mathematics and Statistics, and
Hubei Key Laboratory of Computational Science, Wuhan University, Wuhan 430072, P.R. China. (xllv.math@whu.edu.cn)}
\and Zhijian Yang \thanks{School of Mathematics and Statistics, and
Hubei Key Laboratory of Computational Science, Wuhan University, Wuhan 430072, P.R. China. (zjyang.math@whu.edu.cn)}
}
\begin{document}

\maketitle

\begin{abstract}
The main goal of 1-bit compressive sampling is to decode
   $n$ dimensional signals with sparsity level $s$ from $m$  binary  measurements. This is a challenging task  due to the presence of  nonlinearity, noises  and sign flips.  In this paper,
    the cardinality constraint  least square  is proposed as
 a desired decoder.
 We prove that, up to a constant $c$, with high probability, the proposed decoder achieves a   minimax estimation error  as long as  $m \geq \mathcal{O}( s\log n)$.
 Computationally,
we utilize a generalized Newton   algorithm  (GNA) to solve  the  cardinality constraint  minimization problem with the  cost of solving a  least squares problem  with small size at each iteration.
We prove that, with high probability,  the  $\ell_{\infty}$ norm of the  estimation error between  the output of GNA  and the underlying target decays to $\mathcal{O}(\sqrt{\frac{\log n }{m}}) $ after  at most $\mathcal{O}(\log s)$ iterations. Moreover, the underlying
support can be recovered with high probability  in  $\mathcal{O}(\log s)$ steps provided that  the target signal is detectable.
Extensive numerical simulations    and comparisons with state-of-the-art methods  are presented  to illustrate the robustness  of our proposed
decoder  and the efficiency  of the GNA  algorithm. \\
\noindent\textbf{Keywords:} 1-bit compressive sampling, least square with cardinality constraint,  minimax estimation error,   generalized Newton algorithm,
support recovery.
\end{abstract}

\section{Introduction}\label{sec:intro}
Compressive sensing is a powerful signal acquisition approach  with  which one  can  recover signals beyond bandlimitedness    from noisy under-determined measurements whose number  is closer to the order of the signal complexity
than the Nyquist rate \cite{CandesRombergTao:2006,Donoho:2006,FazelCandesRcht:2008,FoucartRauhut:2013}.
Quantization that transforms the infinite-precision measurements into discrete ones is necessary for storage and transmission \cite{Sayood:2017}.
 Among others, scalar quantization is widely considered  due to its low  computational complexity.
A scalar  quantizer $\mathcal{Q}(\cdot)$ with bit depth $b$ is fully characterized by the
quantization regions  $\{[r_{\ell}, r_{\ell+1})\}_{\ell=1}^{L}$ constituting a partition of $\mathbb{R}$,  where $L = 2^{b} $,  $r_{1}=-\infty, \ \ r_{L+1}=\infty$
as well as the  codebook   $\{\omega_{\ell}\}_{\ell=1}^{L}$, where
$\mathcal{Q}(t)=\omega_{\ell}$  if  $ t \in [r_{\ell}, r_{\ell+1})$.
The 1-bit quantizer $\mathcal{Q}(t)= \mathrm{sign} (t)$,  an extreme case of scalar quantization,
that   codes the measurements  into binary values with  a single bit has been introduced into compressed sensing
\cite{BoufounosBaraniuk:2008}. The 1-bit compressed sensing (1-bit CS)  has drawn much attention because of its low cost in hardware implementation and storage  and its robustness in the low signal-to-noise ratio scenario  \cite{Laska:2012}.
\subsection{Notation and 1-bit CS model}\label{setting}
 We denote by $\Psi_i\in \mathbb{R}^{m\times1}, i = 1,..., n,$ and $\psi_j \in \mathbb{R}^{n\times 1},j=1,...m,$ the $i$th column and $j$th row  of $\Psi$, respectively. We denote zero  vector by   $\textbf{0}$. We use   $[n]$  to denote the set $\{1,...,n\}$, and  $\textbf{I}_n$ to denote the identity matrix of size $n\times n$.
 For  $ A,B\subseteq [n]$ with cardinality  $|A|,|B|$, $x_{A}=(x_{i}, i\in A)\in \mathbb{R}^{|A|}$,  $\Psi_{A}=(\Psi_{i}, i\in A)\in \mathbb{R}^{m\times|A|}$ and $\Psi_{AB}\in \mathbb{R}^{|A|\times
|B|}$  denotes a submatrix of $\Psi$ whose rows and columns
are listed in $A$ and $B$, respectively.
Let  ${x|_A} =(x_i\mathbf{1}_{i\in A})\in \mathbb{R}^n$, where, $\mathbf{1}_{A}$ denotes the indicator function of
set $A$.
Let
$|x|_{s,\infty}$ and $|x|_{\mathrm{\min}}$
be the $s$th largest elements (in absolute value) and the minimum absolute value of $x$, respectively.
We use $\mathcal{N}(\textbf{0},\Sigma)$ to denote the multivariate normal distribution, with $\Sigma$  symmetric and positive definite.
 Let $\gamma_{\textrm{max}}(\Sigma)$ and  $\gamma_{\textrm{min}}(\Sigma)$
 be the largest  and the smallest eigenvalues
of $\Sigma$, respectively.
Let $\mathrm{supp}(x)$ denote the support of $x$.
We use $\|x\|_{\Sigma}$ to denote the elliptic norm of $x$ with respect to $\Sigma$, i.e., $\|x\|_{\Sigma} = (x^{t}\Sigma x)^{\frac{1}{2}}.$ Let $\|x\|_p = (\sum_{i=1}^{n}|x_{i}|^p)^{1/p}, p\in [1,\infty],$  be
the $\ell_p$-norm of  $x$.
We denote the  number of nonzero elements of $x$ by $\|x\|_{0}$. The symbols $\|\Psi\|$ and $\|\Psi\|_{\infty}$ stands for the operator norm of $\Psi$ induced by $\ell_2$ norm and the maximum  pointwise absolute value of $\Psi$, respectively.  $\textrm{sign}(\cdot)$ operates componentwise with $\textrm{sign}(z) =1$ if $z \geq 0$ and $\textrm{sign}(z) =-1$ otherwise,  and $\odot $ denotes  the pointwise Hardmard product.
By  $\mathcal{O}(\cdot)$, we  ignore some positive numerical  constants.

Following \cite{PlanVershynincpam:2013,HuangJiao:2018},  we consider  1-bit CS model
 \begin{equation}\label{setup}
 y = \eta \odot\textrm{sign} (\Psi x^* + \epsilon),
  \end{equation}
where $y\in \mathbb{R}^{m}$ are the binary  measurements, $x^{*}\in \mathbb{R}^{n}$ is an unknown signal with  $\|x^*\|_{0}\leq s$, $\Psi \in \mathbb{R}^{m\times n}$ is a random  matrix whose rows $\psi_i, i \in [m]$  are   i.i.d.  random vectors sampled from   $\mathcal{N}(\textbf{0},\Sigma)$ with  an unknown
covariance matrix $\Sigma $, $\eta\in \mathbb{R}^{m}$ is a random vector modeling the sign flips of $y$ whose  coordinates $\eta_is$ are i.i.d.  satisfying
 $\mathbb{P}[\eta_i = 1] = 1- \mathbb{P}[\eta_i = -1] = p \neq \frac{1}{2},$ and $\epsilon \in \mathbb{R}^{n}$ is a random vector sampled from  $\mathcal{N}(\textbf{0},\sigma^2\textbf{I}_m)$ with an
unknown noise level $\sigma$ modeling errors before quantization.   We assume  $\eta_i, \epsilon_i$ and $\psi_i$ are independent.
 Since $\sigma$ is unknown,  model \eqref{setup} is  unidentifiable  in the sense that  $y = \eta \odot\textrm{sign} (\Psi x^* + \epsilon) =  \eta \odot\textrm{sign} (\alpha\Psi x^* + \alpha \epsilon), \ \ \forall \alpha > 0.$ Therefore,  the best one can do     is to recover $x^*$ up to a positive constant.   Without loss of generality
we assume $\| x^*\|_{\Sigma} = 1$.

 \subsection{Previous work}
 It is a  challenging task  to decode from nonlinear, noisy and even  sign-flipped  binary  measurements.
A lot of  efforts  have been devoted to studying the theoretical and computational issues  in  the 1-bit CS since the  pioneer     work of  \cite{BoufounosBaraniuk:2008}.
 It has been shown that   support and  vector recovery can be guaranteed in both  noiseless and noisy setting provided that
   $m > \mathcal{O}(s\log n)$ \cite{GopiJain:2013,JacquesDegraux:2013,PlanVershynincpam:2013,BauptBaraniuk:2011,JacquesLaska:2013,GuptaNowakRech:2010,BauptBaraniuk:2011,PlanVershynin:2013,ZhangYiJin:2014,Ahsen:2019}, which is the sample complexity  required in the standard CS setting.
  Adaptive sampling are  considered to  improve  the sampling and decoding performance \cite{GuptaNowakRech:2010,DaiShenXuZhang:2016,BaraniukFoucartNeedellPlan:2017}.
   Further refinements  have  been proposed   in the setting of  non-Gaussian measurement  \cite{AiPlanVershynin:2014,Goldstein:2018} and to    recover the magnitude of the target \cite{KnudsonSaabWard:2016,BaraniukFoucartNeedellPlan:2016}.
  Greedy methods \cite{LiuGongXu:2016,Boufounos:2009,JacquesLaska:2013} and first order methods   \cite{BoufounosBaraniuk:2008, LaskaWenYinBaraniuk:2011, YanYangOsher:2012,DaiShenXuZhang:2016}  are developed  to minimize the sparsity promoting nonconvex objected function caused by    the unit sphere constraint or the  nonconvex regularizers.
Convex  relaxation models are also proposed \cite{ZhangYiJin:2014,PlanVershynin:2013,PlanVershynincpam:2013,
ZymnisBoydCandes:2010,PlanVershynin:2017} to address the nonconvex optimization  problem.
Next, we review some of the above mentioned works and make some comparison with our main results.
Assuming  $\Sigma = \mathbf{I}$ and $\sigma = 0$ and $p = 1$, \cite{BoufounosBaraniuk:2008} proposed to decode $x^*$ with \begin{equation*}
\min_{x\in \mathbb{R}^n} \|x\|_1\quad \mathrm{s.t.}\quad y\odot \Psi x \geq  0, \quad \|x\|_2 = 1.
 \end{equation*}
The  Lagrangian version of the above formulation,  i.e.,
   \begin{equation*}
\min_{x\in \mathbb{R}^n}  \|\max\{\textbf{0},-y\odot \Psi x\}\|_2^2 + \lambda\|x\|_1\quad \mathrm{s.t.} \quad \|x\|_2 = 1.
 \end{equation*}
 is  solved  via  first order method \cite{LaskaWenYinBaraniuk:2011}.
  \cite{JacquesLaska:2013} proposed
   \begin{equation}\label{orrobust}
\min_{x\in \mathbb{R}^n}  \mathcal{L}(\max\{\textbf{0},-y\odot \Psi x\}) \quad \mathrm{s.t.} \quad \|x\|_0 \leq s, \quad \|x\|_2 = 1,
 \end{equation}
to deal with    noises, where  $\mathcal{L}(\cdot) = \|\cdot\|_1$ or $\|\cdot\|_2^2$.
  Binary iterative hard thresholding (BITH), a projected sub-gradient method, is developed   to solve  \eqref{orrobust}.
Assuming  $p\neq 1, \sigma = 0$, i.e.,  considering  sign flips in the noiseless model,  \cite{DaiShenXuZhang:2016} proposed
 \begin{equation}\label{prox}
\min_{x\in \mathbb{R}^n}  \lambda \|\max\{\textbf{0},\nu \textbf{1}-y\odot \Psi x\}\|_{0} + \frac{\beta}{2} \|x\|_2^2 \quad \mathrm{s.t.} \quad \|x\|_0 \leq s,
 \end{equation}
 where
 $\nu >  0,  \beta > 0 $ are tuning parameters.
 \cite{YanYangOsher:2012} proposed  adaptive outlier pursuit (AOP) as a generalization of \eqref{orrobust}  to recover $x^*$ and  simultaneously  detect the entries with sign flips via
   \begin{equation*}
\min_{x\in \mathbb{R}^n, \Lambda \in \mathbb{R}^{m}} \mathcal{L}(\max\{\textbf{0},-\Lambda \odot y\odot \Psi x\}) \quad \mathrm{s.t.} \quad \Lambda_i \in \{0,1\},\quad \|\textbf{1} - \Lambda\|_1 \leq N, \quad \|x\|_0 \leq s,  \quad \|x\|_2 = 1,
 \end{equation*}
 where $N$ is the  number of sign flips. Alternating minimization on $x$ and $\Lambda$ are adopted to solve the optimization problem.  \cite{HuangShiYan:2015} considered  both the noises and the sign flips with  pinball loss,
  \begin{equation*}
 \min_{x\in \mathbb{R}^n}  {\mathcal{L}}_{\tau}( \nu \textbf{1}  - y \odot \Psi x\}) \quad \mathrm{s.t.} \quad \|x\|_0 \leq s  \quad \|x\|_2 = 1,
 \end{equation*}
 where $\mathcal{L}_{\tau}(t) = t \textbf{1}_{t\geq 0} - \tau t \textbf{1}_{t<0}$. The pinball iterative hard thresholding   is  developed  to solve the above display.
 In general, there are no theoretical guarantees  for  the  models  mentioned  above except  \cite{JacquesLaska:2013} and  \cite{DaiShenXuZhang:2016}, where they proved
   the estimation error of both \eqref{orrobust} and \eqref{prox} are smaller than $\delta$  provided that   $m \geq \mathcal{O}(\frac{s}{\delta^2} \log n )$.
The aforementioned   state-of-the-art methods   are  nonconvex,   thus it is  hard to justify whether the  corresponding algorithms are loyal to their models. In contrast, in this  paper  we not only  drive the  estimation error of proposed decoder  in Theorem \ref{errsub} which is the same order as those  of \cite{JacquesLaska:2013} and  \cite{DaiShenXuZhang:2016}, but also derive a similar bound on   the  estimation error between $\{x^k\}_{k}$, the output of our generalized Newton algorithm,   and the underlying target $x^*$  in Theorem \ref{th:est}.  Therefore, there is no gap between our  theory and  computation.

Convex relaxation  is  another line of research in 1-bit CS since the seminal    work  \cite{PlanVershynincpam:2013}, where they  proposed  the following linear programming model in the   noiseless setting  without sign flips
 \begin{equation*}
 x_{\mathrm{lp}}\in \arg\min_{x\in \mathbb{R}^n} \|x\|_1 \quad \mathrm{s.t.} \quad  y \odot\Psi x\geq 0  \quad \|\Psi x\|_1  =  m.
 \end{equation*}
 As shown in \cite{PlanVershynincpam:2013},  the estimation error is   $$\|\frac{x_{\mathrm{lp}}}{\|x_{\mathrm{lp}}\|} - x^*\|\leq \mathcal{{O}}(({s \log^2 \frac{n}{s}})^{\frac{1}{5}}).$$
 The above result is improved to $$\|\frac{x_{\mathrm{cv}}}{\|x_{\mathrm{cv}}\|} - x^*\|\leq \mathcal{{O}}(({s \log \frac{n}{s}})^{\frac{1}{4}})$$ in \cite{PlanVershynin:2013}, where both the noises and the sign flips are allowed,  through  considering the convex problem
  \begin{equation}\label{convex1}
 x_{\mathrm{cv}} \in \arg\min_{x\in \mathbb{R}^n} -\langle y, \Psi x\rangle/m \quad \mathrm{s.t.} \quad \|x\|_1 \leq s,  \quad \|x\|_2 \leq 1.
 \end{equation}
  The results derived  in \cite{PlanVershynincpam:2013} and \cite{PlanVershynin:2013} are suboptimal comparing  with   our result in Theorem \ref{errsub}.
 In the noiseless case,  \cite{ZhangYiJin:2014} considered the Lagrangian version of  \eqref{convex1}
  \begin{equation}\label{convex2}
 \min_{x\in \mathbb{R}^n} -\langle y, \Psi x\rangle/m  +  \lambda \|x\|_1 \quad \mathrm{s.t.}  \quad \|x\|_2 \leq 1.
 \end{equation}
 In this special case, the estimation error  derived  matched our  results in  Theorem \ref{errsub}. However, the result derived in \cite{ZhangYiJin:2014} does not hold when     $\Sigma \neq \textbf{I}_n.$
 \cite{PlanVershynin:2017,Vershynin:2015},  proposed a simple projected linear estimator $\mathrm{Proj}_{K}(\Psi^t y/m)$, where $K = \{x\big |\|x\|_1 \leq s, \|x\|_2 \leq 1\}$,  to estimate the low-dimensional structure target belonging to  $K$  in high dimensions from noisy and possibly nonlinear observations. They derived the same  order of estimation error as that in our Theorem \ref{errsub} assuming $\Sigma $ is known. However, as shown in \cite{HuangJiao:2018}, this simple decoder is not roust to noises and sign flips.
 \cite{HuangShiYan:2015} introduced a convex model by  replacing the linear loss in \eqref{convex2} with the  pinball loss and
 \cite{ZymnisBoydCandes:2010} proposed an
 $\ell_1$ regularized maximum likelihood estimate. However, the  sample complexity or estimation error are not studied in  these two  works.
\subsection{Contributions}
In this paper, we study to decode  form the 1-bit CS model \eqref{setup}
 with the cardinality constraint  ordinary  least square
\begin{equation}\label{subreg}
 x_{\ell_0} \in \arg \min  \frac{1}{2m}\|y - \Psi x\|_2^2, \ \ \mathrm{s.t.} \ \ \|x\|_0 \leq s.
  \end{equation}

 (1) We prove  that, with high probability  the estimation error  $\|x_{\ell_0}/c - x^*\| \leq \delta, \delta \in (0,1)$ provided that   $m\geq \mathcal{O}(\frac{s\log n}{\delta^2})$, which is minimax optimal and match  the sample complexity
required for  the standard CS.
\begin{adjustwidth}{1cm}{1cm}
   \textit{Up to a constant $c$, the  sparse signal  $x^*$ can  be decoded from 1-bit measurements with the  cardinality constraint  least squares, as long as the sample complexity is $m \geq \mathcal{O}(s\log n)$.}
 \end{adjustwidth}

 (2) We introduce    a  generalized  Newton algorithm (GNA) to solve the $\ell_0$-constraint  minimization \eqref{subreg}  with computational cost $\mathcal{O}(\max \{s^2, m\}n)$  per iteration.
 We prove that, up to a constant $c$,  with high probability,  the  $\ell_{\infty}$ norm of the  estimation error between $\{x^k\}_{k}$, the output of GNA,  and the target $x^*$ decays to $\mathcal{O}(\sqrt{\frac{\log n}{m}}) $ with at most $\mathcal{O}(\log s)$ iterations.  Moreover, the underlying
support can be recovered with high probability  in  $\mathcal{O}(\log s)$ steps provide that  the target signal is detectable.
{The code is available  at \url{http://faculty.zuel.edu.cn/tjyjxxy/jyl/list.htm}.}

 \begin{adjustwidth}{1cm}{1cm}
   \textit{Up to a constant $c$, with high probability,  the target  $x^*$ can be  decoded   from noisy and sign flipped binary measurements  with a sharp error  via    the proposed generalized  Newton method   costing at most   $\mathcal{O}(n\max \{s^2, m\} \log s)$ floats. Meanwhile, the true support can be recovered with  the cost
   $\mathcal{O}( n\max \{s^2, m\} \log s)$ if $x^*$ is detectable.}
 \end{adjustwidth}

The rest of the paper is organized as follows. In Section 2  we  consider the cardinality constraint least square decoder  and prove a minimax bound on   $\|x_{\ell_0}/c -x^* \|$.
 In Section 3 we introduce the  generalized Newton  algorithm to solve  \eqref{subreg}. Also, we  prove the sharp estimation error of the output of GNA and study its support recovery property.
In Section 4 we conduct numerical simulation   and compare  with  existing  state-of-the-art 1-bit CS  methods.
We conclude  in Section 5. Proofs of the Lemmas,  Theorems and Propositions  are provided  in the Appendix.

\section{Decoding with cardinality constraint  least squares}
 Using least squares to estimate parameters in the scenario of   model  misspecification    goes back to
 \cite{Brillinger:2012}, and see also  \cite{LiDuan:1989} and the references therein  for related  development  in the setting $m \gg n$.
 Recently, with this idea,  \cite{PlanVershynin:2016,Neykov:2016,HuangJiao:2018} proposed Lasso type methods  to estimate parameters from   general under-determined  nonlinear measurements.
 Following  this line, we propose the  cardinality constrained least squares decoder \eqref{subreg}.
  Model \eqref{subreg} and its "Lagrangian" version have been  studied when $y$ is continuous in compressed sensing and high-dimensional statistics   \cite{FoucartRauhut:2013,ZhangZhang:2012,JiaoJinLu:2015}. In the scenario of continuous $y$,   the global minimizer of  \eqref{subreg}  is a unbiased estimator of the target $x^*$,  and has better selection and prediction results than the convex Lasso   model \cite{ZhangZhang:2012,Zhang:2017}.

  As far as we know, this is the first study  of   \eqref{subreg} in the setting of   quantized measurements.
Next, we show that, up to the constant $$c=(2p - 1)\sqrt{\frac{2}{\pi(\sigma^2+1)}},$$ the estimation error of $x_{\ell_{0}}$
achieves a minimax optimal order even if the measurements   are binary and  noisy and  corrupted by sign flips.
 \begin{theorem}\label{errsub}
 Assume
$ n> m \geq  \max\{\frac{4C_1}{C_2^2}\log n, \frac{16(C_2+1)^2}{C_1}s\log\frac{en}{s}\}$,  $s \leq \exp^{(1-\frac{C_1}{2})} n$. Then
with probability at least $1-2/n^3  -4/n^2$, we have,
\begin{equation}\label{lssub}
\|x_{\ell_0}/c - x^*\| \leq  \frac{9(\sigma+1+C_3)}{\sqrt{C_1} \gamma_{\mathrm{min}}(\Sigma)|p-1/2|}\sqrt{\frac{s\log n}{m}}.
\end{equation}
\end{theorem}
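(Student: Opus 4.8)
The strategy is the classical one for least squares under model misspecification (in the spirit of Brillinger and of Plan--Vershynin): first identify, through a Gaussian/Stein-type computation, the population object that the decoder \eqref{subreg} is implicitly estimating, and then bound the distance between $x_{\ell_0}$ and that object by combining a ``basic inequality'' coming from optimality with a restricted eigenvalue bound and an $\ell_\infty$ concentration bound on an empirical gradient. For Step~1, set $\bar x := c\,x^*$. Since $\eta_i$ is independent of $(\psi_i,\epsilon_i)$ with $\mathbb{E}\eta_i=2p-1$, and $\psi_i$ is jointly Gaussian with $v_i:=\langle\psi_i,x^*\rangle+\epsilon_i$, where $\mathrm{Cov}(\psi_i,v_i)=\Sigma x^*$ and $\mathrm{Var}(v_i)=\|x^*\|_\Sigma^2+\sigma^2=1+\sigma^2$, conditioning on $v_i$ and using $\mathbb{E}[v_i\,\mathrm{sign}(v_i)]=\mathbb{E}|v_i|=\sqrt{2(1+\sigma^2)/\pi}$ gives $\mathbb{E}[y_i\psi_i]=(2p-1)\,\frac{\Sigma x^*}{1+\sigma^2}\,\mathbb{E}|v_i|=c\,\Sigma x^*$. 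Hence the population risk $x\mapsto\tfrac12\mathbb{E}(y_1-\langle\psi_1,x\rangle)^2$ equals $\tfrac12\|x-\bar x\|_\Sigma^2+\tfrac12(1-c^2)$, whose unique minimizer is the $s$-sparse vector $\bar x$; this is exactly why $x_{\ell_0}/c$, not $x_{\ell_0}$, is the quantity compared with $x^*$.

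For Step~2, write $\widehat\Sigma=\frac1m\Psi^t\Psi$, $R_m(x)=\tfrac{1}{2m}\|y-\Psi x\|_2^2$, and $g:=\nabla R_m(\bar x)=-\tfrac1m\Psi^t(y-\Psi\bar x)$. Expanding the quadratic $R_m$ exactly around $\bar x$ and using $R_m(x_{\ell_0})\le R_m(\bar x)$ (optimality in \eqref{subreg}, together with $\|\bar x\|_0\le s$) yields
\[
\tfrac12\,(x_{\ell_0}-\bar x)^t\widehat\Sigma\,(x_{\ell_0}-\bar x)\le-\langle g,\,x_{\ell_0}-\bar x\rangle\le\sqrt{2s}\,\|g\|_\infty\,\|x_{\ell_0}-\bar x\|_2,
\]
since $x_{\ell_0}-\bar x$ is $2s$-sparse. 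Step~3 supplies the two ingredients. (a) A restricted-eigenvalue / RIP bound for the anisotropic Gaussian design: on the regime $m\ge\frac{16(C_2+1)^2}{C_1}s\log\frac{en}{s}$, one has $z^t\widehat\Sigma z\ge\tfrac12\gamma_{\min}(\Sigma)\|z\|_2^2$ for every $2s$-sparse $z$, outside an event of probability $\le 4/n^2$. (b) An $\ell_\infty$ deviation bound on the centered gradient: each $g_j=-\tfrac1m\sum_i\psi_{ij}\bigl(y_i-c\langle\psi_i,x^*\rangle\bigr)$ is an average of i.i.d.\ mean-zero sub-exponential variables, because the residual $y_i-c\langle\psi_i,x^*\rangle$ is bounded-plus-Gaussian (hence sub-Gaussian) and $\mathbb{E}[\psi_{ij}(y_i-c\langle\psi_i,x^*\rangle)]=0$ by Step~1; Bernstein's inequality plus a union bound over $j\in[n]$ give $\|g\|_\infty\le C(\sigma+1+C_3)\sqrt{\log n/m}$ on the regime $m\ge\frac{4C_1}{C_2^2}\log n$, outside an event of probability $\le 2/n^3$, with $C_1,C_2,C_3$ calibrated to these two events.

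For Step~4, insert (a) and (b) into the basic inequality and cancel one power of $\|x_{\ell_0}-\bar x\|_2$ to obtain
\[
\|x_{\ell_0}-\bar x\|_2\le\frac{4\sqrt{2s}\,\|g\|_\infty}{\gamma_{\min}(\Sigma)}\le\frac{C'(\sigma+1+C_3)}{\gamma_{\min}(\Sigma)}\sqrt{\frac{s\log n}{m}}.
\]
Dividing by $c=2|p-\tfrac12|\sqrt{2/(\pi(1+\sigma^2))}$, using $\bar x/c=x^*$ together with $\sqrt{1+\sigma^2}\le\sigma+1$, and tidying up the constant $\sqrt{C_1}$ using $m\ge\frac{4C_1}{C_2^2}\log n$ and the hypothesis on $s$ (which keeps $\log\frac{en}{s}$ and $\log n$ comparable) gives \eqref{lssub}; the overall failure probability is $\le 2/n^3+4/n^2$ by a union bound.

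The main obstacle is Step~3(b). The summands $\psi_{ij}\bigl(y_i-c\langle\psi_i,x^*\rangle\bigr)$ are products of \emph{dependent} objects — the Gaussian coordinate $\psi_{ij}$, the Gaussian projection $\langle\psi_i,x^*\rangle$, and the sign $\mathrm{sign}(\langle\psi_i,x^*\rangle+\epsilon_i)$ (modulated by $\eta_i$) — so one must (i) confirm they are mean-zero using the exact identity of Step~1, which is where the specific value of $c$ enters, and (ii) bound their Orlicz $\psi_1$-norm uniformly in $j$ before Bernstein applies. The RIP bound in Step~3(a) is by now standard for subgaussian designs, but its constants have to be threaded carefully to match the hypotheses on $m$ and $s$.
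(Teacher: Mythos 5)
Your proposal is correct and follows essentially the same route as the paper: compare $x_{\ell_0}$ to $\tilde x^*=cx^*$, use the optimality inequality plus $2s$-sparsity of the difference to reduce everything to a restricted smallest-eigenvalue bound (the paper's Lemma~\ref{lb}) and an $\ell_\infty$ bound on $\Psi^t(y-c\Psi x^*)/m$ (the paper's Lemma~\ref{noiselinf}, which it imports from \cite{HuangJiao:2018} and whose proof is exactly your Stein-identity-plus-Bernstein argument). The only differences are cosmetic: your explicit Step~1 computation of $\mathbb{E}[y_i\psi_i]=c\,\Sigma x^*$ is implicit in the cited lemma, and your failure-probability bookkeeping and the constant in the RE bound ($\tfrac12$ vs.\ $\tfrac14\gamma_{\min}(\Sigma)$) differ immaterially from the paper's.
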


 \begin{proof}
The proof  is given in Appendix \ref{app:errsub}.
\end{proof}

\begin{remark}
The sample complexity required in Theorem \ref{errsub}, i.e.,
$m \geq \mathcal{O} (s\log\frac{n}{s})$ is  optimal to guarantee the possibility of decoding  from   binary measurement successfully \cite{Ahsen:2019}. The estimation error  derived in in Theorem \ref{errsub}  matches the minimax optimal order  $\mathcal{O}(\sqrt{\frac{s \log n}{m}})$ in the sense that  it is the optimal order that can be attained even if  the signal  is measured precisely without  quantization \cite{Raskutti:2011}.
 Theorem \ref{errsub} also implies that  the support of $x_{\ell_{0}}$ coincides with that of $x^*$ as long as   the minimum nonzero  magnitude of $x^*$ is large enough, i.e., $|x^*|_{\min} \geq \mathcal{O}(\sqrt{\frac{s\log n}{m}})$.

 Comparing with Theorem 3.1 in \cite{HuangJiao:2018}, the estimation error of the cardinality constraint least squares decoder   proved here is better than that of Lasso  in the sense that  it does not depend on the condition number of  $\Sigma$. Meanwhile  the number of samples needed here is smaller than that in \cite{HuangJiao:2018}.  Both improvements  are verified by  our  numerical studies, see Section 4.
\end{remark}

\section{Generalized Newton  algorithm}
 In this section we develop a generalized Newton algorithm  (GNA) to  solve \eqref{subreg} approximately.
 Furthermore, we bound  the  $\ell_{\infty}$ norm of the  estimation error between  the output of GNA  and the target $x^*$ and study its  support  recovery  property.
\subsection{KKT condition and derivation of GNA}
We first derive a KKT condition of \eqref{subreg}, which is our starting point for deriving GNA.
We use $x$ to denote $x_{\ell_0}$ for simplicity.
\begin{lemma}\label{kkt}
Let $\eta \in (0,\frac{4}{9\gamma_{\mathrm{max}}(\Sigma)})$,
under the condition of Theorem \ref{errsub}, we have with probability at least $1-4/n^2$,
\begin{equation}\label{kkteq}
\left\{
                             \begin{array}{ll}
                               d = \Psi^t(y-\Psi x)/m, \\
                               x = \mathcal{H}_{s}(x + \eta d),
                             \end{array}
                           \right.
\end{equation}
where,  $\mathcal{H}_{s}(z)$ is the hard thresholding operation on $z$ that keeps the first $s$ largest entries  in absolute value and kill others as zero.
\end{lemma}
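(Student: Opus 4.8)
The plan is to read \eqref{kkteq} as the assertion that a global minimizer $x = x_{\ell_0}$ of \eqref{subreg} is a fixed point of one iterative hard thresholding step with step size $\eta$, and to prove this by the standard argument that such a ``one–step improvement'' cannot strictly decrease the objective at a global minimum. Write $f(w) = \frac{1}{2m}\|y-\Psi w\|_2^2$, so that $\nabla f(w) = -\frac{1}{m}\Psi^t(y-\Psi w)$; the first line of \eqref{kkteq} is then merely the definition $d = -\nabla f(x)$. The only probabilistic ingredient I need is an upper restricted isometry bound valid on the high–probability event underlying Theorem \ref{errsub} (this is where the $4/n^2$ comes from): for every index set $T$ with $|T|\le 2s$ and every $v\in\mathbb{R}^n$ supported on $T$,
\[
\frac{1}{m}\|\Psi_T v_T\|_2^2 \;\le\; \tfrac{9}{4}\,\gamma_{\mathrm{max}}(\Sigma)\,\|v\|_2^2 .
\]
This is exactly the type of concentration estimate already produced in the proof of Theorem \ref{errsub}; its $2s$–sparse version is accommodated by the sample complexity $m\ge \frac{16(C_2+1)^2}{C_1}s\log\frac{en}{s}$ after absorbing the factor $2$ into the constants. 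It pins down the Lipschitz constant of $\nabla f$ along $2s$–sparse directions to be at most $\frac{9}{4}\gamma_{\mathrm{max}}(\Sigma)$, which explains why the admissible step sizes are $\eta\in(0,\tfrac{4}{9\gamma_{\mathrm{max}}(\Sigma)})$.

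Next, set $u = \mathcal{H}_{s}(x+\eta d)$, and note that $u$ and $x$ are both $s$–sparse so $u-x$ is supported on a set of size at most $2s$. Completing the square shows that $u$ is the minimizer over $\{\|w\|_0\le s\}$ of the quadratic surrogate $g(w) := \langle\nabla f(x), w-x\rangle + \frac{1}{2\eta}\|w-x\|_2^2 = \frac{1}{2\eta}\|w-(x+\eta d)\|_2^2 - \frac{\eta}{2}\|d\|_2^2$. Since $x$ itself is feasible for this minimization, $g(u)\le g(x) = 0$. Because $f$ is quadratic we also have the exact identity $f(u)-f(x) = \langle\nabla f(x), u-x\rangle + \frac{1}{2m}\|\Psi(u-x)\|_2^2$, and the restricted isometry bound together with $\eta<\tfrac{4}{9\gamma_{\mathrm{max}}(\Sigma)}$ gives $\frac{1}{2m}\|\Psi(u-x)\|_2^2 \le \frac{9\gamma_{\mathrm{max}}(\Sigma)}{8}\|u-x\|_2^2 < \frac{1}{2\eta}\|u-x\|_2^2$ whenever $u\neq x$. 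Combining, $f(u)-f(x)\le g(u)\le 0$. But $x=x_{\ell_0}$ is a global minimizer of $f$ over $\{\|w\|_0\le s\}$ and $u$ is feasible, so $f(u)\ge f(x)$; hence $f(u)=f(x)$ and every inequality above is an equality. Using $\langle\nabla f(x),u-x\rangle\le -\frac{1}{2\eta}\|u-x\|_2^2$ (from $g(u)\le 0$) in $0 = \langle\nabla f(x),u-x\rangle + \frac{1}{2m}\|\Psi(u-x)\|_2^2$ yields $0\le\bigl(\tfrac{9\gamma_{\mathrm{max}}(\Sigma)}{8}-\tfrac{1}{2\eta}\bigr)\|u-x\|_2^2$, and since the coefficient is strictly negative we conclude $u=x$, which is the second line of \eqref{kkteq}.

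The routine parts of this argument are the completing–the–square identity and the quadratic expansion of $f$; the one place that needs care is matching the restricted isometry constant used here, namely $\frac{9}{4}\gamma_{\mathrm{max}}(\Sigma)$ over $2s$–sparse supports with failure probability $4/n^2$, to the concentration inequality established for Theorem \ref{errsub}, and verifying that the sample complexity there is large enough to carry the $2s$–sparse version. I expect this to be the main, though fairly minor, obstacle, and it should follow by re-examining the relevant estimate in the proof of Theorem \ref{errsub} with $s$ replaced by $2s$ and the numerical constants adjusted accordingly.
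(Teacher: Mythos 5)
Your proposal is correct and follows essentially the same route as the paper: the paper simply invokes Theorem 2.2 of Beck and Eldar (2013) on $L$-stationarity of global minimizers of sparsity-constrained problems, verifying its hypotheses ($s$-regularity and a restricted Lipschitz constant of at most $\tfrac{9}{4}\gamma_{\mathrm{max}}(\Sigma)$ over $2s$-sparse supports) via Lemma \ref{lb} with probability $1-4/n^2$, which is exactly the concentration ingredient you use (and which is already stated for $2s$-sparse supports, so no constant adjustment is needed). The only difference is that you re-derive the Beck--Eldar fixed-point argument from first principles rather than citing it, and that derivation is sound.
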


\begin{proof}
The proof is shown in in Appendix \ref{app:kkt}.
\end{proof}

Let $A =\text{supp}(x)$ and $I=\bar{A}$.
By  \eqref{kkteq} and the definition of $\mathcal{H}_{s}(\cdot)$, we have
\begin{equation*}
A=\{i\in[n]:|x_{i}+ \eta d_{i}|\geq |x+d|_{s,\infty}\}, \ \ I=\bar{A},
\end{equation*}
and
\begin{align}
\label{eq5a}
\left\{
\begin{aligned}
&x_{I} =\mathbf{0}\\
&d_{A} =\mathbf{0}\\
&x_{A}=(\Psi_{A}^{t}\Psi_{A})^{-1}\Psi_{A}^{t}y\\
&d_{I} =\Psi_{I}^{t}(y-\Psi_{A}x_{A})/m.
\end{aligned}
\right.
\end{align}
Let $(x^{k},d^{k})$ be the values at the $k$-th iteration,
and let $\{A^{k},I^{k}\}$ be the active and inactive sets defined as
\begin{equation}\label{eq4}
A^k=\{i\in[n]:|x^k_{i}+ \eta d^k_{i}|\geq|x^k+\eta d^k|_{s,\infty}\}, \ \
I^k= \overline{A^k}.
\end{equation}
 Our proposed generalized Newton  algorithm updates the primal and dual pair $(x^{k+1},d^{k+1})$
according to  \eqref{eq5a} as follows:
\begin{align}\label{eq5}
\left\{
\begin{aligned}
&x^{k+1}_{I^{k}}=\mathbf{0}\\
&d^{k+1}_{A^{k}} =\mathbf{0}\\
&x^{k+1}_{A^{k}}=(\Psi_{A^k}^{t}\Psi_{A^k})^{-1}\Psi_{A^k}^{t}y\\
&d^{k+1}_{I^{k}}=\Psi_{I^k}^{t}(y -\Psi_{A^k}x_{A^k}^{k+1})/m\\
\end{aligned}
\right.
\end{align}

We summarize the GNA in detail in the following  Algorithm.
{
\begin{algorithm}
   \caption{Generalized Newton Algorithm (GNA)}\label{alg:genew}
   \begin{algorithmic}[1]
     \STATE Input $y,\Psi,s, \eta$,  initial guess $x^0$, maximum number of iteration \textsl{MaxIter}. Let $d^0= \Psi^t( y -\Psi x^0)/m$.
     \FOR {$k=0,1,...\textsl{MaxIter}$}
     \STATE Compute the active and inactive sets $A^k$ and $I^k$ respectively by \eqref{eq4}.
     \STATE Update $x^{k+1}$ and $d^{k+1}$  by
       \eqref{eq5}.
     \STATE If ${A}^{k} = {A}^{k+1}$, stop.
     \ENDFOR
   \STATE Output $x^{k+1}$.
   \end{algorithmic}
\end{algorithm}
}

\begin{remark}\label{ccp}
 It takes $\mathcal{O}(n)$
flops to finish step  3  in GNA. In step 4, it takes   $\mathcal{O}(mn)$ flops except the least squares step, which is  the most time consuming part. Forming the matrix $\Psi_{{A}^{k}}^{t}
\Psi_{{A}^{k}}$  takes $\mathcal{O}(ms^2)$ flops while the cost of computing  $\Psi^t y$ is negligible since it
can be precomputed and stored. Inverting $\Psi_{{A}^{k}}^{t}
\Psi_{{A}^{k}}$ costs  $ \mathcal{O}(s^3)$ flops by direct methods.
Therefore,  the overall cost of GNA  per iteration is  $\mathcal{O}(\max\{n,s^2\}m)$.
\end{remark}

\subsection{GNA as Newton type method}
When $\eta = 1$,  GNA Algorithm \ref{alg:genew}  has been proposed as a greedy method for standard compressed sensing \cite{Foucart:2011} with the name hard threshoding pursuit.
Interestingly,
we show that the proposed GNA algorithm \ref{alg:genew} can be interpreted as Newton type method for finding roots of the KKT system \eqref{kkteq} even though the original problem   \eqref{subreg} is nonconvex and nonsmooth.
Let $ w =(x; d)$ and
$
F(w)=
\left(
\begin{array}{c}
 F_{1}(w)\\
F_{2}(w)
\end{array}
 \right)
:\mathbb{R}^{n} \times \mathbb{R}^{n} \rightarrow \mathbb{R}^{2n},
$
where
$
  F_{1}(w)= x  - \mathcal{H}_{s}(x + d)$ and
 $ F_{2}(w)= \Psi^t \Psi x + m d -\Psi^t y.$
\begin{proposition}\label{eqn}
The  iteration in \eqref{eq5} can be equivalently  reformulated as
$$ w^{k+1} = w^k - (H^k)^{-1}F(w^k),$$
where
\begin{equation*}
H^k = \left(
        \begin{array}{cc}
            H^k_{1} & H^k_{2}\\
            \Psi^t\Psi & m\textbf{I} \\
          \end{array}
        \right)
, \quad
H^k_{1} = \left(
          \begin{array}{cc}
            \textbf{0}_{{A}^k{A}^k} & \textbf{0}_{{A}^k{I}^k} \\
            \textbf{0}_{{I}^k{A}^k}& \textbf{I}_{{I}^k{I}^k}\\
          \end{array}
        \right)
 \quad \textrm{and} \quad
  H^k_{2} = \left(
          \begin{array}{cc}
            -\textbf{I}_{{A}^k{A}^k} & \textbf{0}_{{A}^k{I}^k}\\
            \textbf{0}_{{I}^k{A}^k}&  \mathbf{0}_{{I}^k{I}^k}\\
          \end{array}
        \right).
 \end{equation*}
\end{proposition}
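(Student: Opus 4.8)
The plan is to treat the statement as a purely algebraic identity: write the linear system $H^k(w^{k+1}-w^k) = -F(w^k)$ in $2\times 2$ block form, and verify line by line that its unique solution is exactly the update \eqref{eq5}. No optimization or differentiability argument is needed once one observes that, on the frozen active set $A^k$, the map $\mathcal{H}_{s}$ reduces to a coordinate projection, so that $H^k$ plays the role of a (generalized) Jacobian of $F$ at $w^k$.

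First I would expand the two block rows, partitioning $x^{k+1},d^{k+1}$ (and $x^k,d^k$) according to $A^k$ and $I^k$. The second block row, $\Psi^t\Psi(x^{k+1}-x^k)+m(d^{k+1}-d^k) = -(\Psi^t\Psi x^k+md^k-\Psi^t y)$, simplifies after cancelling the terms evaluated at $w^k$ to the single relation $d^{k+1} = \Psi^t(y-\Psi x^{k+1})/m$. For the first block row I use the definition \eqref{eq4} of $A^k$: since $A^k$ indexes the $s$ largest-in-modulus coordinates of $x^k+\eta d^k$, we have $\mathcal{H}_{s}(x^k+\eta d^k)_{A^k} = (x^k+\eta d^k)_{A^k}$ and $\mathcal{H}_{s}(x^k+\eta d^k)_{I^k} = \mathbf{0}$ (I take $\eta=1$, matching $F$ as stated and the hard-thresholding-pursuit regime of this subsection; a general $\eta$ only rescales the $A^kA^k$ block of $H^k_2$). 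Hence $-F_1(w^k)_{A^k} = d^k_{A^k}$ and $-F_1(w^k)_{I^k} = -x^k_{I^k}$, and by the block shapes of $H^k_1,H^k_2$ the $A^k$-rows of the first block equation read $-(d^{k+1}-d^k)_{A^k} = d^k_{A^k}$ while the $I^k$-rows read $(x^{k+1}-x^k)_{I^k} = -x^k_{I^k}$; these give $d^{k+1}_{A^k}=\mathbf{0}$ and $x^{k+1}_{I^k}=\mathbf{0}$, i.e. the first two lines of \eqref{eq5}.

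Next I would feed these back in. With $x^{k+1}_{I^k}=\mathbf{0}$ we have $\Psi x^{k+1}=\Psi_{A^k}x^{k+1}_{A^k}$, so $d^{k+1}=\Psi^t(y-\Psi_{A^k}x^{k+1}_{A^k})/m$; its $I^k$-block is the fourth line of \eqref{eq5}, while its $A^k$-block together with $d^{k+1}_{A^k}=\mathbf{0}$ yields $\Psi_{A^k}^t\Psi_{A^k}x^{k+1}_{A^k}=\Psi_{A^k}^t y$, which is the third line of \eqref{eq5} (invertibility of $\Psi_{A^k}^t\Psi_{A^k}$ is already implicit in \eqref{eq5} being well-defined and holds with high probability under the conditions of Theorem \ref{errsub}). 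Conversely, the four relations of \eqref{eq5} obviously satisfy the block system, so they constitute its solution.

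Finally I would check that $H^k$ is invertible, so that $w^{k+1}=w^k-(H^k)^{-1}F(w^k)$ is well-posed and the solution found above is the only one. Since the $(2,2)$ block $m\mathbf{I}$ is invertible, this reduces to nonsingularity of the Schur complement $H^k_1-\tfrac1m H^k_2\Psi^t\Psi$, which in $A^k/I^k$ block form is upper triangular with diagonal blocks $\tfrac1m\Psi_{A^k}^t\Psi_{A^k}$ and $\mathbf{I}_{I^kI^k}$, hence invertible whenever $\Psi_{A^k}$ has full column rank. The only place any real care is needed is exactly this conditioning of the submatrix $\Psi_{A^k}$, and that is precisely the event already exploited in Lemma \ref{kkt} and Theorem \ref{errsub}; everything else is bookkeeping with the block structure of $H^k$.
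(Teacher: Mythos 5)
Your proposal is correct and follows essentially the same route as the paper: partition the Newton system $H^k(w^{k+1}-w^k)=-F(w^k)$ into $A^k/I^k$ blocks, read off $d^{k+1}_{A^k}=\mathbf{0}$ and $x^{k+1}_{I^k}=\mathbf{0}$ from the first block row, and back-substitute into the second to recover the least-squares and dual updates of \eqref{eq5}. Your additional Schur-complement check that $H^k$ is invertible (and your remark on the $\eta$ versus $1$ mismatch between $F_1$ and the active-set rule \eqref{eq4}) goes slightly beyond what the paper writes down, and is a welcome tightening rather than a deviation.
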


\begin{proof}
The proof is provided  in in Appendix \ref{app:eqn}.
\end{proof}

\subsection{Estimation error and support recovery of GNA}
As expected, GNA may exhibit fast local convergence to the KKT point of \eqref{kkteq} since it is  a Newton type method as shown above.
However, in this subsection we consider  in the perspective of  studying  the  estimation error  of GNA, i.e, bounding the
error between $x^k$ and the target signal $x^*$ directly.
Define
\begin{equation}\label{CC}
 C_{*}= \inf_{\|v\|_0 \leq 2s }\frac{ v^{t}\Psi^t\Psi v}{n\|v\|_1\|v\|_{\infty}}, \ \ C^* = \sup_{\|v\|_0 \leq 2s }\frac{ v^{t}\Psi^t\Psi v}{n\|v\|_1\|v\|_{\infty}}.
\end{equation}
\begin{theorem}\label{th:est}
Assume $ n> m \geq  \max\{\frac{4C_1}{C_2^2}\log n, \frac{16(C_2+1)^2}{C_1}s\log\frac{en}{s}\}$,  $s \leq \exp^{(1-\frac{C_1}{2})} n$.
Let $\eta \in (0, \frac{4}{9\gamma_{\mathrm{max}}(\Sigma)\sqrt{s}})$ and  $x^0=\mathbf{0}$ in  GNA.
Then, with probability at least $1-2/n^3  -6/n^2$,
\begin{align}\label{betaerror}
\|x^k/c-x^*\|_{\infty}\leq
\frac{9(1+\sigma +C_3)}{\sqrt{C_1}C_*|p-1/2|}\sqrt{\frac{\log n}{m}},
\end{align}
as long as $k\geq \log_{1/\zeta}(\frac{sm}{\log n} \frac{C^*C_1|c|^2}{16C_*(1+|c|C_3)^2\gamma_{\mathrm{min}}(\Sigma)})$,
where  $\zeta =1-\frac{2\eta C_*(1-\eta \sqrt{s}C^*)}{ \sqrt{s}(1+s)}\in(0,1).$
\end{theorem}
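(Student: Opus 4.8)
The plan is to run a hard-thresholding-pursuit-style contraction analysis for the GNA map \eqref{eq5}, carried out relative to the \emph{population target} $\bar x:=c\,x^{*}$ rather than to a small residual: the hallmark of the 1-bit model is that the correlation vector
\[
\xi:=\Psi^{t}(y-\Psi\bar x)/m
\]
is small in $\ell_{\infty}$ even though $y-\Psi\bar x$ is not, because the Gaussian computation underlying Theorem \ref{errsub} (the identity $\mathbb{E}[y_{i}\psi_{i}]=c\,\Sigma x^{*}$, together with $\mathbb{E}[\Psi^{t}\Psi/m]=\Sigma$) gives $\mathbb{E}[\xi]=\mathbf{0}$. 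First I would fix a good event: reusing the concentration estimates behind Theorem \ref{errsub} and a couple of additional sparse-spectrum bounds, on an event of probability at least $1-2/n^{3}-6/n^{2}$ one has simultaneously (i) $\|\xi\|_{\infty}\le \tfrac{9(1+\sigma+C_{3})}{\sqrt{C_{1}}\,|p-1/2|}\,|c|\sqrt{\log n/m}$, up to the stated constant, and (ii) the two-sided bound \eqref{CC} realised by deterministic constants $0<C_{*}\le C^{*}<\infty$, equivalently $\gamma_{\min}(\Sigma)$--$\gamma_{\max}(\Sigma)$ control of $\Psi_{A}^{t}\Psi_{A}/m$ for every $|A|\le 2s$. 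No union bound over $k$ is needed since $\Psi,y$ are fixed.

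The heart of the argument is the deterministic one-step estimate. Put $e^{k}:=x^{k}-\bar x$, which is automatically $2s$-sparse, and note from $d^{k}=\Psi^{t}(y-\Psi x^{k})/m$ that $d^{k}=\xi+\tfrac1m\Psi^{t}\Psi(\bar x-x^{k})$, so $x^{k}+\eta d^{k}-\bar x=(\mathbf{I}-\tfrac{\eta}{m}\Psi^{t}\Psi)e^{k}+\eta\xi$. Two ingredients then combine. (a) \emph{Selection.} Since $A^{k}$ in \eqref{eq4} picks the $s$ largest coordinates of $x^{k}+\eta d^{k}$ while $\bar x$ is $s$-sparse, the missed mass $\|\bar x_{\overline{A^{k}}}\|$ is controlled by the restriction of $x^{k}+\eta d^{k}-\bar x$ to a set of size at most $2s$; substituting the displayed identity and applying \eqref{CC} to the $2s$-sparse vector $e^{k}$ turns this into $(1-\delta_{\eta})\|e^{k}\|+\mathrm{const}\cdot\|\xi\|$, where the gain $\delta_{\eta}=\tfrac{2\eta C_{*}(1-\eta\sqrt s\,C^{*})}{\sqrt s(1+s)}$ is positive precisely because $\eta\in(0,\tfrac{4}{9\gamma_{\max}(\Sigma)\sqrt s})$ makes $1-\eta\sqrt s\,C^{*}>0$; this is where the present $\eta$-range, rather than the weaker one of Lemma \ref{kkt}, is used. (b) \emph{Least squares.} From \eqref{eq5}, $x^{k+1}_{A^{k}}-\bar x_{A^{k}}=(\Psi_{A^{k}}^{t}\Psi_{A^{k}})^{-1}\Psi_{A^{k}}^{t}\big(\Psi_{\overline{A^{k}}}\bar x_{\overline{A^{k}}}+(y-\Psi\bar x)\big)$, which by \eqref{CC} is $\mathrm{const}\cdot(\|\bar x_{\overline{A^{k}}}\|+\|\xi\|)$; since $x^{k+1}_{\overline{A^{k}}}=\mathbf{0}$, combining (a) and (b) yields the affine recursion
\begin{equation*}
\mathcal{E}_{k+1}\ \le\ \zeta\,\mathcal{E}_{k}+\kappa,\qquad \zeta=1-\delta_{\eta}\in(0,1),
\end{equation*}
for a suitable error functional $\mathcal{E}_{k}$ comparable to $\|e^{k}\|$ in the mixed $\ell_{1}$/$\ell_{\infty}$ geometry that \eqref{CC} is built for, with $\kappa$ a constant multiple of $\|\xi\|_{\infty}$-type quantities, so that the fixed point $\kappa/(1-\zeta)$ is comparable to the estimation-error level of Theorem \ref{errsub}.

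Unrolling gives $\mathcal{E}_{k}\le \zeta^{k}\mathcal{E}_{0}+\kappa/(1-\zeta)$. Since $x^{0}=\mathbf{0}$, $\mathcal{E}_{0}=\mathcal{E}(\bar x)$, and $\|x^{*}\|_{\Sigma}=1$ forces $\|\bar x\|_{2}^{2}=|c|^{2}\|x^{*}\|_{2}^{2}\le |c|^{2}/\gamma_{\min}(\Sigma)$, which bounds $\mathcal{E}_{0}$; the stated threshold $k\ge\log_{1/\zeta}\!\big(\tfrac{sm}{\log n}\cdot\tfrac{C^{*}C_{1}|c|^{2}}{16C_{*}(1+|c|C_{3})^{2}\gamma_{\min}(\Sigma)}\big)$ is exactly what makes $\zeta^{k}\mathcal{E}_{0}$ drop below $\kappa/(1-\zeta)$, so $\mathcal{E}_{k}\le 2\kappa/(1-\zeta)$. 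Once $\mathcal{E}_{k}$ is this small one has $\mathrm{supp}(x^{*})\subseteq A^{k}$, and then $x^{k}_{A^{k}}/c-x^{*}_{A^{k}}=(\Psi_{A^{k}}^{t}\Psi_{A^{k}}/m)^{-1}\xi_{A^{k}}/c$; bounding this inverse in $\ell_{\infty}$-operator norm by $1/C_{*}$ via \eqref{CC} and inserting the bound on $\|\xi\|_{\infty}$ from (i) yields \eqref{betaerror}, with the same prefactor $\tfrac{9(1+\sigma+C_{3})}{\sqrt{C_{1}}\,|p-1/2|}$ as in Theorem \ref{errsub}. The same closed form shows $\mathrm{supp}(x^{k})=\mathrm{supp}(x^{*})$ as soon as $|x^{*}|_{\min}$ exceeds the right-hand side of \eqref{betaerror}, giving the detectability claim.

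The main obstacle is step (a)--(b): extracting the one-step recursion with \emph{precisely} the constant $\zeta=1-\tfrac{2\eta C_{*}(1-\eta\sqrt s\,C^{*})}{\sqrt s(1+s)}$. This forces the hard-thresholding bookkeeping to be carried out in the $\ell_{1}$/$\ell_{\infty}$ geometry tailored to $C_{*},C^{*}$ rather than the usual $\ell_{2}$ RIP, with careful tracking of how the defining inequality $|x^{k}_{i}+\eta d^{k}_{i}|\ge|x^{k}+\eta d^{k}|_{s,\infty}$ for $A^{k}$ produces the contraction factor; the sparsity condition $\|\bar x-x^{k}\|_{0}\le 2s$ needed for \eqref{CC} holds for free. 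The concentration input and the geometric unrolling are then routine given Theorem \ref{errsub} and its proof.
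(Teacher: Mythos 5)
Your overall architecture differs from the paper's in a way that creates a real gap at the last step. The paper does \emph{not} run an iterate-error recursion $\mathcal{E}_{k+1}\le\zeta\mathcal{E}_k+\kappa$; it runs a pure \emph{objective-gap} contraction. Lemma \ref{L9.2} shows $F(x^{k+1})-F(\tilde x^*)\le \zeta\,(F(x^k)-F(\tilde x^*))$ with exactly the stated $\zeta$, where the two pieces of $\zeta$ have identifiable sources: the factor $(-\frac{\eta}{\sqrt s}+\eta^2C^*)$ comes from a sufficient-decrease estimate for the move from $x^k$ to $u^k|_{A^k}$ in the $\ell_1/\ell_\infty$ geometry of \eqref{CC}, and the factor $2C_*\varrho_k$ with $\varrho_k=\frac{|\mathcal{A}^k|}{|\mathcal{A}^k|+|A^*\setminus A^{k-1}|}\ge\frac{1}{1+s}$ comes from Lemma \ref{L9.1}, a gradient-dominance bound over the \emph{newly activated} coordinates $\mathcal{A}^k=A^k\setminus A^{k-1}$. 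Your proposal asserts that the same $\zeta$, including the $\frac{1}{1+s}$, falls out of a selection/least-squares analysis of $\|x^k-\bar x\|$; that constant is really an artifact of the ratio $\varrho_k$ in the function-value argument, and you give no mechanism that would reproduce it in an iterate recursion. This part of your plan is the one you flag as "the main obstacle," and it is not merely bookkeeping — it is the step where your route and the paper's route genuinely diverge, and where yours is unsubstantiated.

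The second, sharper gap is your endgame. You deduce \eqref{betaerror} by first claiming $\mathrm{supp}(x^*)\subseteq A^k$ once the error is small and then inverting $\Psi_{A^k}^t\Psi_{A^k}/m$ on $\xi_{A^k}$. But Theorem \ref{th:est} assumes no lower bound on $|x^*|_{\min}$ (that assumption appears only in Proposition \ref{recsupp}), so coordinates of $x^*$ below the noise level may never enter $A^k$, the identity $x^k_{A^k}-\bar x_{A^k}=(\Psi_{A^k}^t\Psi_{A^k})^{-1}\Psi_{A^k}^t R$ acquires an extra term $\Psi_{\overline{A^k}}\bar x_{\overline{A^k}}$, and your derivation of the $\ell_\infty$ bound breaks. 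The paper sidesteps this entirely: Lemma \ref{L9.3} converts the objective gap directly into an $\ell_\infty$ bound by writing $F(x^k)-F(\tilde x^*)\ge -\|\Psi^tR/m\|_\infty\|x^k-\tilde x^*\|_1+\frac{C_*}{2}\|x^k-\tilde x^*\|_1\|x^k-\tilde x^*\|_\infty$, observing that a certain univariate quadratic in $\|x^k-\tilde x^*\|_\infty$ is nonpositive, and solving it to get $\|x^k-\tilde x^*\|_\infty\le\sqrt{2\max\{F(x^k)-F(\tilde x^*),0\}/C_*}+2\|\Psi^tR/m\|_\infty/C_*$ with no support condition whatsoever. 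To repair your proof you would either need to add a detectability hypothesis (weakening the theorem) or replace your final step with a quadratic-inequality conversion of this kind. Your good-event setup, the role of $\|\Psi^t(y-\Psi\bar x)/m\|_\infty$, the reconciliation of the $\eta$-range with $C^*\le 9\gamma_{\max}(\Sigma)/4$, and the use of $\|x^*\|_1\le\sqrt{s/\gamma_{\min}(\Sigma)}$ to bound the initial error all match the paper and are fine.
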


\begin{proof}
The proof is provided  in in Appendix \ref{app:est}.
\end{proof}

\begin{remark}
When $\eta = 1$, our proposed GNA has been studied in the setting of standard compressed sensing \cite{Foucart:2011} and high-dimensional statistics \cite{HuangJiaoLu:2018} by assuming $\Psi$ satisfying   restricted isometry property  condition and sparse Riesz condition, respectively.  In comparison,  the result derived in Theorem  \ref{th:est} does not require such stronger assumptions.  The only requirement to guarantee Theorem \ref{th:est}  is choosing the step size $\eta$ such that $\zeta \in (0,1)$.
Indeed, observing $C^*$ and $C_*$ can be bounded from above by $C_{2s,\mathrm{max}}$ and from below by $C_{2s,\mathrm{min}}/\sqrt{s}$,  the requirement $\eta \in (0,1)$   always holds as long as $C_{2s,\mathrm{min}}>0$ and $C_{2s,\mathrm{max}}< + \infty.$

The number of iterations of GNA  is $\mathcal{O} (\log s)$.
Combing the computational complexity per iteration in Remark \ref{ccp}, we deduce
that  up to a constant $c$, the output  of GNA  with achieve a sharp estimation error of the order $\mathcal{O} (\sqrt{\frac{\log n}{m}})$ with total cost
$\mathcal{O}(\max\{n,s^2\}m \log s).$
\end{remark}

As a consequence of Theorem \ref{th:est}, we can deduce that  the stopping criterion  of GNA will hold in $\mathcal{O}(\log (s))$ steps as long as
the magnitude of the minimum value of the target signal is detectable, i.e.,  $|x^*|_{\mathrm{min}} \geq \mathcal{O}(\sqrt{\frac{\log n}{m}})$. Meanwhile, when GNA stops   the recovered support coincides with $\mathrm{supp}(x^*)$.
\begin{proposition}\label{recsupp}
Suppose
$|x^*|_{\mathrm{min}} > \frac{9(1+\sigma +C_3)}{\sqrt{C_1}C_*|p-1/2|}\sqrt{\frac{\log n}{m}}$ and $|\mathrm{supp}(x^*)| =s$. Under the assumption of Theorem \ref{th:est}, we have with probability at least
$1-2/n^3  -6/n^2$,
$A^k=A^{k+1} = \mathrm{supp}(x^*)$ provided that $k\geq \log_{1/\zeta}(\frac{sm}{\log n} \frac{C^*C_1|c|^2}{16C_*(1+|c|C_3)^2\gamma_{\mathrm{min}}(\Sigma)})$.
\end{proposition}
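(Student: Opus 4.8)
The plan is to obtain Proposition~\ref{recsupp} as an almost immediate corollary of Theorem~\ref{th:est}. Abbreviate $\tau := \frac{9(1+\sigma +C_3)}{\sqrt{C_1}C_*|p-1/2|}\sqrt{\frac{\log n}{m}}$ for the $\ell_\infty$ error bound appearing in \eqref{betaerror}, and let $K := \log_{1/\zeta}\!\big(\frac{sm}{\log n}\frac{C^*C_1|c|^2}{16C_*(1+|c|C_3)^2\gamma_{\mathrm{min}}(\Sigma)}\big)$ be the iteration threshold from Theorem~\ref{th:est}, which is precisely the one quoted in the proposition, so no re-derivation of the iteration count is needed. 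All the randomness has already been absorbed into Theorem~\ref{th:est}: I would work on the event of probability at least $1-2/n^3-6/n^2$ on which \eqref{betaerror} holds simultaneously for every iterate $x^{\ell}$ with $\ell\ge K$, after which the argument is entirely deterministic and the stated probability carries over verbatim.

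First I would record the structural facts about the iterates produced by \eqref{eq5}: $x^{\ell}$ is supported on the previous active set, $\mathrm{supp}(x^{\ell})\subseteq A^{\ell-1}$, and (with the tie-breaking convention in $\mathcal{H}_s$, or a.s. under the Gaussian design) the active set $A^{\ell}$ defined by the threshold $|x^{\ell}+\eta d^{\ell}|_{s,\infty}$ in \eqref{eq4} has cardinality exactly $s$. Now fix any $\ell\ge K$. By Theorem~\ref{th:est}, $\|x^{\ell}/c-x^*\|_\infty\le\tau$. For each $i\in\mathrm{supp}(x^*)$ the detectability hypothesis gives $|x_i^*|\ge|x^*|_{\mathrm{min}}>\tau\ge|x^{\ell}_i/c-x^*_i|$, which is incompatible with $x^{\ell}_i=0$; hence $i\in\mathrm{supp}(x^{\ell})\subseteq A^{\ell-1}$. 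This yields $\mathrm{supp}(x^*)\subseteq A^{\ell-1}$, and since $|\mathrm{supp}(x^*)|=s=|A^{\ell-1}|$ the inclusion is in fact an equality, $A^{\ell-1}=\mathrm{supp}(x^*)$.

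Applying the last identity with $\ell=k+1$ and $\ell=k+2$ --- both admissible since $k\ge K$ --- gives $A^{k}=A^{k+1}=\mathrm{supp}(x^*)$, which is simultaneously the claimed support recovery and the stopping criterion $A^{k}=A^{k+1}$ of Algorithm~\ref{alg:genew}. I do not anticipate any genuine obstacle here: the whole quantitative content has been front-loaded into the $\ell_\infty$ bound of Theorem~\ref{th:est}, and the hypothesis $|x^*|_{\mathrm{min}}>\tau$ is tailored so as to exactly cancel the error term. The only two points that merit a sentence of care are (i) the equality $|A^{\ell}|=s$, which is what upgrades the inclusion $\mathrm{supp}(x^*)\subseteq A^{\ell-1}$ to an equality, and (ii) verifying that the constants $\tau$ and $K$ in Theorem~\ref{th:est} are literally the ones written in the proposition, so that the cancellation is exact and no extra factor of $2$ on $|x^*|_{\mathrm{min}}$ is incurred.
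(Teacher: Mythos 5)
Your proposal is correct and follows essentially the same route as the paper: invoke the $\ell_\infty$ bound \eqref{betaerror} of Theorem \ref{th:est}, use the detectability hypothesis $|x^*|_{\mathrm{min}}>\tau$ to force $\mathrm{supp}(x^*)$ into the active set, and upgrade the inclusion to equality via the cardinality count $|\mathrm{supp}(x^*)|=s=|A^k|$. Your version is in fact slightly more careful than the paper's (which jumps directly from the bound on $x^k$ to $\mathrm{supp}(x^*)\subseteq A^k$), since you route the argument through $\mathrm{supp}(x^{\ell})\subseteq A^{\ell-1}$ from \eqref{eq5} and then shift indices.
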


\begin{proof}
The proof is provided  in in Appendix \ref{app:sup}.
\end{proof}

\begin{remark}
The support recovery property of hard thresholding pursuit
has been studied in the setting of sparse regression \cite{YuanZhang:2017, ShenLi:2017}
 under the assumption  the minimum magnitude of the target is larger than $\mathcal{O}(\sqrt{\frac{s\log(n)}{m}})$,  which is stronger than the requirement in Proposition \ref{recsupp}.
 Meanwhile, the iteration complexity for support recovery in  \cite{ShenLi:2017} is  $\mathcal{O}(s)$, which is suboptimal than the  $\mathcal{O} (\log s)$ complexity in  Proposition \ref{recsupp}.
\end{remark}

\section{Numerical simulations}\label{num}

In this section we show the performance of our proposed cardinality constraint least square   \eqref{subreg} and the GNA Algorithm \ref{alg:genew}.
All the computations were performed on a four-core laptop with 2.90 GHz and 8 GB RAM using \texttt{MATLAB} 2018a.
The \texttt{MATLAB} package \texttt{1-bitGNA} for reproducing all the numerical results is available  at
\url{http://faculty.zuel.edu.cn/tjyjxxy/jyl/list.htm}.

\subsection{Experiment setup}
First we describe the  data generation setting  and  the hyperparameter  choice. In all numerical simulation  the true   signal $x^*$ with
$\|x^*\|_0 = s$ and $\|x^*\|_2 = 1$ is given, and the binary measurements   $y$ are  generated by
 $y = \eta \odot\textrm{sign} (\Psi x^* + \epsilon)$,
where the rows of $\Psi$ are i.i.d.  samples drawn  from  $\mathcal{N}(\textbf{0},\Sigma)$ with $\Sigma_{jk} = \nu^{|j-k|}, 1 \le j, k \le n$  ($0^{0} = 1$).  The $\epsilon$ is sampled from  $\mathcal{N}(\textbf{0},\sigma^2\textbf{I}_m)$,
$\eta\in \mathbb{R}^{m}$ has independent coordinate $\eta_i$
with $\mathbb{P}[\eta_i = 1] = 1- \mathbb{P}[\eta_i = -1] = p$.  We use $(m,n,s,\nu,\sigma,p)$ to denote the data generated as   above description.
We set the initial value  $x^0 = \mathbf{0}$, the step size $\eta = 0.9$ and $\textsl{MaxIter} = 5$  in GNA unless indicated otherwise.   All the  simulation results   are based on 100 independent replications except the last  example.

\subsection{Number of iteration of GNA}
In this subsection, we set $\textsl{MaxIter} = 10$  in GNA.
Figure \ref{fig:iter} shows the average number of iterations of GNA   on data sets on data set  $(m = 500,n = 1000,s = 1:2:20,\nu = 0.1,\sigma= 0.05, p = 1\%).$
We see that the average number of iterations are less than  $4$  as the sparsity level $s$ varying  from $1$ to $20$, which verifies   the $\mathcal{O}(\log s)$ iteration complexity derived in Theorem \ref{th:est} and Proposition \ref{recsupp}.
\begin{figure}[ht!]
  \centering
   \includegraphics[trim = 0cm 0cm 0cm 0cm, clip=true,width=.35\textwidth]{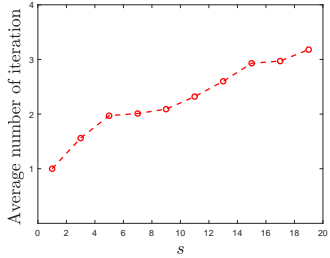}
   \caption{Average umber of iterations  v.s. $s$  on data set  $(m = 500,n = 1000,s = 1:2:20,\nu = 0.1,\sigma= 0.05, p = 1\%)$.} \label{fig:iter}
\end{figure}

\begin{figure}[ht!]
  \centering
  \begin{tabular}{cc}
   \includegraphics[trim = 0cm 0cm 0cm 0cm, clip=true,width=.35\textwidth]{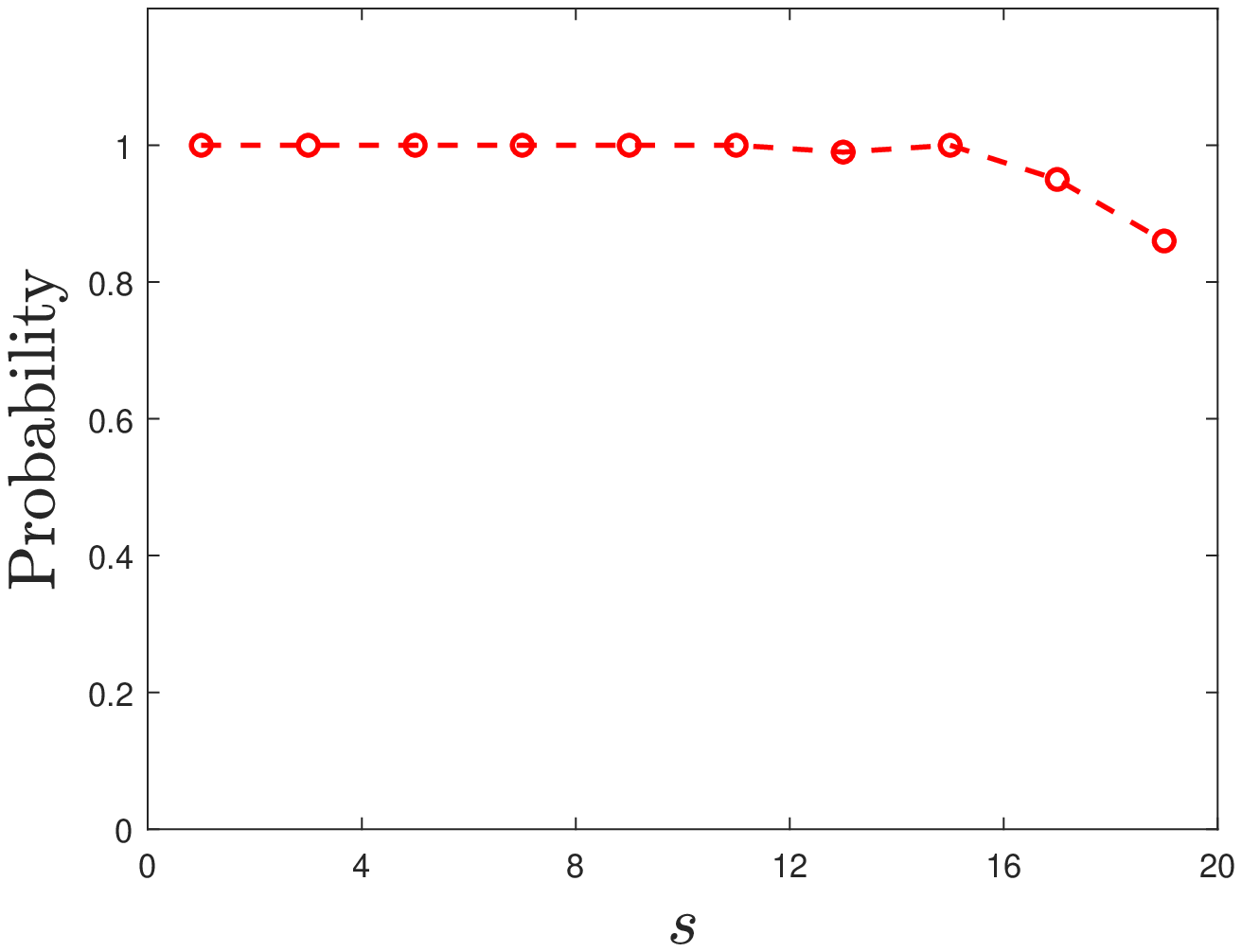} &
   \includegraphics[trim = 0cm 0cm 0cm 0cm, clip=true,width=.35\textwidth]{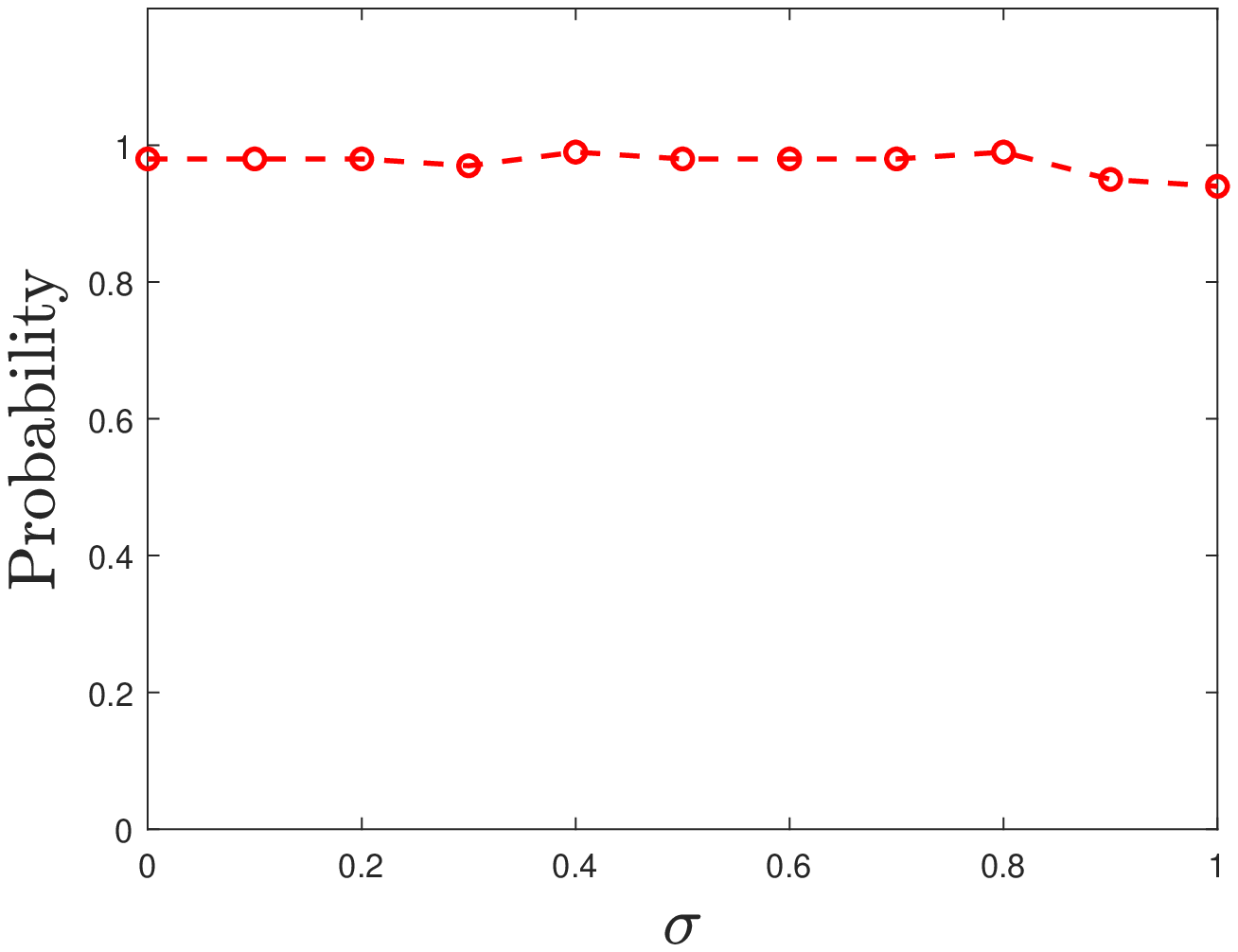} \\
   (a) $s$ & (b) $\sigma$ \\
   \includegraphics[trim = 0cm 0cm 0cm 0cm, clip=true,width=.35\textwidth]{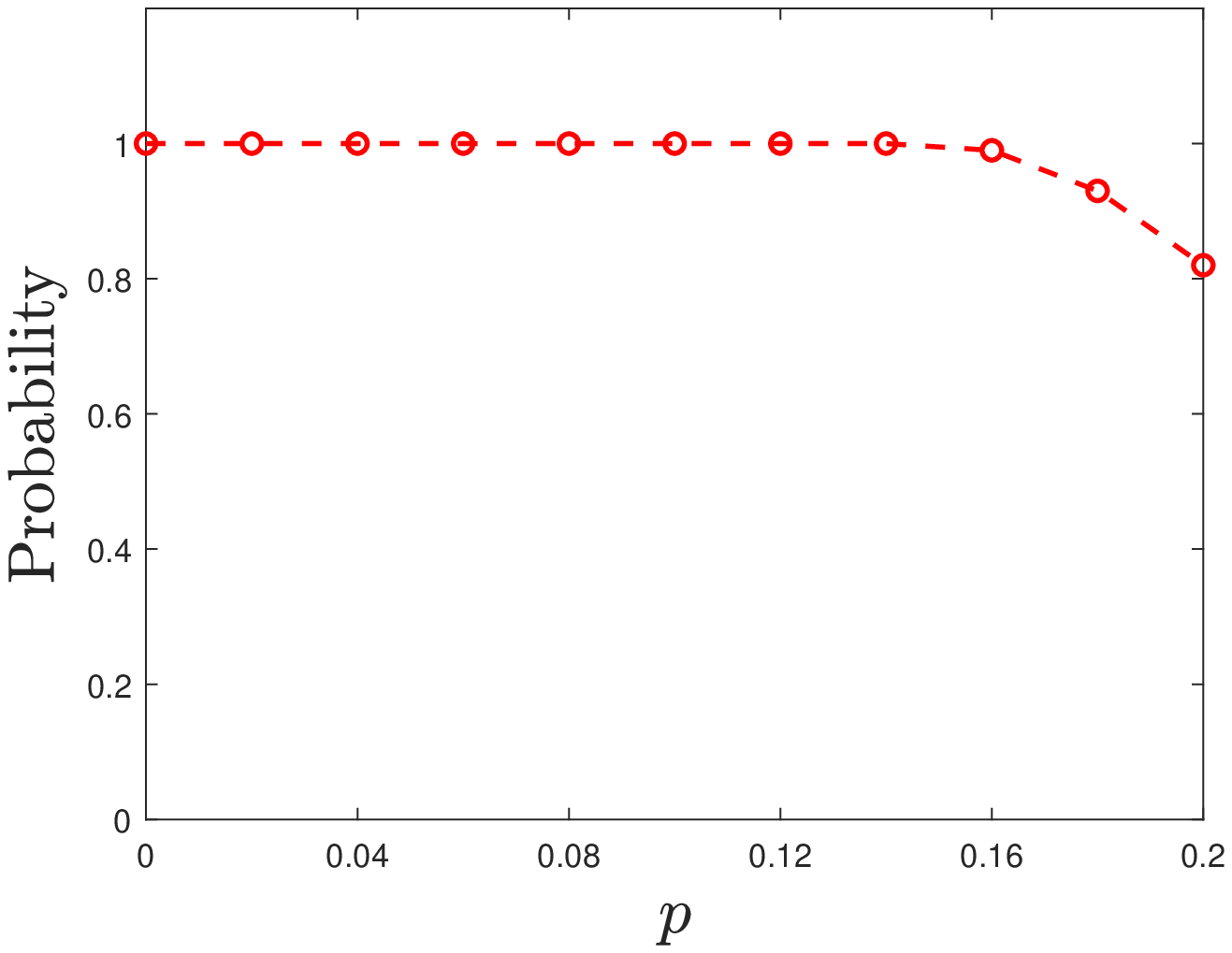} &
   \includegraphics[trim = 0cm 0cm 0cm 0cm, clip=true,width=.35\textwidth]{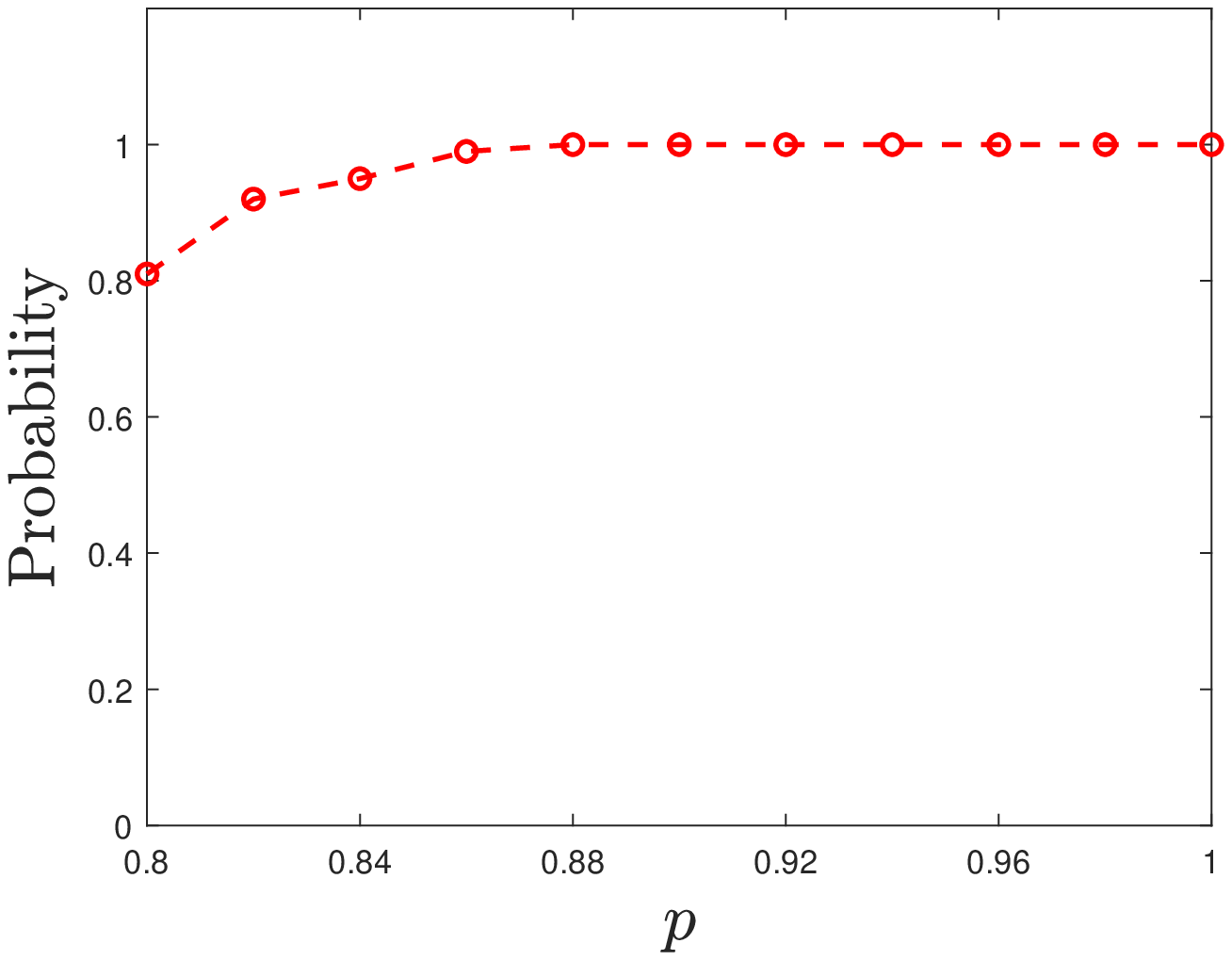}\\
     \\ (c) $p$ & (d) $p$
  \end{tabular}
   \caption{The exact support recovery probability v.s. $s$, $\sigma$ and $p$ on data set  $(m = 500,n = 1000,s = 1:2:20,\nu = 0.1,\sigma= 0.05, p = 1\%)$ (panel (a)), $(m = 500,n = 1000,s = 10,\nu = 0.3,\sigma= 0:0.1:1, p = 5\%)$ (panel (b)) and $(m = 500,n = 1000,s = 5,\nu = 0.1,\sigma= 0.05,p = 0:2\%:20\%)$ (panel (c)), $(m = 500,n = 1000,s = 5,\nu = 0.1,\sigma= 0.05,p = 80\%:2\%:100\%)$  (panel (d)). }\label{fig:probsupp}
\end{figure}

\subsection{Support recovery}
In this subsection we verify the support recovery property of GNA by studying   how the exact support recovery probability depends on   the sparsity level $s$, the noise level $\sigma$ and the of sign flips probability $p$.
We test on date sets $(m = 500,n = 1000,s = 1:2:20,\nu = 0.1,\sigma= 0.05, p = 1\%)$, $(m = 500,n = 1000,s = 10,\nu = 0.3,\sigma= 0:0.1:1, p = 5\%)$,  $(m = 500,n = 1000,s = 5,\nu = 0.1,\sigma= 0.05, p = 0:2\%:20\%)$, $(m = 500,n = 1000,s = 5,\nu = 0.1,\sigma= 0.05, p = 80\%:2\%:100\%)$ and show the corresponding
results  in panel (a)-(d) of Figure \ref{fig:probsupp}  respectively.
 As indicates  by Figure \ref{fig:probsupp},  GNA recovers the  underlying true support
with high probability  as long the sparsity level $s \leq \mathcal{O}(m/{\log n})$   even if the binary  measurements is noisy  and sign flipped. This confirms the theoretical investigations in Proposition  \ref{recsupp}.

\subsection{Comparison with state-of-the-art}
Now we compare our proposed model \eqref{subreg} and  GNA Algorithm \ref{alg:genew} with several state-of-the-art methods such as BIHT  \cite{JacquesDegraux:2013} (\url{http://perso.uclouvain.be/laurent.jacques/index.php/Main/BIHTDemo}), AOP \cite{YanYangOsher:2012} and PBAOP \cite{HuangShiYan:2015} (both AOP and PBAOP   require to know  the sign flips probability $p$,  available  at  \url{http://www.esat.kuleuven.be/stadius/ADB/huang/downloads/1bitCSLab.zip}) and  linear projection (LP) \cite{Vershynin:2015,PlanVershynin:2017}, PDASC \cite{HuangJiao:2018} (\url{http://faculty.zuel.edu.cn/tjyjxxy/jyl/list.htm/}).
We use data set $(m = 500,n = 2500,s = 5,\nu = 0.2,\sigma= 0.2, p = 5\%)$, $(m = 500,n = 2500,s = 5,\nu = 0.3,\sigma= 0.3, p = 10\%)$, $(m = 500,n = 2500,s = 5,\nu = 0.5,\sigma= 0.5, p = 15\%)$,          and $ (m = 1000,n = 5000,s = 10,\nu = 0.2,\sigma= 0.2, p= 5\%)$, $ (m = 1000,n = 5000,s = 10,\nu = 0.3,\sigma= 0.3, p = 10\%)$, $ (m = 1000,n = 5000,s = 10,\nu = 0.5,\sigma= 0.5, p = 15\%)$.
 The average CPU time in seconds (Time (s)), the  average of the  $\ell_2$  error $\|\frac{\hat{x}}{\|\hat{x}\|} - \frac{x^*}{\|x^*\|}\|$ ($\ell_2$-Err) where $\hat{x}$ denotes the output of all the above mentioned methods, and the  probability of  exactly  recovering  true support (PrE ($\%$)) are reported in Table \ref{tab:compother}.
 As shown in  Table \ref{tab:compother}, our GNA
 outperforms all the other state-of-the-art methods in terms of accuracy  ($\ell_2$-Err   and  PrE $(\%)$)  in all the settings.  Meanwhile, GNA  and LP
 archive the best performance on speed.
\begin{table}[ht!]
  \caption{Comparison GNA with state-of-the-art methods on  CPU time in seconds (Time (s)), average $\ell_2$  error $\|\frac{\hat{x}}{\|\hat{x}\|} - \frac{x^*}{\|x^*\|}\|$ ($\ell_2$-Err), probability on exactly  recovering of  true support (PrE ($\%$)).}
 \label{tab:compother}
  \vspace{-0.3cm}
  \begin{center}
  \scalebox{.85}{
  \begin{tabular}{cccccccccccc}
   \hline \hline
  \multicolumn{11}{c}{\quad \quad\quad$(m = 500,n = 2500,s = 5)$}\\
 \hline
  \multicolumn{4}{c}{\quad \quad \quad \quad (a) $(\nu = 0.2,\sigma= 0.2, p = 5\%)$}
  &&\multicolumn{3}{c}{(b) $(\nu = 0.3,\sigma= 0.3, p = 10\%)$}&&\multicolumn{3}{c}{(c) $(\nu = 0.5,\sigma= 0.5, p = 15\%)$}\\
  \cline{2-4}  \cline{6-8} \cline{10-12}
  Method           &Time (s)         & $\ell_2$-Err   & PrE $(\%)$&   & Time (s)  & $\ell_2$-Err    & PrE $(\%)$   &  & Time       &$\ell_2$-Err    &PrE               \\
   BIHT          &       3.39e-1     &    4.98e-1         & 33  &   &     3.72e-1 &        7.97e-1 &     1 &      &          3.39e-1 &         1.10e-0   &      0 \\
   AOP           &       9.18e-1     &    1.59e-1         & 98  &  &      9.86e-1 &        2.05e-1 &     92  &     &         9.28e-1 &         5.30e-1   &     30\\
   LP            &       2.21e-2     &    4.22e-1         & 96 &   &      2.40e-2 &        4.36e-1 &     81&     &           2.17e-2 &         5.43e-1   &     6 \\
   PBAOP         &       3.40e-1     &    1.56e-1         & 98 &   &      3.65e-1 &        2.06e-1 &     93 &     &          3.44e-1 &         5.56e-1   &     25  \\
   PDASC         &       9.49e-2     &    9.56e-2         & 98 &    &     9.10e-2 &        1.73e-1 &     86 &     &          7.05e-2 &         5.25e-1   &     24 \\
    GNA &  \textbf{1.53e-2}  &  \textbf{8.82e-2}   & \textbf{100} &   &     \textbf{1.73e-2} &  \textbf{1.15e-1} &    \textbf{99}  & & \textbf{1.48e-2 }& \textbf{2.15e-1} & \textbf{82} \\
  \hline
  \hline
  \multicolumn{11}{c}{\quad \quad\quad$(m = 1000,n = 5000,s = 10)$}\\
 \hline
  \multicolumn{4}{c}{\quad \quad \quad \quad (a) $(\nu = 0.2,\sigma= 0.2, p = 5\%)$}
  &&\multicolumn{3}{c}{(b) $(\nu = 0.3,\sigma= 0.3, p = 10\%)$}&&\multicolumn{3}{c}{(c) $(\nu = 0.5,\sigma= 0.5, p = 15\%)$}\\
  \cline{2-4}  \cline{6-8} \cline{10-12}
  Method      &Time (s)         & $\ell_2$-Err   & PrE $(\%)$&   & Time (s)  & $\ell_2$-Err    & PrE $(\%)$   &  & Time      &$\ell_2$-Err    &PrE       \\
  BIHT   & 1.35e-0           &  5.33e-1   &  10 &   &             1.32e-0 &             8.06e-1 &     0 &     &     1.29e-0 &          1.03e-0&    0\\
   AOP   & 3.77e-0           &  1.68e-1    &  98  &   &           3.69e-0&              2.13e-1 &     88 &   &      3.52e-0 &          5.30e-1&    7\\
   LP    & \textbf{5.64e-2}  &  4.41e-1    &  82 &   &            \textbf{5.83e-2} &    4.46e-1 &     62 &    & \textbf{ 5.23e-2} &    5.88e-1&    1 \\
   PBAOP & 1.37e-0           &  1.59e-1    &  \textbf{100} &   &  1.32e-0 &             2.12e-1 &     89 &   &      1.29e-0 &          5.81e-1&    5\\
   PDASC &3.64e-1            &  9.77e-2    &  98 &   &            3.05e-1 &             1.84e-1 &     79 &    &     2.52e-1 &          7.72e-1&    4\\
     GNA &6.02e-2            &  \textbf{9.62}e-2 &  \textbf{100}&&5.93e-2 & \textbf{1.24e-1} &   \textbf{99}  &&    5.94e-2 &          \textbf{2.66e-1}&   \textbf{59}\\
  \hline
  \hline
  \end{tabular}}
  \end{center}
\end{table}

Last, we compare our  GNA  with  the aforementioned competitors   to
recover a one-dimensional  signal and two-dimensional image from quantized measurements.
The true  signal and image   are sparse  under
 wavelet basis ``Db1" \cite{Mallat:2008}. Thus,  the matrix $\Psi$  consist
of   random Gaussian matrix  and an  Harr wavelet transform with level  $1$ and  $2^{12}$, respectively.
The target coefficients have  $36$ and $1138$  nonzeros,
and the size $\Psi$ are   $2500\times 8000$ and $5000\times 128^2$.   We set  $\sigma= 0.5$, $p=6\%$ and $\sigma= 0.05$, $p=1\%$ and
the measurements are quantized with bit depth $1$ and $6$ for the signal and image respectively.     The recovered results are shown in
Figure  \ref{fig:1d}, Table \ref{tab:1d} and Figure  \ref{fig:2d}, Table \ref{tab:2d} respectively.  The decoding  results  by   GNA
are   visually more appealing than others, as shown in  Figure  \ref{fig:1d}-\ref{fig:2d}, which are  further confirmed by the PSNR value reported in Table \ref{tab:1d}-\ref{tab:2d},    defined as
$\mathrm{PSNR}=10\cdot \log\frac{V^2}{\rm MSE}$,
where $V$ is the maximum absolute value of the true signal/image, and MSE is the mean
squared error of the reconstruction.

\begin{figure}[ht!]
  \centering
  \begin{tabular}{ccc}
    \includegraphics[trim = 1cm 0.5cm 0.5cm 0cm, clip=true,width=3.8cm]{{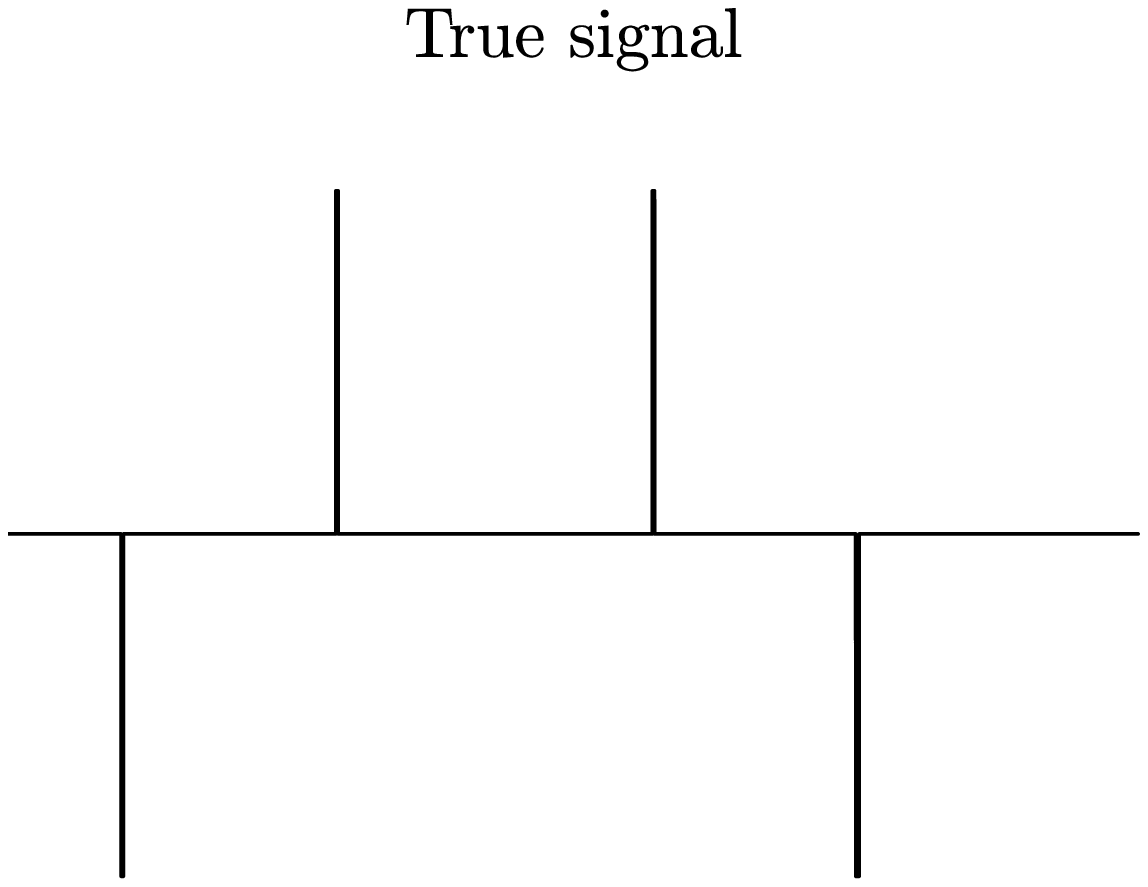}} &
    \includegraphics[trim = 1cm 0.5cm 0.5cm 0cm, clip=true,width=3.8cm]{{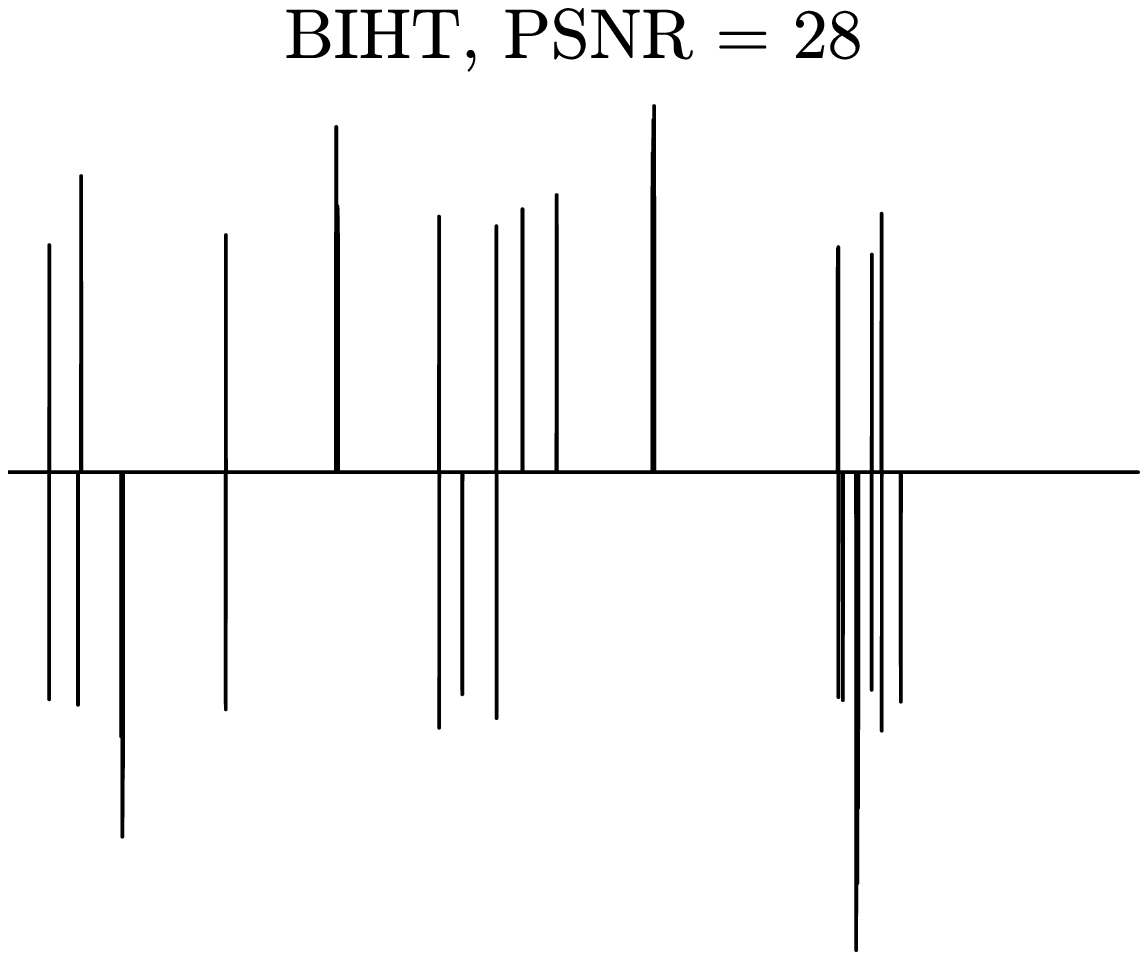}}&
    \includegraphics[trim = 1cm 0.5cm 0.5cm 0cm, clip=true,width=3.8cm]{{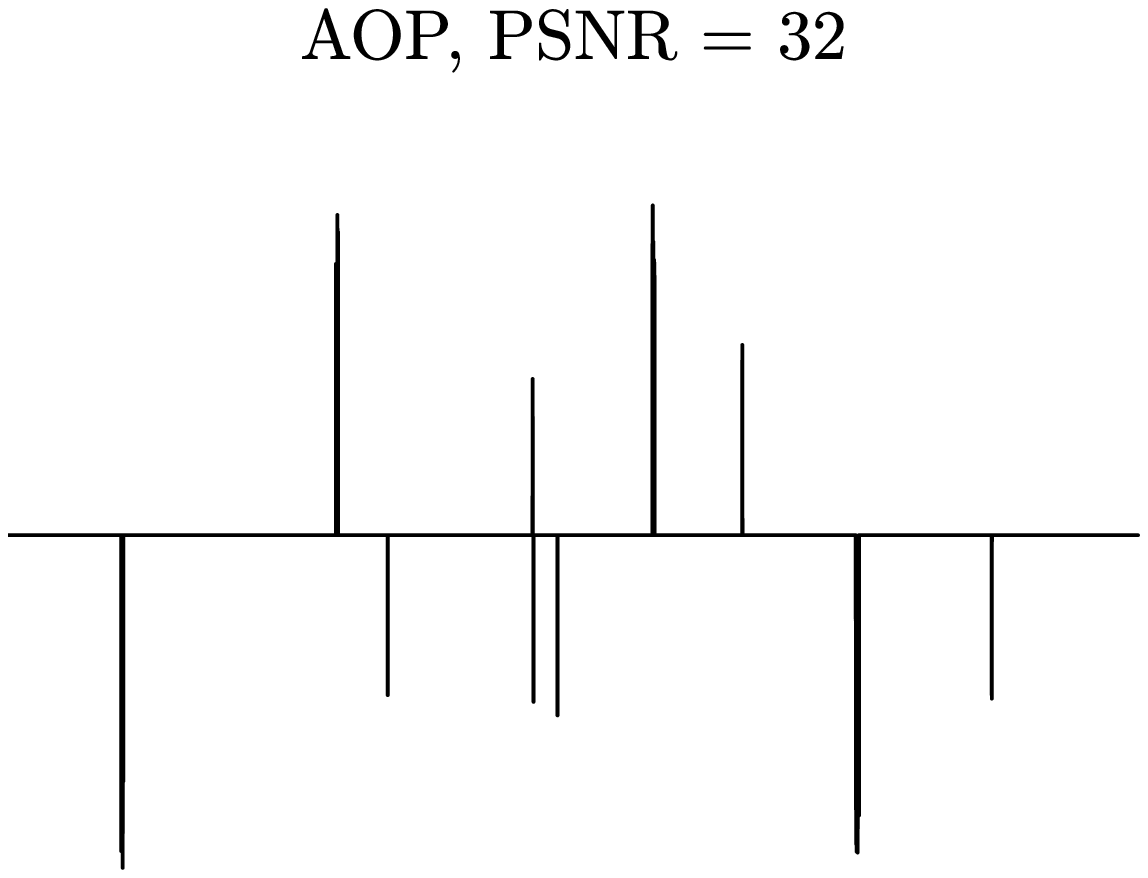}} \\
    \includegraphics[trim = 1cm 0.5cm 0.5cm 0cm, clip=true,width=3.8cm]{{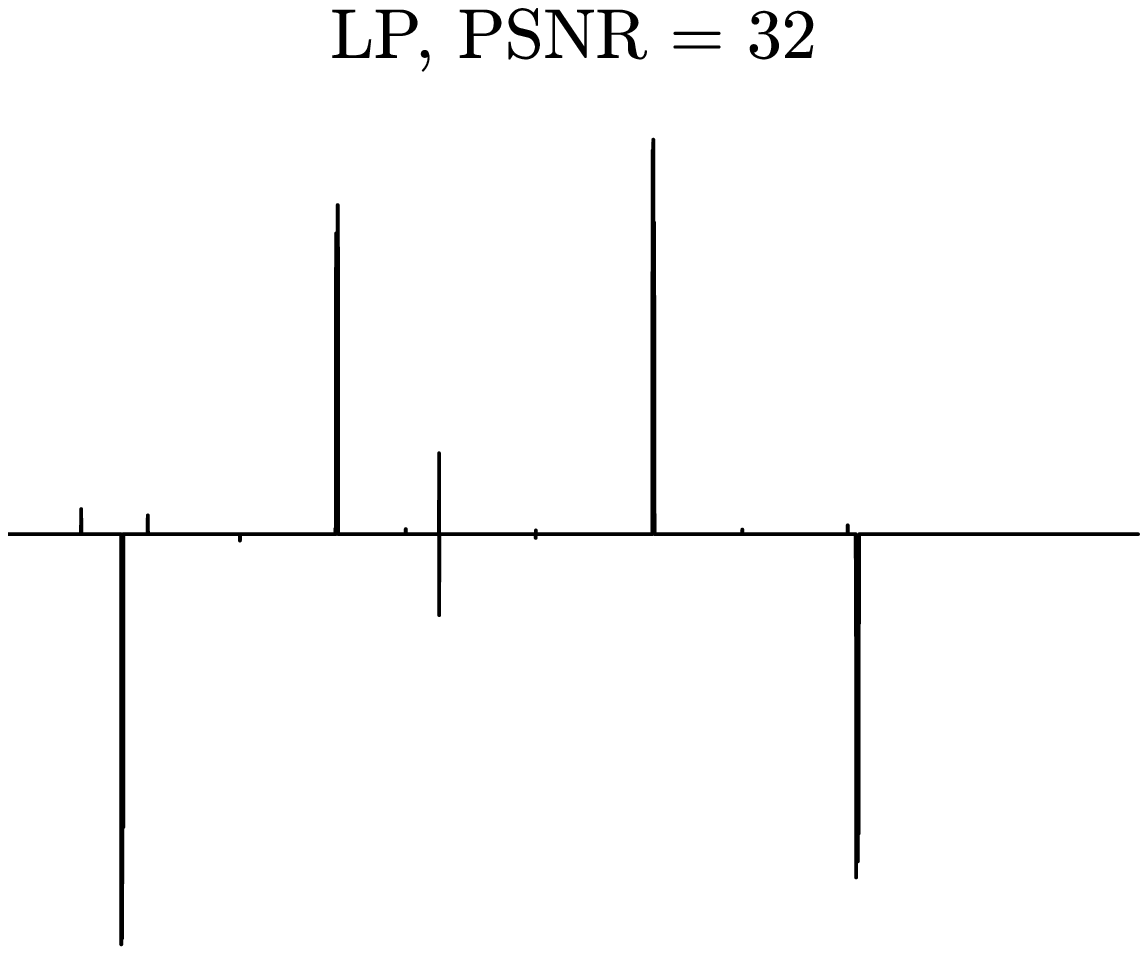}}&
    \includegraphics[trim = 1cm 0.5cm 0.5cm 0cm, clip=true,width=3.8cm]{{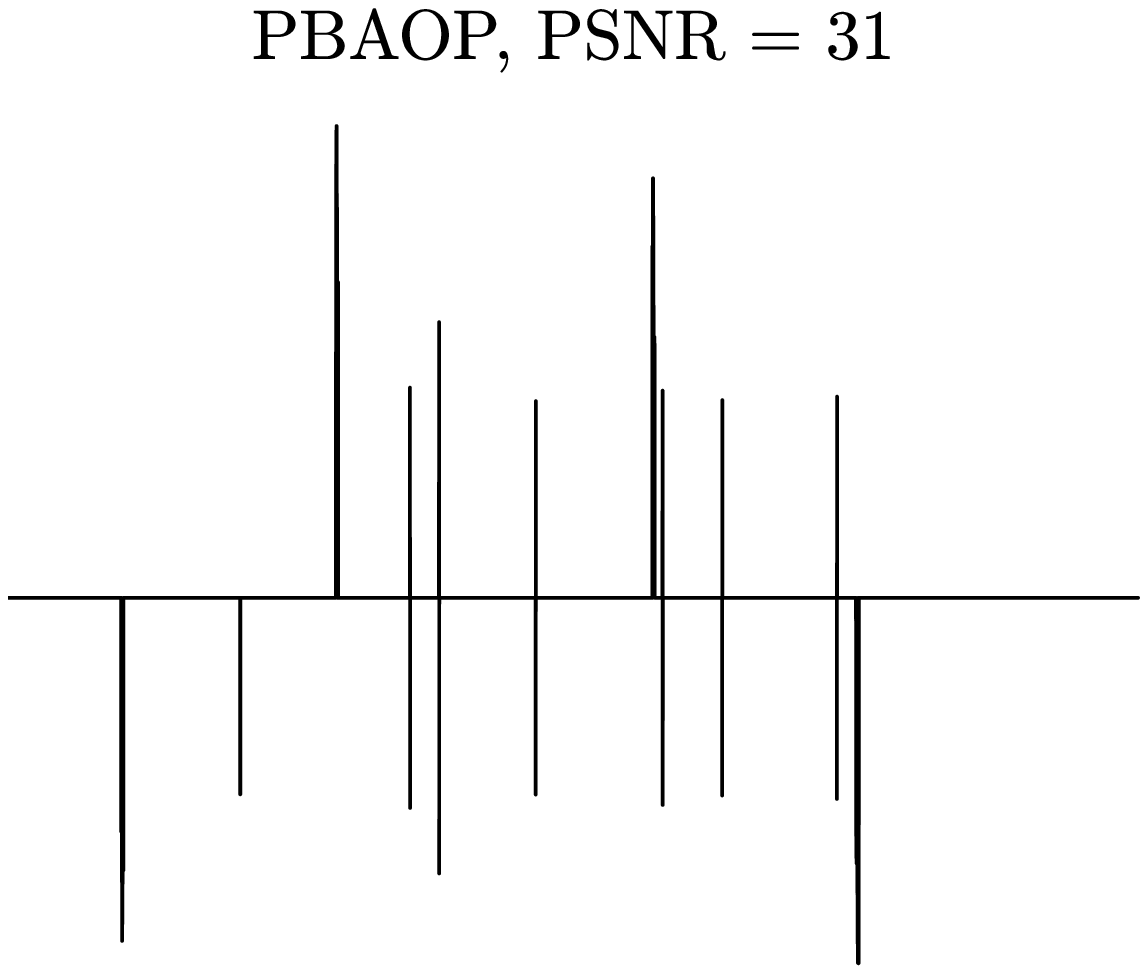}} &
    \includegraphics[trim = 1cm 0.5cm 0.5cm 0cm, clip=true,width=3.8cm]{{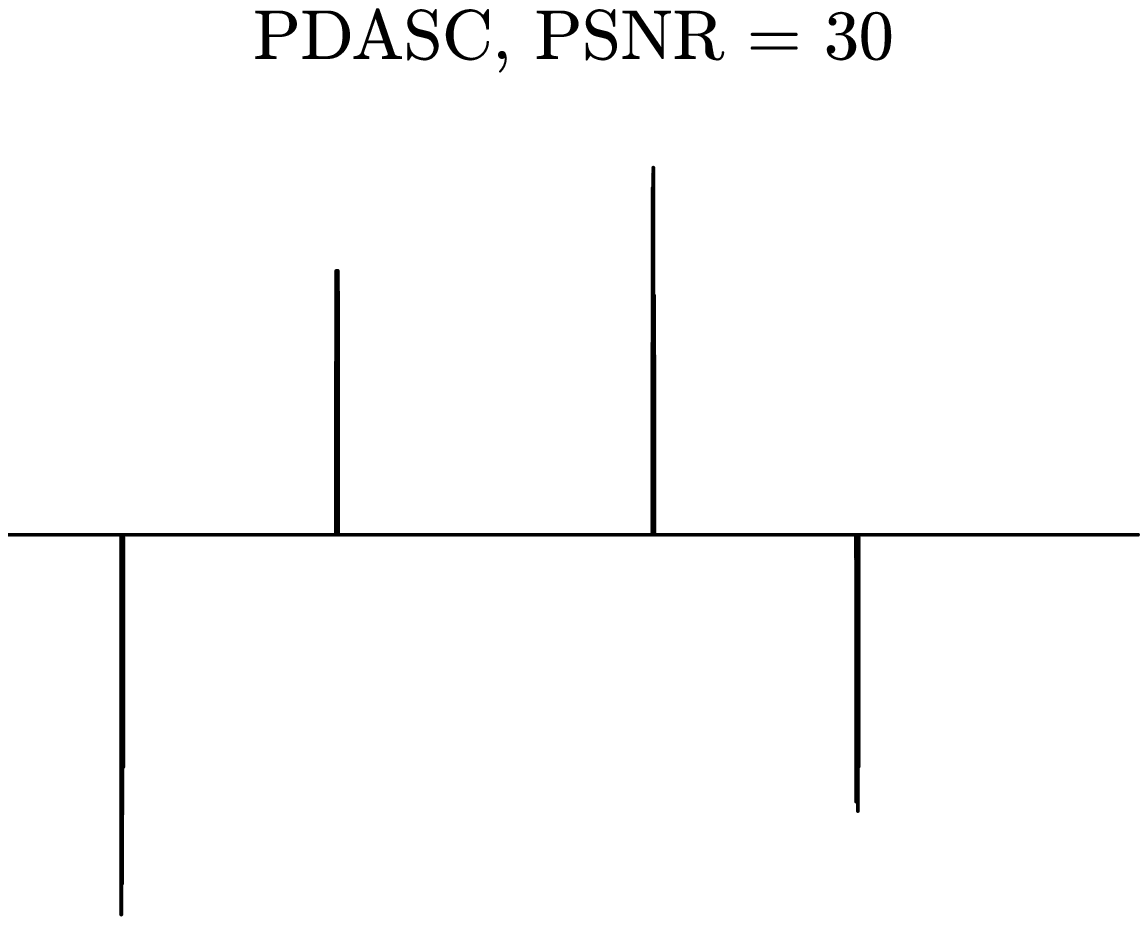}}\\
    \includegraphics[trim = 1cm 0.5cm 0.5cm 0cm, clip=true,width=3.8cm]{{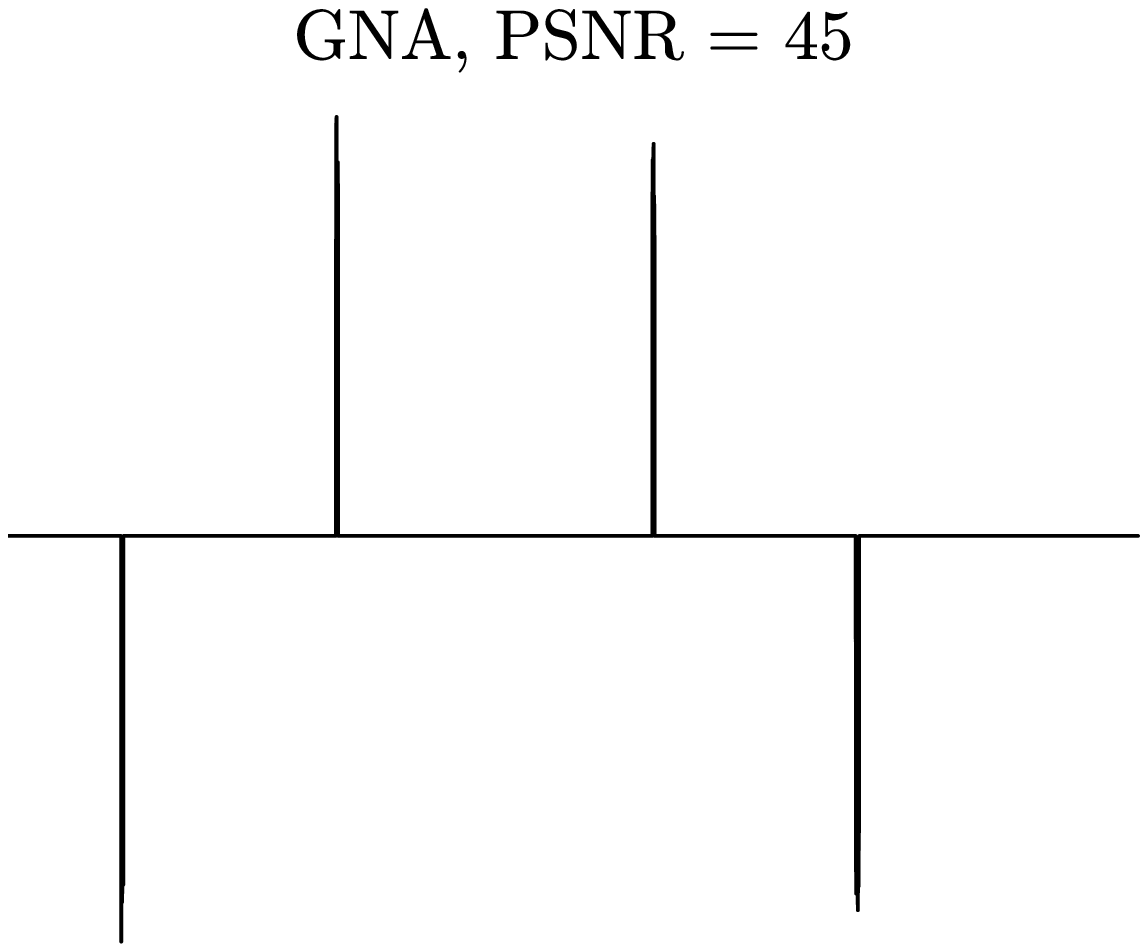}}& &
  \end{tabular}
  \caption{Reconstruction  of the one-dimensional signal with  $(m=2500, n=8000, s=36,\nu = 0, \sigma=0.5, p = 6\%)$.}\label{fig:1d}
\end{figure}

\begin{table}[ht!]
\centering
 \caption{The CPU time in seconds  and the PSNR of one-dimensional signal recovery with  $(m=2500, n=8000, s=36,\nu = 0, \sigma=0.5, p = 6\%)$.}\label{tab:1d}
 \begin{tabular}{ccccc}
 \hline
    method   &CPU time (s)  &PSNR      \\
 \hline
  BIHT         & 5.01  & 28   \\
  AOP          & 5.83  & 32   \\
  LP           & \textbf{0.12}  & 32   \\
  PBAOP        & 5.65  & 31  \\
  PDASC        &3.58   &  30\\
  GNA          &0.72   &  \textbf{45}  \\
  \hline
  \end{tabular}
\end{table}

\begin{table}[ht!]
\centering
 \caption{The CPU time in seconds  and the PSNR of two-dimensional image recovery with  $(m=5000, n=128^2, s=1138,\nu = 0, \sigma=0.05, p = 1\%)$.}\label{tab:2d}
 \begin{tabular}{ccccc}
 \hline
    method   &CPU time (s)  &PSNR      \\
 \hline
  BIHT         & 40.7 & 17   \\
  AOP          & 20.0 & 17   \\
  LP           &\textbf{ 0.21} & 17   \\
  PBAOP        & 20.5 & 16  \\
  PDASC        &21.2  & 19  \\
  GNA          &5.73  & \textbf{23}   \\
  \hline
  \end{tabular}
\end{table}

\begin{figure}[ht!]
  \centering
  \begin{tabular}{ccc}
    \includegraphics[trim = 1cm 0.5cm 0.5cm 0cm, clip=true,width=3.8cm]{{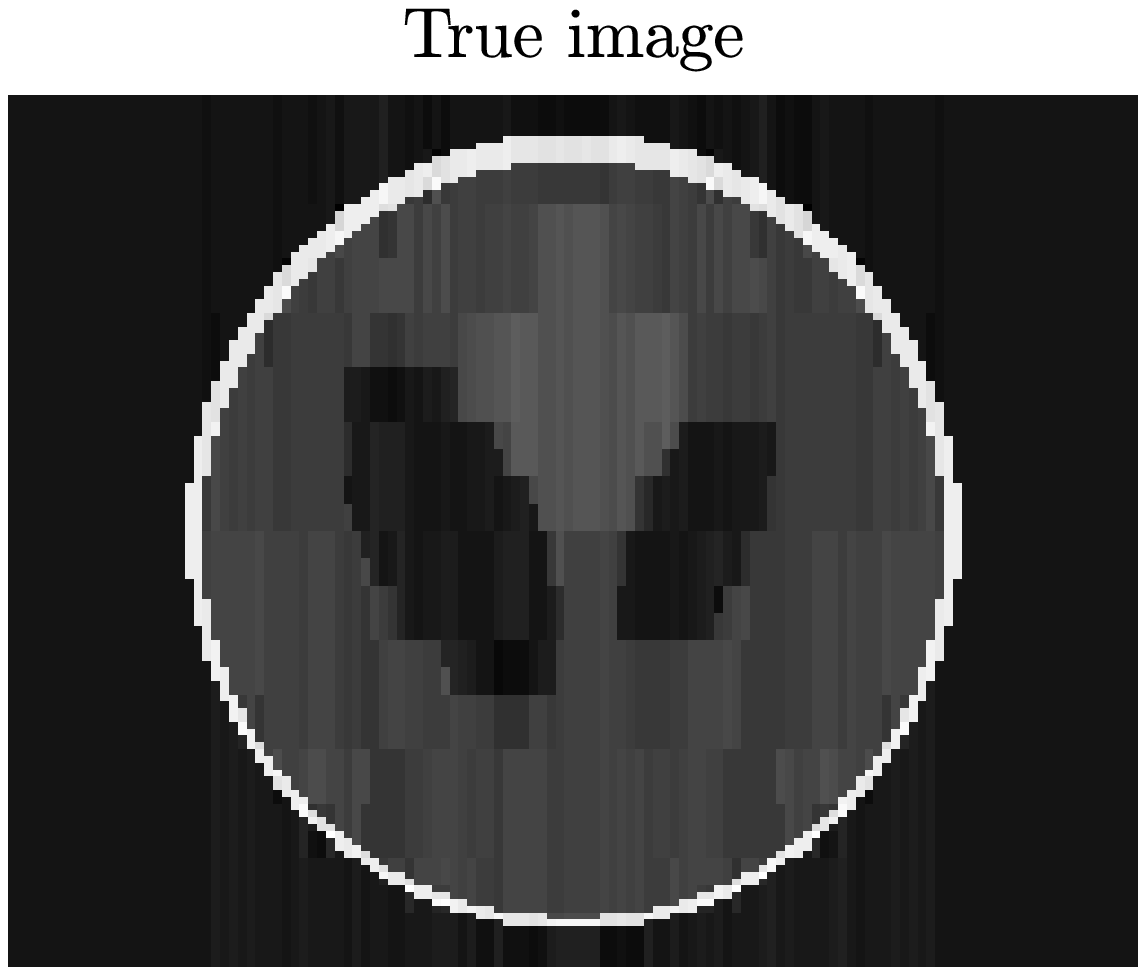}} &
    \includegraphics[trim = 1cm 0.5cm 0.5cm 0cm, clip=true,width=3.8cm]{{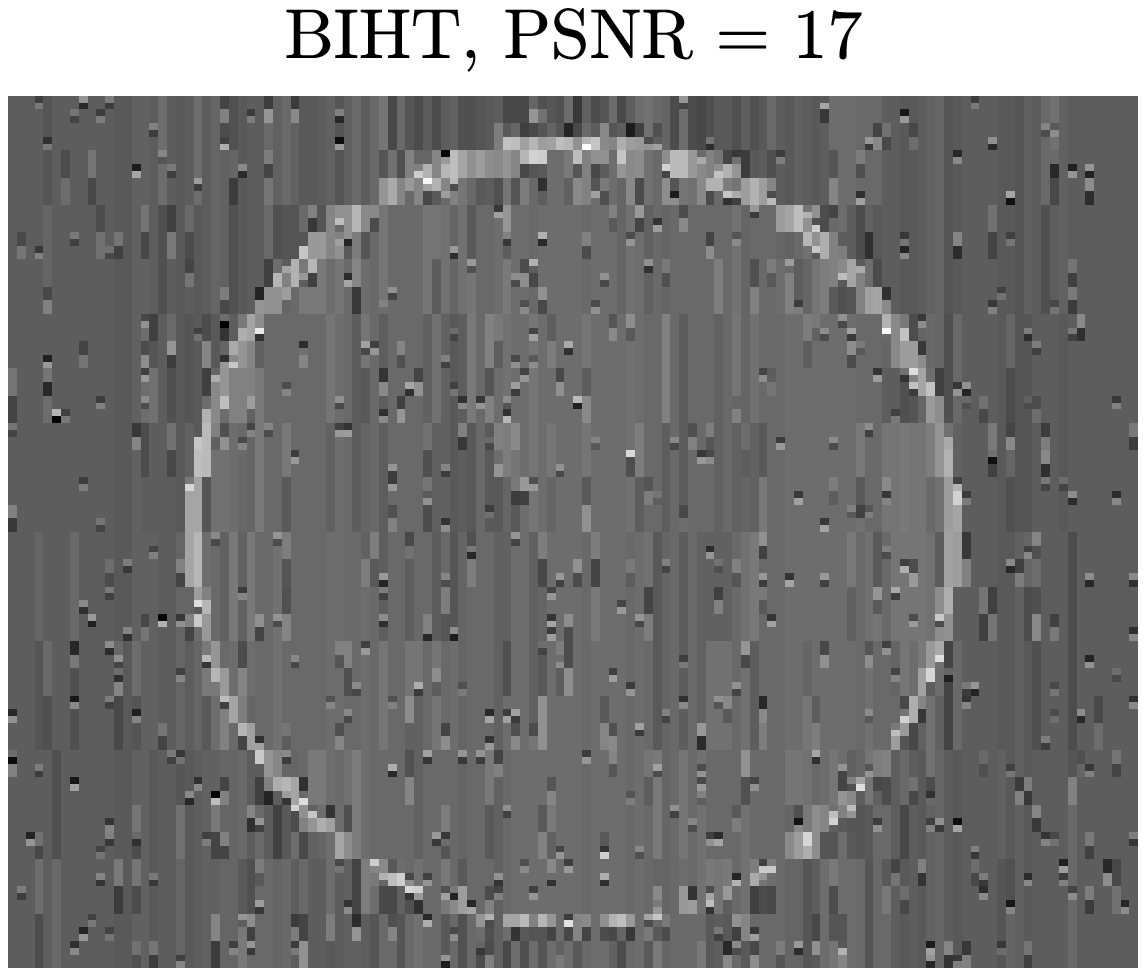}}&
    \includegraphics[trim = 1cm 0.5cm 0.5cm 0cm, clip=true,width=3.8cm]{{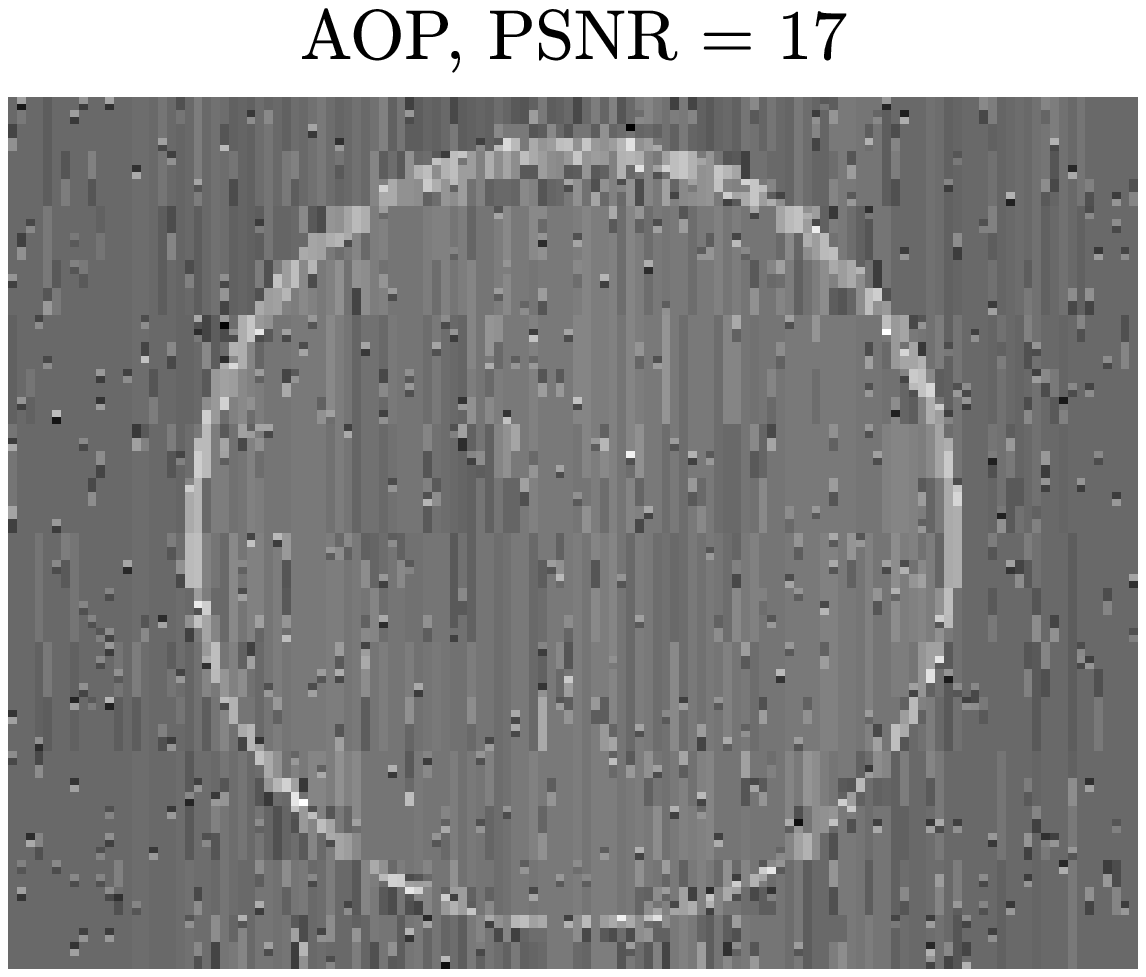}} \\
    \includegraphics[trim = 1cm 0.5cm 0.5cm 0cm, clip=true,width=3.8cm]{{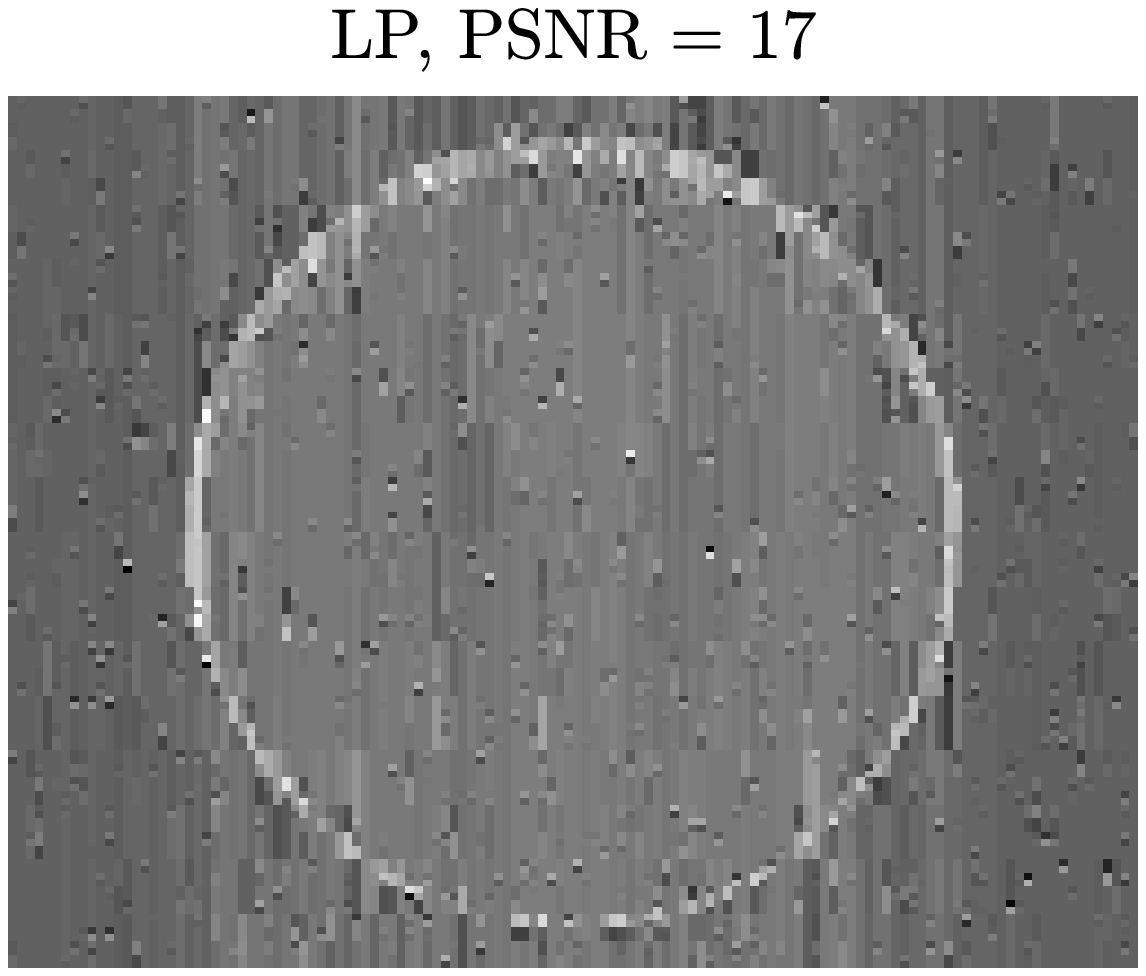}}&
    \includegraphics[trim = 1cm 0.5cm 0.5cm 0cm, clip=true,width=3.8cm]{{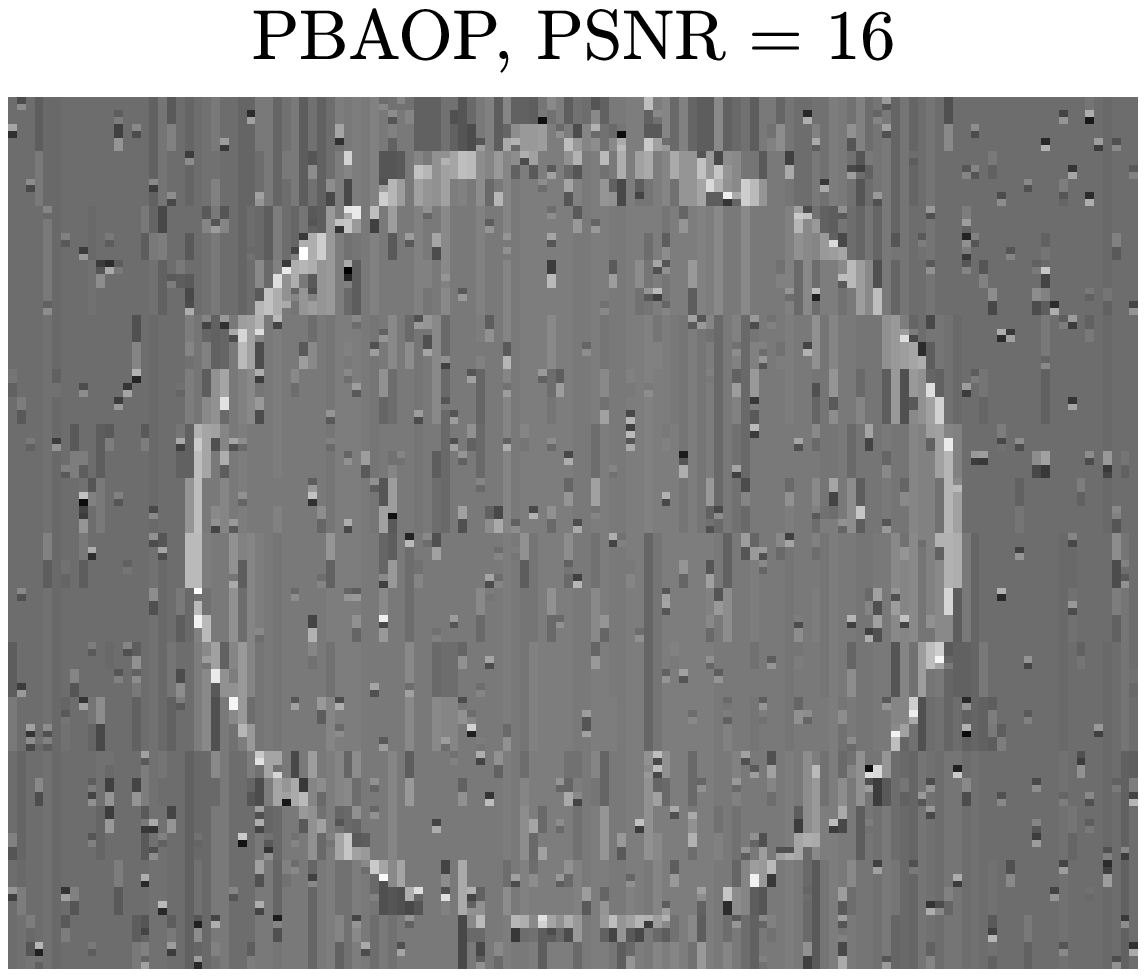}} &
    \includegraphics[trim = 1cm 0.5cm 0.5cm 0cm, clip=true,width=3.8cm]{{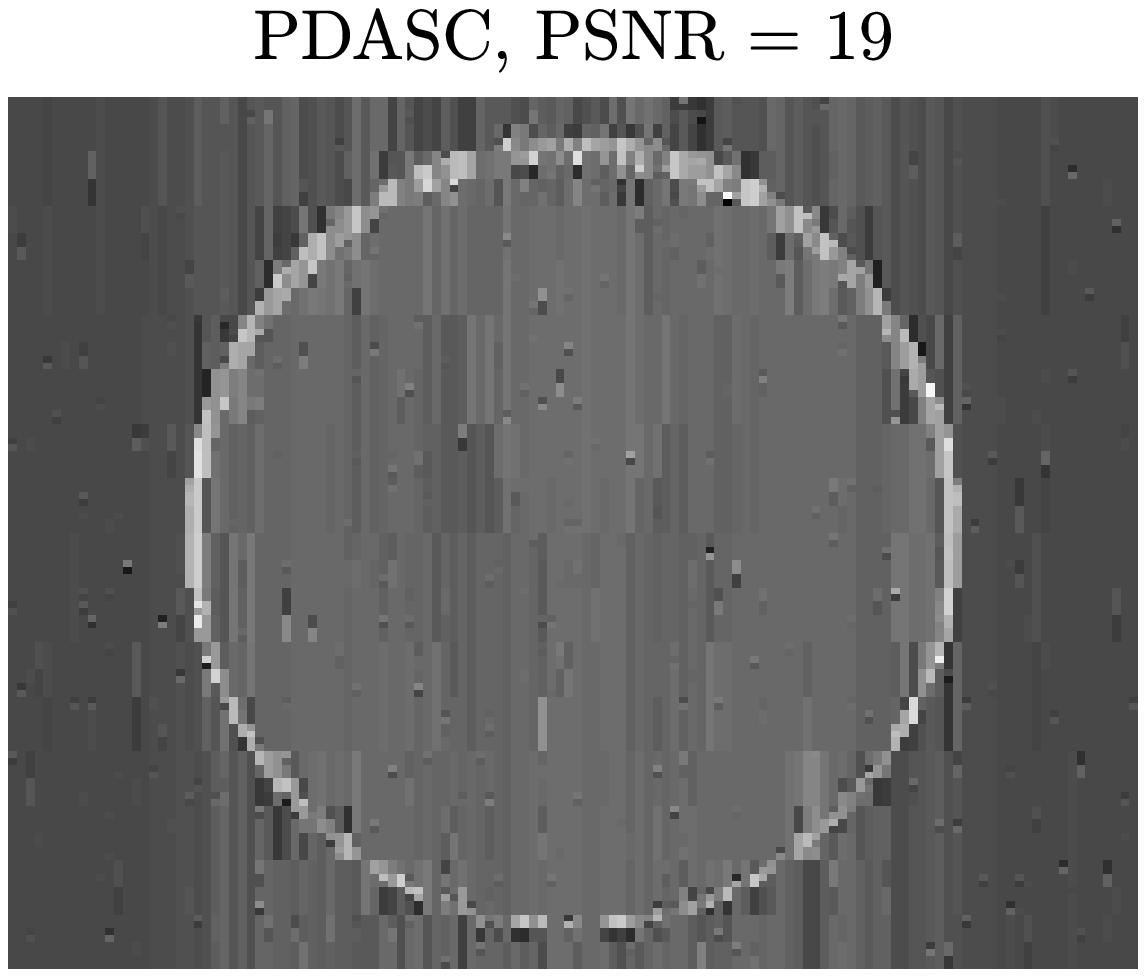}}\\
     \includegraphics[trim = 1cm 0.5cm 0.5cm 0cm, clip=true,width=3.8cm]{{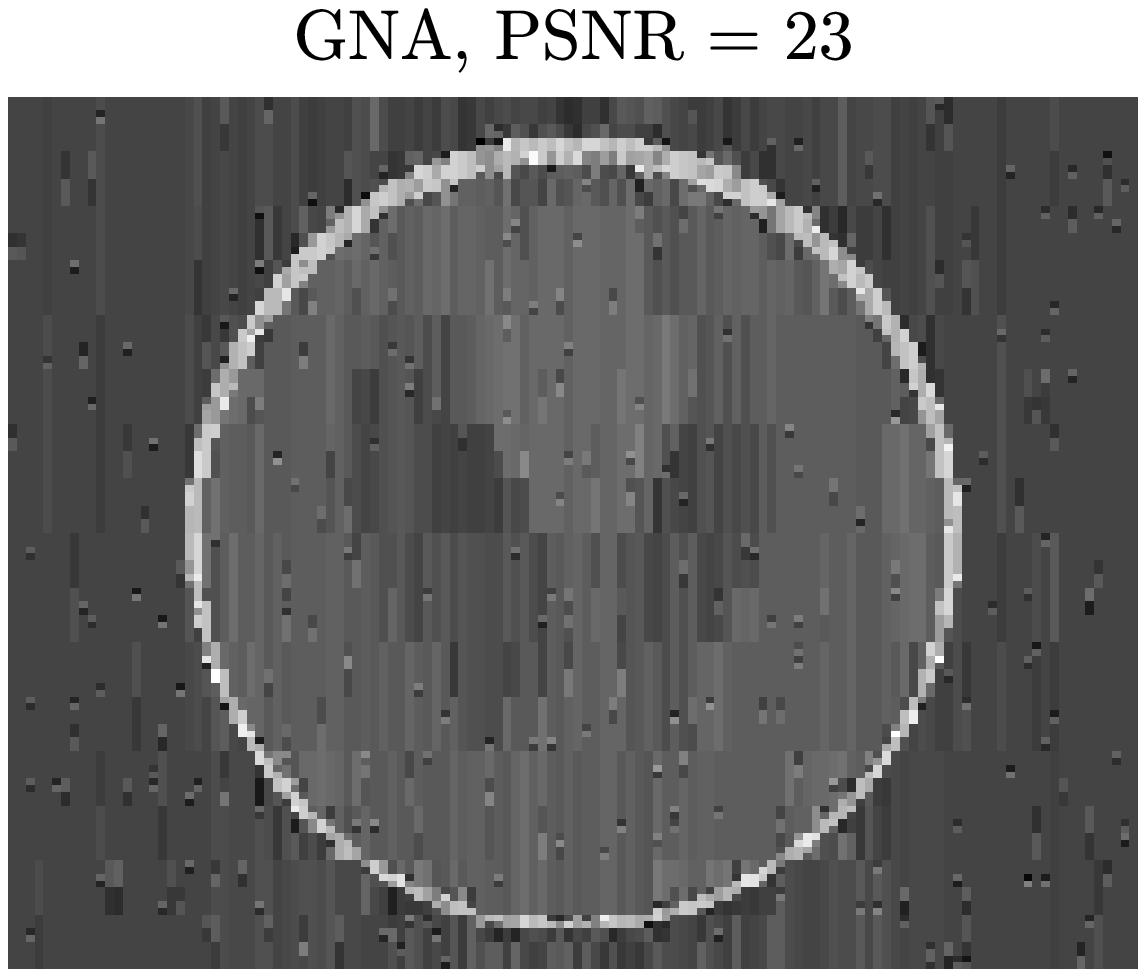}}& &
  \end{tabular}
  \caption{Reconstruction  of the two-dimension image  with  $(m=5000, n=128^2, s=1138,\nu = 0, \sigma=0.05, p = 1\%)$.}\label{fig:2d}
\end{figure}

\section{Conclusions}
 In this paper we consider decoding from binary  measurements with noise and  sign flips.
    We proposed the cardinality constraint  least squares   as a decoder.
 We prove that, up to a constant $c$, with high probability, the proposed decoder achieves a   minimax estimation error  as long as  $m \geq \mathcal{O}( s\log n)$.
 Computationally,
we utilize a generalized Newton   algorithm  (GNA) to solve  the  cardinality constraint  minimization approximately with the  cost of solving a small size least squares problem  at each iteration.
We prove that, with high probability,  the  $\ell_{\infty}$ norm of the  estimation error between  the output of GNA  and the underlying target decay to $\mathcal{O}(\sqrt{\frac{\log n }{m}}) $ with at most $\mathcal{O}(\log s)$ steps. Moreover, the underlying
support can be recover with high probability  in  $\mathcal{O}(\log s)$ steps provide that  the target signal is detectable.
Extensive numerical experiments and comparison with other state-of-the-art 1-bit CS model  demonstrates  the  robustness  of our
decoder  and  the efficiency of  GNA  algorithm.
\section*{Acknowledgements}

The work of Y. Jiao was supported in part by
the National Science Foundation of China
under Grant 11871474  and by the research fund
of KLATASDSMOE.
X.-L. Lu is partially supported by the National Science Foundation of China (No. 11871385), the National Key Research and Development Program of China (No.
2018YFC1314600) and the Natural Science Foundation of Hubei Province (No. 2019CFA007).
The research of Z. Yang   is supported by National Science Foundation of China (No. 11671312) and the Natural Science Foundation of Hubei Province (No. 2019CFA007).
This  research is supported by the Suppercomputing Center of Wuhan University.
\appendix

\section{Preliminaries}

\begin{lemma}\label{vershynin5.3940}
Let $\Psi\in \mathbb{R}^{m\times n}$  whose rows $\psi_i^t$ are independent subgaussian  vectors  in $\mathbb{R}^n$ with mean $\textbf{0}$ and covariance matrix $\Sigma$. Let $m > n$. Then for every $t>0$ with
probability at least $1-2\exp{(-C_1 t^2)}$, one has
\begin{equation}\label{norm1}
 (1- \tau)\sqrt{\gamma_{\mathrm{min}}(\Sigma)} \leq \sqrt{\gamma_{\textrm{min}}(\frac{\Psi^t\Psi}{m})}\leq \sqrt{\gamma_{\textrm{max}}(\frac{\Psi^t\Psi}{m})} \leq (1 +\tau)\sqrt{\gamma_{\mathrm{max}}(\Sigma)},
\end{equation}
and
\begin{equation}\label{norm2}
 \|\Psi^t\Psi/m - \Sigma\|\leq \max\{\tau,\tau^2\} \gamma_{\textrm{max}}(\Sigma),
 \end{equation}
 where $\tau = C_2 \sqrt{\frac{n}{m}} + \frac{t}{\sqrt{m}},$
 and  $C_1, C_2$ are generic positive constants depending   on  the
 maximum subgaussian norm of rows of $\Psi$.
\end{lemma}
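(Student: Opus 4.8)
The plan is to reduce the statement to the classical bound on the extreme singular values of a random matrix with independent \emph{isotropic} subgaussian rows, and then to undo the covariance. First I would write $\psi_i = \Sigma^{1/2} g_i$ with $g_i := \Sigma^{-1/2}\psi_i$; the $g_i$ are i.i.d., have mean $\mathbf{0}$ and covariance $\mathbf{I}_n$, and are subgaussian with $\psi_2$-norm controlled by that of $\psi_i$ (in the Gaussian model $\psi_i \sim \mathcal{N}(\mathbf{0},\Sigma)$ of \eqref{setup} one simply has $g_i \sim \mathcal{N}(\mathbf{0},\mathbf{I}_n)$, so the constants stay absolute). Stacking the rows $g_i^t$ into $G \in \mathbb{R}^{m\times n}$ gives $\Psi = G\Sigma^{1/2}$ and hence $\Psi^t\Psi/m = \Sigma^{1/2}(G^tG/m)\Sigma^{1/2}$.

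Next I would establish, on an event of probability at least $1 - 2\exp(-C_1 t^2)$, the two isotropic estimates $1-\tau \le s_{\min}(G/\sqrt{m}) \le s_{\max}(G/\sqrt{m}) \le 1+\tau$ and $\|G^tG/m - \mathbf{I}_n\| \le \max\{\tau,\tau^2\}$, where $\tau = C_2\sqrt{n/m} + t/\sqrt{m}$. These are Theorem~5.39 and Remark~5.40 in Vershynin's treatment of non-asymptotic random matrix theory, and are proved by an $\varepsilon$-net argument: fix a $\tfrac14$-net $\mathcal{N}$ of $S^{n-1}$ with $|\mathcal{N}| \le 9^n$; for a fixed unit vector $x$, $\|Gx\|_2^2/m = \frac1m\sum_{i=1}^m \langle g_i, x\rangle^2$ is an average of i.i.d.\ mean-one subexponential random variables, so Bernstein's inequality yields $\mathbb{P}[\,|\,\|Gx\|_2^2/m - 1\,| \ge s\,] \le 2\exp(-c\,m\min\{s,s^2\})$; a union bound over $\mathcal{N}$ together with the standard passage from a net to the whole sphere, applied to the symmetric quadratic form $x \mapsto x^t(G^tG/m - \mathbf{I}_n)x$, gives both displayed bounds, the shape $\max\{\tau,\tau^2\}$ coming directly from the $\min\{s,s^2\}$ in Bernstein.

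Then I would transfer these estimates back to $\Psi$. For any unit vector $x$ one has $x^t(\Psi^t\Psi/m)x = (\Sigma^{1/2}x)^t(G^tG/m)(\Sigma^{1/2}x) \ge s_{\min}(G/\sqrt{m})^2\,\|\Sigma^{1/2}x\|_2^2 \ge (1-\tau)^2\gamma_{\min}(\Sigma)$, and symmetrically the same quantity is $\le (1+\tau)^2\gamma_{\max}(\Sigma)$; taking $x$ to be a bottom, respectively top, eigenvector of $\Psi^t\Psi/m$ and then square roots yields \eqref{norm1}. For \eqref{norm2}, $\|\Psi^t\Psi/m - \Sigma\| = \|\Sigma^{1/2}(G^tG/m - \mathbf{I}_n)\Sigma^{1/2}\| \le \|\Sigma^{1/2}\|^2\,\|G^tG/m - \mathbf{I}_n\| \le \gamma_{\max}(\Sigma)\max\{\tau,\tau^2\}$, which is the claim.

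Since the statement is essentially quoted from the literature, there is no genuine obstacle. The point deserving care is the net-to-sphere step, which must be carried out directly for the quadratic form of $G^tG/m - \mathbf{I}_n$ rather than by squaring the singular-value bounds --- doing the latter would replace the sharp $\max\{\tau,\tau^2\}$ by the weaker $2\tau+\tau^2$ --- and, if one insists on genuinely subgaussian (not Gaussian) rows, keeping track that the $\psi_2$-norm of $g_i = \Sigma^{-1/2}\psi_i$ stays controlled in terms of that of $\psi_i$, so that $C_1,C_2$ remain the generic constants promised in the statement.
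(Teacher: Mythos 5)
Your proposal is correct and follows essentially the same route as the paper: whiten via $\Phi=\Psi\Sigma^{-1/2}$ (your $G$), invoke Vershynin's Theorem 5.39 and Remark 5.40 for isotropic subgaussian rows, and transfer back through $\Sigma^{1/2}$, which is exactly the paper's one-line argument (the paper additionally cites Lemma 5.36 for the equivalence between the singular-value and operator-norm deviation forms, which you handle directly). The extra detail you supply --- the $\varepsilon$-net/Bernstein sketch and the caution about obtaining $\max\{\tau,\tau^2\}$ rather than $2\tau+\tau^2$ --- is consistent with the cited source and does not change the argument.
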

\begin{proof}
Let $\Phi = \Psi \Sigma^{-\frac{1}{2}}$. Then the rows of $\Phi$ are independent sub-gaussian isotropic vectors.   \eqref{norm1} follows from Theorem 5.39 and Lemma 5.36 of \cite{Vershynin:2010} and
\eqref{norm2} is a direct consequence of  Remark 5.40 of \cite{Vershynin:2010}.
\end{proof}

%

 \begin{lemma}\label{noiselinf}
  Let $C_3 \geq \|x^*\|_1$.
If  $m> \frac{4C_1}{C_2^2}\log n$, then with probability at least $1-2/n^3  -2/n^2$, one has
  \begin{equation}
\|\Psi^t (y - c\Psi x^*)/m\|_{\infty} \leq   \frac{2(1+|c|C_3)}{\sqrt{C_1}}\sqrt{\frac{\log n}{m}}.
  \end{equation}
  \end{lemma}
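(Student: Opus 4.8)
The plan is to center $\Psi^{t}(y-c\Psi x^{*})/m$, split it into two pieces, and control each piece entrywise by a sub-Gaussian/sub-exponential tail bound followed by a union bound over coordinates. The key preliminary step is the Brillinger--Stein mean identity $\mathbb{E}[\psi_{j}y_{j}]=c\,\Sigma x^{*}$: since $\|x^{*}\|_{\Sigma}=1$ we have $\psi_{j}^{t}x^{*}\sim\mathcal{N}(0,1)$, hence $\psi_{j}^{t}x^{*}+\epsilon_{j}\sim\mathcal{N}(0,1+\sigma^{2})$, and applying the Gaussian integration-by-parts identity (with $(\psi_{j};\epsilon_{j})$ viewed as a joint Gaussian vector) to the distributionally differentiable map $u\mapsto\mathrm{sign}(u)$, whose derivative is $2\delta_{0}$, gives $\mathbb{E}\big[\psi_{j}\,\mathrm{sign}(\psi_{j}^{t}x^{*}+\epsilon_{j})\big]=2f_{1+\sigma^{2}}(0)\,\Sigma x^{*}=\sqrt{\tfrac{2}{\pi(1+\sigma^{2})}}\,\Sigma x^{*}$, where $f_{1+\sigma^{2}}$ denotes the $\mathcal{N}(0,1+\sigma^{2})$ density. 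Multiplying by $\mathbb{E}[\eta_{j}]=2p-1$ and using the independence of $\eta_{j}$ gives the identity, so $\mathbb{E}[\Psi^{t}(y-c\Psi x^{*})/m]=c\Sigma x^{*}-c\Sigma x^{*}=\mathbf{0}$.

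I would then write $\Psi^{t}(y-c\Psi x^{*})/m=T_{1}-T_{2}$ with $T_{1}:=\Psi^{t}y/m-c\Sigma x^{*}$ and $T_{2}:=c(\Psi^{t}\Psi/m-\Sigma)x^{*}$, and bound $\|T_{1}\|_{\infty}$ and $\|T_{2}\|_{\infty}$ separately. The $i$-th coordinate of $T_{1}$ is $\frac{1}{m}\sum_{j=1}^{m}\big(\Psi_{ji}y_{j}-(c\Sigma x^{*})_{i}\big)$, an average of $m$ i.i.d.\ centered terms; since $|y_{j}|=1$ we have $|\Psi_{ji}y_{j}|=|\Psi_{ji}|$, so each term is sub-Gaussian with a universal norm (absorbed into $C_{1}$), and Hoeffding's inequality together with a union bound over $i\in[n]$ gives $\mathbb{P}[\|T_{1}\|_{\infty}>t]\le 2n\exp(-C_{1}mt^{2})$; the choice $t=\tfrac{2}{\sqrt{C_{1}}}\sqrt{\tfrac{\log n}{m}}$ makes this $2/n^{3}$. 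For $T_{2}$, the estimate $|(T_{2})_{i}|=|c\sum_{k\in\mathrm{supp}(x^{*})}x^{*}_{k}(\Psi^{t}\Psi/m-\Sigma)_{ik}|\le|c|\,\|x^{*}\|_{1}\max_{i,k}|(\Psi^{t}\Psi/m-\Sigma)_{ik}|\le|c|C_{3}\max_{i,k}|(\Psi^{t}\Psi/m-\Sigma)_{ik}|$ reduces the task to an entrywise bound on the sample-covariance deviation; each entry $(\Psi^{t}\Psi/m-\Sigma)_{ik}=\frac{1}{m}\sum_{j}(\Psi_{ji}\Psi_{jk}-\Sigma_{ik})$ is an average of i.i.d.\ centered sub-exponential variables (a product of two Gaussians minus its mean), so Bernstein's inequality with a union bound over the $n^{2}$ pairs gives $\mathbb{P}[\max_{i,k}|(\Psi^{t}\Psi/m-\Sigma)_{ik}|>t]\le 2n^{2}\exp(-C_{1}m\min\{t^{2},t\})$, which for the same $t$ is $2/n^{2}$. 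Intersecting the two events, whose complement has probability at most $2/n^{3}+2/n^{2}$, yields $\|\Psi^{t}(y-c\Psi x^{*})/m\|_{\infty}\le\|T_{1}\|_{\infty}+\|T_{2}\|_{\infty}\le\tfrac{2(1+|c|C_{3})}{\sqrt{C_{1}}}\sqrt{\tfrac{\log n}{m}}$, as claimed.

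I should remark that the operator-norm bound \eqref{norm2} of Lemma~\ref{vershynin5.3940} is too weak for this purpose, since it controls $\|\Psi^{t}\Psi/m-\Sigma\|$ only at the $\mathcal{O}(\sqrt{n/m})$ rate; it is the coordinatewise Bernstein argument above that produces the required $\sqrt{\log n/m}$ rate.

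The main obstacle is twofold. First, the mean identity $\mathbb{E}[\psi_{j}y_{j}]=c\Sigma x^{*}$ must be derived with care, as it interleaves Stein's identity for the non-smooth $\mathrm{sign}$ nonlinearity, the pre-quantization noise $\epsilon$, and the random sign flip $\eta$, and it is precisely where the normalization $\|x^{*}\|_{\Sigma}=1$ and the exact value of $c$ enter. Second, one must check that the target deviation level $t\asymp\sqrt{\log n/m}$ is small enough to lie in the sub-Gaussian (quadratic) regime of Bernstein's bound for $T_{2}$ and to keep Hoeffding's constant valid for $T_{1}$ — which is exactly what the hypothesis $m>\tfrac{4C_{1}}{C_{2}^{2}}\log n$ guarantees, since then $t^{2}=\tfrac{4\log n}{C_{1}m}<\tfrac{C_{2}^{2}}{C_{1}^{2}}$ stays below a universal constant. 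Everything else is routine, relying on standard sub-Gaussian/sub-exponential concentration (cf.\ \cite{Vershynin:2010}) and union bounds.
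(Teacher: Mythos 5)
Your proof is correct. Note that the paper does not prove Lemma \ref{noiselinf} internally — its ``proof'' is a pointer to Lemma D.2 of \cite{HuangJiao:2018} — so there is no in-paper argument to compare against; your write-up is a faithful, self-contained reconstruction of the standard route that the cited lemma follows: the Stein/Brillinger identity $\mathbb{E}[\psi_j y_j]=c\,\Sigma x^*$ (which is precisely where the normalization $\|x^*\|_{\Sigma}=1$, the independence of $\eta_j$, and the exact value of $c$ enter), the decomposition into the linear piece $\Psi^t y/m-c\Sigma x^*$ and the quadratic piece $c(\Psi^t\Psi/m-\Sigma)x^*$, and coordinatewise Hoeffding/Bernstein tail bounds with union bounds over $n$ and $n^2$ terms, which reproduce the $2/n^3$ and $2/n^2$ failure probabilities exactly. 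Your remark that the operator-norm deviation bound \eqref{norm2} of Lemma \ref{vershynin5.3940} is too weak here, and that an entrywise sub-exponential bound is what delivers the $\sqrt{\log n/m}$ rate, is apt. Two cosmetic points only: the sub-Gaussian norm of $\Psi_{ji}y_j$ is not universal but is controlled by $\sqrt{\Sigma_{ii}}\le\sqrt{\gamma_{\max}(\Sigma)}$, which is consistent with the paper's convention that $C_1,C_2$ depend on the maximal sub-Gaussian norm of the rows of $\Psi$; and your reading of the hypothesis $m>\tfrac{4C_1}{C_2^2}\log n$ as the condition keeping the Bernstein bound in its quadratic regime is the right one.
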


  \begin{proof}
  See Lemma D.2 in \cite{HuangJiao:2018}.
  \end{proof}

\begin{lemma}\label{lb}
Define
\begin{align*}
&C_{2s,\mathrm{min}} = \inf_{A\subset [n], |A|  \leq 2s } \frac{\gamma_{\mathrm{min}}(\Psi_{A}^t\Psi_{A})}{m},\\
 &C_{2s,\mathrm{max}} = \sup_{A\subset [n], |A|  \leq 2s } \frac{\gamma_{\mathrm{max}}(\Psi_{A}^t\Psi_{A})}{m}.
 \end{align*}
Then, with probability at least $ 1-  4/n^2$,
\begin{align*}
&C_{2s,\mathrm{min}} \geq  \gamma_{\mathrm{min}}(\Sigma)/4,\\
&C_{2s,\mathrm{max}} \leq   9\gamma_{\mathrm{max}}(\Sigma)/4,
\end{align*}
 as long as $m \geq \frac{16(C_2+1)^2}{C_1}s\log\frac{en}{s}$ and $s\leq \exp^{(1-\frac{C_1}{2})}n.$
\end{lemma}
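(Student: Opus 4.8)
The plan is to upgrade the single–subset eigenvalue estimate of Lemma~\ref{vershynin5.3940} to a bound that holds \emph{uniformly} over all index sets of size at most $2s$, via a union bound, after first reducing to sets of size exactly $2s$ by eigenvalue interlacing. Concretely, by the Cauchy interlacing theorem, whenever $A' \subseteq A$ we have $\gamma_{\mathrm{min}}(\Psi_{A'}^t\Psi_{A'}) \ge \gamma_{\mathrm{min}}(\Psi_{A}^t\Psi_{A})$ and $\gamma_{\mathrm{max}}(\Psi_{A'}^t\Psi_{A'}) \le \gamma_{\mathrm{max}}(\Psi_{A}^t\Psi_{A})$, so the infimum defining $C_{2s,\mathrm{min}}$ and the supremum defining $C_{2s,\mathrm{max}}$ are attained on sets with $|A| = \min\{2s,n\} = 2s$. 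The same interlacing applied to $\Sigma$ yields, for every such $A$, the deterministic comparisons $\gamma_{\mathrm{min}}(\Sigma_{AA}) \ge \gamma_{\mathrm{min}}(\Sigma)$ and $\gamma_{\mathrm{max}}(\Sigma_{AA}) \le \gamma_{\mathrm{max}}(\Sigma)$, which is what will turn a bound in terms of $\Sigma_{AA}$ into one in terms of $\Sigma$.

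Next I would fix $A$ with $|A| = 2s$. The rows of $\Psi_A$ are i.i.d.\ subgaussian vectors in $\mathbb{R}^{2s}$ with mean $\mathbf 0$ and covariance $\Sigma_{AA}$, and the lower bound on $m$ forces $m > 2s$, so Lemma~\ref{vershynin5.3940} applies to $\Psi_A$: for every $t>0$, with probability at least $1-2\exp(-C_1 t^2)$,
\begin{equation*}
(1-\tau)\sqrt{\gamma_{\mathrm{min}}(\Sigma_{AA})} \le \sqrt{\gamma_{\mathrm{min}}(\Psi_A^t\Psi_A/m)} \le \sqrt{\gamma_{\mathrm{max}}(\Psi_A^t\Psi_A/m)} \le (1+\tau)\sqrt{\gamma_{\mathrm{max}}(\Sigma_{AA})},
\end{equation*}
with $\tau = C_2\sqrt{2s/m} + t/\sqrt m$. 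Taking a union bound over the $\binom{n}{2s} \le (en/(2s))^{2s} \le \exp\!\big(2s\log\tfrac{en}{s}\big)$ choices of $A$, the failure probability is at most $2\exp\!\big(2s\log\tfrac{en}{s} - C_1 t^2\big)$. I would then pick $t$ of order $\sqrt{\max\{s\log(en/s),\,\log n\}\,/\,C_1}$, with the precise multiple chosen so that this probability is at most $4/n^2$; this is where the hypothesis $s \le \exp^{(1-\frac{C_1}{2})} n$ is used, since it gives a bound of the form $\log n \le \tfrac{2}{C_1}\log\tfrac{en}{s}$ and hence lets the $2\log n$ produced by $4/n^2 = 4e^{-2\log n}$ be absorbed into the $s\log\tfrac{en}{s}$ budget without inflating $t$ too much.

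Finally, the prescribed lower bound $m \ge \frac{16(C_2+1)^2}{C_1} s\log\frac{en}{s}$ is calibrated precisely so that, for this choice of $t$, both $C_2\sqrt{2s/m} \le \tfrac14$ and $t/\sqrt m \le \tfrac14$, whence $\tau \le \tfrac12$ on the good event. Then $(1-\tau)^2 \ge \tfrac14$ and $(1+\tau)^2 \le \tfrac94$, and squaring the displayed inequality and inserting the interlacing bounds on $\Sigma_{AA}$ gives, simultaneously for all $|A| \le 2s$,
\begin{equation*}
\frac{\gamma_{\mathrm{min}}(\Psi_A^t\Psi_A)}{m} \ge \frac{\gamma_{\mathrm{min}}(\Sigma)}{4}, \qquad \frac{\gamma_{\mathrm{max}}(\Psi_A^t\Psi_A)}{m} \le \frac{9\gamma_{\mathrm{max}}(\Sigma)}{4},
\end{equation*}
which is the assertion. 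I expect the genuine difficulty to be the joint calibration in the last two steps: $t$ must be large enough that $2\exp(2s\log\tfrac{en}{s}-C_1t^2)$ beats the cardinality $\binom{n}{2s}$ of the union bound down to order $n^{-2}$, yet small enough that $\tau = C_2\sqrt{2s/m}+t/\sqrt m$ stays below $\tfrac12$ under the stated lower bound on $m$; reconciling these two requirements is exactly what pins down the constants $\frac{16(C_2+1)^2}{C_1}$ and $s \le \exp^{(1-\frac{C_1}{2})}n$, and the bookkeeping to verify it closes is the main (if routine) obstacle.
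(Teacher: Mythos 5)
Your proposal is correct and follows essentially the same route as the paper: apply Lemma \ref{vershynin5.3940} to each $\Psi_A$, union bound over the at most $(en/(2s))^{2s}$ subsets, take $t=\sqrt{\tfrac{4s}{C_1}\log\tfrac{en}{s}}$, and use the lower bound on $m$ to force $\tau\le\tfrac12$ (the paper unions over all $|A|\le 2s$ rather than reducing to $|A|=2s$ by interlacing, a cosmetic difference, and it leaves implicit the $\Sigma_{AA}$-versus-$\Sigma$ comparison that you spell out). One small correction: the hypothesis $s\le e^{1-C_1/2}n$ is not what controls the union-bound probability (indeed $s\log\tfrac{en}{s}\ge\log n$ holds automatically for $1\le s\le n$, and your claimed implication $\log n\le\tfrac{2}{C_1}\log\tfrac{en}{s}$ does not follow from it); it is used to guarantee $\log\tfrac{en}{s}\ge C_1/2$, whence $C_2\sqrt{2s/m}\le\tfrac{C_2}{2(C_2+1)}$ and $t/\sqrt m\le\tfrac{1}{2(C_2+1)}$, so the two terms of $\tau$ sum to at most $\tfrac12$ — your claim that each is separately at most $\tfrac14$ need not hold, but the sum bound is all that is required.
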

\begin{proof}
Given $A\subset [n], |A|  \leq 2s  $, we define the event $$E_{A} = \{\sqrt{\frac{\gamma_{\mathrm{min}}(\Psi_{A}^t\Psi_{A})}{m}} > \sqrt{\gamma_{\mathrm{min}}(\Sigma)}(1-C_{2} \sqrt{\frac{2s}{m}}-\frac{t}{\sqrt{m}})\}.$$ Then,
\begin{align}
\mathbb{P}[C_{2s,\mathrm{min}} > \gamma_{\mathrm{min}}(\Sigma)(1-C_{2} &\sqrt{\frac{2s}{m}}-\frac{t}{\sqrt{m}})^2] = \mathbb{P}[\bigcap_{A\in [n], |A|\leq 2s} E_{A}]\nonumber = \mathbb{P}[\bigcap_{A\in [n], |A|= \ell, 1\leq\ell\leq 2s} E_{A}]\nonumber\\
& = 1 - \mathbb{P}[\bigcup_{A\in [n], |A|= \ell, 1\leq\ell\leq 2s} \overline{E_{A}}]\nonumber\\
&\geq 1- \sum_{\ell = 1}^{2s}\sum_{A\subset[n], |A| =  \ell} (1-\mathbb{P}[E_{A}])\nonumber\\
&\geq 1- \sum_{\ell = 1}^{2s}\sum_{A\subset [n], |A|\leq \ell}  2\exp(-C_1 t^2)\nonumber\\
&= 1- \sum_{\ell = 1}^{2s}\binom{n}{\ell}2 \exp(-C_1 t^2 )\nonumber\\
&\geq 1- 2(\frac{en}{2s})^{2s} \exp(-C_1 t^2 ),\nonumber
\end{align}
where the first inequality follows from the union bound, the second inequality  follows from \eqref{norm1} by replacing $\Psi$ with $\Psi_{A}$, and  the third inequality  holds since
 \begin{align*}
 &\sum_{\ell = 1}^{2s}\binom{n}{\ell}\leq (\frac{n}{2s})^{2s}\sum_{\ell =0}^{2s}\binom{n}{\ell}(\frac{2s}{n})^{\ell}\\
 &\leq (\frac{n}{2s})^{2s}(1+\frac{2s}{n})^{n}\\
 & \leq (\frac{e n}{2s})^{2s}.
 \end{align*}
Thus, by setting $t = \sqrt{\frac{4s}{C_1} \log\frac{en}{s}}$, we derive with probability at least $1- 2/(\frac{en}{s})^{2s} \geq 1-  2/n^2$
\begin{align*}
C_{2s,\mathrm{min}} &> \gamma_{\mathrm{min}}(\Sigma)(1-C_{2} \sqrt{\frac{2s}{m}}-\frac{t}{\sqrt{m}})^2\\
& \geq \gamma_{\mathrm{min}}(\Sigma)(1-(C_{2}+1)\sqrt{\frac{4s}{mC_1} \log\frac{en}{s}})^2 \\
&\geq \gamma_{\mathrm{min}}(\Sigma)/4,
\end{align*}
where the second inequality follows from  the assumption $s\leq \exp^{(1-\frac{C_1}{2})}n$ and , i.e., $$\sqrt{\frac{2s}{m}}\leq \sqrt{\frac{4s}{mC_1} \log\frac{en}{s}}<1,$$ and the last inequality follows from the assumption  $m \geq \frac{16(C_2+1)^2}{C_1}s\log\frac{en}{s}$.
Define event $$\widetilde{E}_{A} = \{\sqrt{\frac{\gamma_{\mathrm{max}}(\Psi_{A}^t\Psi_{A})}{m}} > \sqrt{\gamma_{\mathrm{max}}(\Sigma)}(1+C_{2} \sqrt{\frac{2s}{m}}+\frac{t}{\sqrt{m}})\}.$$
Then,
\begin{align}
\mathbb{P}[C_{2s,\mathrm{max}} > \gamma_{max}(\Sigma)(1+C_{2} \sqrt{\frac{2s}{m}}+\frac{t}{\sqrt{m}})^2] &= \mathbb{P}[\bigcup_{A\subset [n], |A|\leq 2s} \widetilde{E}_{A}]\nonumber\\
& = \mathbb{P}[\bigcup_{A\subset [n], |A| = \ell,   1\leq\ell\leq 2s} \widetilde{E}_{A}]\nonumber\\
& \leq \sum_{\ell = 1}^{2s}\sum_{A\in [n], |A|\leq \ell}  2 \exp(-C_1 t^2)\nonumber\\
&\leq (\sum_{\ell = 1}^{2s}\binom{n}{\ell}) 2\exp(-C_1 t^2 )\nonumber\\
&\leq  2(\frac{en}{s})^{2s} \exp(-C_1 t^2 ),\nonumber
\end{align}
which implies  with probability at least $1-2(\frac{en}{s})^{2s} \exp(-C_1 t^2 )$,
\begin{equation*}
C_{2s,\mathrm{max}} \leq  \gamma_{max}(\Sigma)(1+C_{2} \sqrt{\frac{2s}{m}}+\frac{t}{\sqrt{m}})^2.
\end{equation*}
We finish the proof by  setting $t = \sqrt{\frac{4s}{C_1} \log\frac{en}{s}}$ and some algebra.
\end{proof}

\section{Proof of Theorem \ref{errsub}}\label{app:errsub}
\begin{proof}
Let
\begin{align}
 \tilde{x}^* &= c x^*,\label{xstar}\\
 R &=  y -\Psi  \tilde{x}^*,\label{effniose}
 \end{align}
and  $ \Delta=  x_{\ell_0} - \tilde{x}^* $, $\mathcal{A} = \mathrm{supp}(R)$.
By the definition   $x_{\ell_0}$,  we have
$\|x_{\ell_0}\|_0 \leq s,$ $|\mathcal{A}|\leq 2s$
and
\begin{equation}\label{optmin}
\frac{1}{2m}\|y - \Psi x_{\ell_0}\|_2^2 \leq \frac{1}{2m}\|y - \Psi \tilde{x}^*\|_2^2.
\end{equation}
Then,
\begin{align*}
\gamma_{\mathrm{min}}(\Sigma)\|\Delta\|_2^2/4 &\leq C_{2s,\mathrm{min}}\|\Delta\|_2^2 \leq \frac{1}{2m}\|\Psi \Delta\|_2^2 \\
   &\leq \langle \Delta, \Psi^t R/m \rangle\leq \|\Delta\|_2 \|\Psi_{\mathcal{A}}^t R/m\|_2\\
 &\leq \sqrt{2s} \|\Delta\|_2 \| \Psi^t R/m\|_{\infty}\\
 & \leq \|\Delta\|_2  \frac{2(1+|c|C_3)}{\sqrt{C_1}}\sqrt{\frac{2s\log n}{m}},
 \end{align*}
 where, the first inequality holds with probability at least $1-2/n^2$ by  Lemma \ref{lb}, and  the second inequality uses the definition of   $C_{2s,\mathrm{min}}$, and the third  inequality dues to  \eqref{optmin} and some algebra, and the third on uses  Cauchy Schwartz inequality, and the fourth inequity follow from $|\mathcal{A}|\leq 2s$ and Cauchy Schwartz inequality,   and the last inequality holds with probability  at least $1-2/n^3  -2/n^2$  by Lemma \ref{noiselinf}.
 The above display implies, with probability at least $1-2/n^3  -4/n^2$
  $$\|x_{\ell_0}/c - x^*\| \leq  \frac{12(1/|c|+C_3)}{\sqrt{C_1} \gamma_{\mathrm{min}}(\Sigma)}\sqrt{\frac{s\log n}{m}}.$$
  We finish the proof by substituting $c$ and some algebra.
\end{proof}

\section{Proof of Lemma  \ref{kkt}}\label{app:kkt}
\begin{proof}
Flowing Theorem 2.2 in \cite{Beck:2013}, we just need to show that $\Psi$ is $s$-regular i.e., $\frac{\gamma_{\mathrm{min}}(\Psi_{A}^t\Psi_{A})}{m}>0$ with $A\subset [n], |A|\leq s$,   and to calculate  the Lipschitz constant of the gradient of the least squares  loss in \eqref{setup} restricted on $s$ sparse vectors, i.e., $\frac{\gamma_{\mathrm{max}}(\Psi_{A}^t\Psi_{A})}{m}$  with $A\subset [n], |A|\leq 2s$.
By Lemma \ref{lb} with probability at least $1-4/n^2$,  $\Psi$ is $s$-regular and the Lipschitz constant is bounded by $9\gamma_{\mathrm{max}}(\Sigma)/4$.
\end{proof}

\section{Proof of Proposition \ref{eqn}}\label{app:eqn}
\begin{proof}
Denote $D^k = - (H^k)^{-1}F(w^k)$. Then,
$$ w^{k+1} = w^k - (H^k)^{-1}F(w^k),$$ can be recast as
\begin{align}
H^k D^k &= -F(w^k)\label{newton1}\\
w^{k+1} &= w^k + D^k\label{newton2}.
\end{align}
  Partition $w^k$,   $D^k$ and  $F(w^k)$ according to  $A^{k}$ and ${I}^{k}$ such that
  \begin{equation} \label{FF2}
       w^{k}=\left(
         \begin{array}{c}
         x_{A^k}^{k} \\
           x_{I^{k}}^k \\
           d_{{A}^{k}}^k \\
          d_{{I}^{k}}^k \\
         \end{array}
          \right),\ \
       D^{k}=\left(
         \begin{array}{c}
         D^{x}_{A^{k}} \\
          D^{x}_{I^{k}} \\
           D^{d}_{A^{k}} \\
          D^{d}_{{I}^{k}} \\
         \end{array}
          \right),
          \end{equation}
       \begin{equation}\label{FF3}
F(w^{k})=
\left[\begin{array}{c}
 - d_{{A}^{k}}^{k} \\
 x_{{I}^{k}}^{k}  \\
  \Psi^{t}_{{A}^{k}} \Psi_{{A}^{k}} x_{{A}^{k}}^{k} +  \Psi^t_{{A}^{k}} \Psi_{{I}^{k}} x_{{I}^{k}}^{k}+ m d_{{A}^{k}}^{k} - \Psi^t_{{A}^{k}}y     \\
\Psi^{t}_{{I}^{k}} \Psi_{\mathcal{A}^{k}} x_{\mathcal{A}^{k}}^{k} +  \Psi^{t}_{{I}^{k}} \Psi_{{I}^{k}} x_{{I}^{k}}^{k}+ m d_{{I}^{k}}^{k} - \Psi^t_{{I}^{k}}y
\end{array}\right].
\end{equation}
 Substituting \eqref{FF2}-\eqref{FF3} and $H^k$ into \eqref{newton1}, we have
\begin{align}
  (d_{{A}^{k}}^{k} + D^{d}_{\mathcal{A}_{k}})    &= \mathbf{0}_{A^k},\label{eqv1}\\
   x^{k}_{{I}^{k}} + D^{x}_{{I}^{k}}  &= \textbf{0}_{I^k},\label{eqv2}\\
\Psi^{t}_{{A}^{k}} \Psi_{{A}^{k}} (x_{{A}^{k}}^{k} + D^{x}_{{A}^{k}})&= \Psi^t_{{A}^{k}}y  - m( d_{{A}^{k}}^{k} + D^{d}_{{A}^{k}})- \Psi^t_{{A}^{k}} \Psi_{{I}^{k}} ( x^{k}_{{I}^{k}} + D^{x}_{{I}^{k}}),  \label{eqv3}\\
m(d_{{I}^{k}}^{k} + D^{d}_{{I}^{k}}) &= \Psi^t_{{I}^{k}}y - \Psi^{t}_{{I}^{k}} \Psi_{{A}^{k}}(x_{{A}^{k}}^{k} + D^{x}_{{A}^{k}}) - \Psi^t_{{A}^{k}} \Psi_{{I}^{k}}(x_{{I}^{k}}^{k} + D^{x}_{{I}^{k}}) . \label{eqv4}
\end{align}
It follows from \eqref{newton2} that
 \begin{equation}\label{relation}
 \left( \begin{array}{c}
  x_{{A}^{k}}^{k+1} \\
          x_{{I}^{k}}^{k+1} \\
           d_{{A}^{k}}^{k+1} \\
          d_{{I}^{k}}^{k+1} \\
         \end{array}
       \right)
       = \left(
         \begin{array}{c}
           x_{{I}^{k}}^{k}+D^{x}_{{I}^{k}} \\
           d_{{A}^{k}}^{k} +D^{d}_{{A}^{k}}\\
           x_{{A}^{k}}^{k}+D^{x}_{{A}^{k}} \\
          d_{{I}^{k}}^{k}+ D^{d}_{{I}^{k}} \\
         \end{array}
       \right).
 \end{equation}
Substituting  \eqref{relation} into \eqref{eqv1} - \eqref{eqv4}, we get  \eqref{eq5} of Algorithm  \ref{alg:genew}.
This completes the  proof.
\end{proof}

\section{Proof of Theorem \ref{th:est}}\label{app:est}
Let $A^* = \mathrm{supp}(x^*)$ and $F(x) = \|y-\Psi x\|^2_{2}/m$  be the least squares loss in \eqref{subreg}.  And recall $
 \tilde{x}^* = c x^*$ in \eqref{xstar} and $R =  y -\Psi  \tilde{x}^*$ in \eqref{effniose}.
 By \eqref{eq5}, it is easy to see that $$d^k = -\nabla F(x^k), \ \ \langle d^k, x^k\rangle = 0, \ \ k \geq 1.$$
The proof of Theorem \ref{th:est} is based on the following Lemmas \ref{L9.1}-\ref{L9.3}, whose proof are shown in Appendix \ref{plemma}.
\begin{lemma}\label{L9.1}
Let $\mathcal{A}^k = A^{k}\backslash A^{k-1}$ and $\varrho_{k} =\frac{|\mathcal{A}^k|}{|\mathcal{A}^k|+|A^*\backslash A^{k-1}|}$, $k\geq1.$
\begin{equation*}
2C_*\varrho_{k} (F(x^k)-F(\tilde{x}^*)) \leq  \|d^k_{\mathcal{A}^k}\|_1\|d^k_{\mathcal{A}^k}\|_{\infty}.
\end{equation*}
\end{lemma}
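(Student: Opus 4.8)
The plan is to recast the claim as a strong-convexity-type estimate for the quadratic least squares objective $F$ of \eqref{subreg} and then to transfer the resulting bound from the target block $A^*\setminus A^{k-1}$ onto the newly activated set $\mathcal{A}^k$ by means of the hard thresholding rule \eqref{eq4}. I would start from the two facts recorded just before the lemma, $\nabla F(x^k)=-d^k$ and $\langle d^k,x^k\rangle=0$, together with $\mathrm{supp}(d^k)\subseteq I^{k-1}$. Writing the exact second-order expansion of the quadratic $F$ at $\tilde x^*$ around $x^k$ and setting $v:=x^k-\tilde x^*$ (so $\|v\|_0\le 2s$), one obtains $F(x^k)-F(\tilde x^*)=\langle d^k,\tilde x^*\rangle-\tfrac{1}{2m}v^t\Psi^t\Psi v$. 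Since $d^k$ vanishes on $A^{k-1}$ and $\tilde x^*$ is supported on $A^*$, the cross term collapses to $\langle d^k_{A^*\setminus A^{k-1}},\tilde x^*_{A^*\setminus A^{k-1}}\rangle$, which I would bound by the geometric-mean form of H\"older's inequality, namely $\langle d^k,\tilde x^*\rangle\le\sqrt{\|d^k_{A^*\setminus A^{k-1}}\|_1\,\|d^k_{A^*\setminus A^{k-1}}\|_\infty}\,\sqrt{\|\tilde x^*_{A^*\setminus A^{k-1}}\|_1\,\|\tilde x^*_{A^*\setminus A^{k-1}}\|_\infty}$, and then split it via Young's inequality with weight $m/(nC_*)$.

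The purpose of that weight is to let the curvature term absorb the $\tilde x^*$-contribution. By the definition \eqref{CC} of $C_*$ one has $\tfrac{1}{2m}v^t\Psi^t\Psi v\ge\tfrac{nC_*}{2m}\|v\|_1\|v\|_\infty$, and since $v$ agrees with $-\tilde x^*$ on $A^*\setminus A^{k-1}$ we have $\|v\|_1\|v\|_\infty\ge\|\tilde x^*_{A^*\setminus A^{k-1}}\|_1\|\tilde x^*_{A^*\setminus A^{k-1}}\|_\infty$; hence the $\tilde x^*$-term produced by Young's inequality is cancelled and $F(x^k)-F(\tilde x^*)\le\tfrac{m}{2nC_*}\|d^k_{A^*\setminus A^{k-1}}\|_1\|d^k_{A^*\setminus A^{k-1}}\|_\infty$. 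It then remains to replace $A^*\setminus A^{k-1}$ by $\mathcal{A}^k$, which is where $\varrho_k$ appears. On $I^{k-1}$ the vector $x^k+\eta d^k$ equals $\eta d^k$, so \eqref{eq4} forces each index of $\mathcal{A}^k=A^k\setminus A^{k-1}$ to carry a value of $|d^k|$ at least as large as that of any index of $I^{k-1}$ left out of $A^k$; in particular $\|d^k_{(A^*\setminus A^{k-1})\setminus A^k}\|_\infty\le\min_{i\in\mathcal{A}^k}|d^k_i|\le\|d^k_{\mathcal{A}^k}\|_1/|\mathcal{A}^k|$, while $(A^*\setminus A^{k-1})\cap A^k\subseteq\mathcal{A}^k$. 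Splitting $A^*\setminus A^{k-1}$ into these two pieces, bounding the at most $|A^*\setminus A^{k-1}|$ omitted coordinates by the threshold level, and keeping the cardinality count exact gives $\|d^k_{A^*\setminus A^{k-1}}\|_\infty\le\|d^k_{\mathcal{A}^k}\|_\infty$ and $\|d^k_{A^*\setminus A^{k-1}}\|_1\le\big(1+|A^*\setminus A^{k-1}|/|\mathcal{A}^k|\big)\|d^k_{\mathcal{A}^k}\|_1=\varrho_k^{-1}\|d^k_{\mathcal{A}^k}\|_1$. Substituting into the preceding display and using $n>m$ yields $2C_*\varrho_k\big(F(x^k)-F(\tilde x^*)\big)\le\tfrac{m}{n}\|d^k_{\mathcal{A}^k}\|_1\|d^k_{\mathcal{A}^k}\|_\infty\le\|d^k_{\mathcal{A}^k}\|_1\|d^k_{\mathcal{A}^k}\|_\infty$; the degenerate cases $\mathcal{A}^k=\emptyset$ and $A^*\subseteq A^{k-1}$ are handled directly, since then the left-hand side is zero or nonpositive.

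I expect the combinatorial transfer step to be the main obstacle: one has to argue carefully that every coordinate of $A^*$ missed by $A^k$ sits below the threshold level attained on $\mathcal{A}^k$, and carry the cardinality bookkeeping exactly so that precisely the ratio $\varrho_k$, rather than some looser constant, is produced. The second place requiring care is the tuning of the Young weight $m/(nC_*)$ so that the curvature term exactly offsets the $\tilde x^*$-dependence; the remaining estimates are routine.
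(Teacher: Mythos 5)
Your proof is correct and follows essentially the same route as the paper's: an exact second-order expansion of $F$ at $\tilde x^*$ around $x^k$ giving curvature of order $C_*\|v\|_1\|v\|_\infty$ (you keep the $m/n$ bookkeeping explicit, which the paper elides using $n>m$), a bound on the cross term $\langle d^k,\tilde x^*\rangle$, and a transfer from $A^*\setminus A^{k-1}$ to $\mathcal{A}^k$ via the fact that the thresholding rule places the largest entries of $d^k$ into $\mathcal{A}^k$, producing exactly the factor $\varrho_k$. The only differences are cosmetic: the paper uses AM--GM and an $\ell_2$ Cauchy--Schwarz plus top-block-average argument where you use Young's inequality with a tuned weight and a direct $\ell_1$/$\ell_\infty$ counting argument.
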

\begin{lemma}\label{L9.2}
Let $\zeta =1-\frac{2\eta C_*(1-\eta \sqrt{s}C^*)}{ \sqrt{s}(1+s)}\in(0,1).$
 It holds
\begin{equation*}
F(x^{k+1})-F(\tilde{x}^*)\leq\zeta (F(x^k)-F(\tilde{x}^*)),
\end{equation*}
before Algorithm GNA  terminates.
\end{lemma}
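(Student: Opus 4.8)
The plan is to prove a one-step sufficient-decrease estimate for the objective $F$ and then convert it, via Lemma \ref{L9.1}, into the claimed geometric contraction. The starting observation is that, by \eqref{eq5}, $x^{k+1}$ is the exact minimizer of $F$ over all vectors supported in $A^{k}$; since the hard-thresholded vector $z^{k+1}:=\mathcal{H}_{s}(x^{k}+\eta d^{k})=(x^{k}+\eta d^{k})|_{A^{k}}$ is also supported in $A^{k}$, we have $F(x^{k+1})\le F(z^{k+1})$, so it suffices to lower bound $F(x^{k})-F(z^{k+1})$. Because $F$ is quadratic with $\nabla F(x^{k})=-d^{k}$ and Hessian a fixed multiple of $\Psi^{t}\Psi/m$, its second-order Taylor expansion is exact, and writing $v:=z^{k+1}-x^{k}$ it reads $F(x^{k})-F(z^{k+1})=\langle d^{k},v\rangle-\tfrac{1}{2m}\|\Psi v\|_{2}^{2}$. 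Here the structural relations in \eqref{eq5} are crucial: $x^{k}$ is supported in $A^{k-1}$ and $d^{k}$ in $I^{k-1}=\overline{A^{k-1}}$, so $v$ vanishes on $A^{k}\cap A^{k-1}$, equals $\eta d^{k}$ on $\mathcal{A}^{k}=A^{k}\setminus A^{k-1}$, and equals $-x^{k}$ on $A^{k-1}\setminus A^{k}$; in particular $\|v\|_{0}\le 2s$ and $\langle d^{k},v\rangle=\eta\|d^{k}_{\mathcal{A}^{k}}\|_{2}^{2}$.

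Next I would bound the two terms. For $\tfrac{1}{2m}\|\Psi v\|_{2}^{2}$, I exploit that $A^{k}$ selects the $s$ entries of $x^{k}+\eta d^{k}$ largest in absolute value: since $d^{k}=0$ on $A^{k-1}$ and $x^{k}=0$ on $I^{k-1}$, every dropped coordinate $i\in A^{k-1}\setminus A^{k}$ satisfies $|x^{k}_{i}|\le|x^{k}+\eta d^{k}|_{s,\infty}\le\eta|d^{k}_{j}|$ for every added coordinate $j\in\mathcal{A}^{k}$, and $|A^{k-1}\setminus A^{k}|=|\mathcal{A}^{k}|$; this gives $\|v\|_{\infty}=\eta\|d^{k}_{\mathcal{A}^{k}}\|_{\infty}$ and $\|v\|_{1}\le 2\eta\|d^{k}_{\mathcal{A}^{k}}\|_{1}$, so the definition of $C^{*}$ in \eqref{CC} yields $\tfrac{1}{2m}\|\Psi v\|_{2}^{2}\le\tfrac{C^{*}}{2}\|v\|_{1}\|v\|_{\infty}\le C^{*}\eta^{2}\|d^{k}_{\mathcal{A}^{k}}\|_{1}\|d^{k}_{\mathcal{A}^{k}}\|_{\infty}$. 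For the positive term, the elementary bound $\|w\|_{2}^{2}\ge\|w\|_{\infty}\|w\|_{1}/\sqrt{|\mathrm{supp}(w)|}$ (combining $\|w\|_{2}\ge\|w\|_{\infty}$ with $\|w\|_{1}\le\sqrt{|\mathrm{supp}(w)|}\,\|w\|_{2}$) applied to $w=d^{k}_{\mathcal{A}^{k}}$, together with $|\mathcal{A}^{k}|\le s$, gives $\eta\|d^{k}_{\mathcal{A}^{k}}\|_{2}^{2}\ge\tfrac{\eta}{\sqrt{s}}\|d^{k}_{\mathcal{A}^{k}}\|_{1}\|d^{k}_{\mathcal{A}^{k}}\|_{\infty}$. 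Putting the two together, $F(x^{k})-F(x^{k+1})\ge\big(\tfrac{\eta}{\sqrt{s}}-C^{*}\eta^{2}\big)\|d^{k}_{\mathcal{A}^{k}}\|_{1}\|d^{k}_{\mathcal{A}^{k}}\|_{\infty}$, and the coefficient is nonnegative because Lemma \ref{lb} gives $C^{*}\le C_{2s,\mathrm{max}}\le\tfrac{9}{4}\gamma_{\mathrm{max}}(\Sigma)$ with high probability, so the hypothesis $\eta<\tfrac{4}{9\gamma_{\mathrm{max}}(\Sigma)\sqrt{s}}$ forces $\eta\sqrt{s}\,C^{*}<1$.

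It remains to plug in Lemma \ref{L9.1}. Before GNA terminates $\mathcal{A}^{k}\neq\emptyset$, so $|\mathcal{A}^{k}|\ge1$ and $|A^{*}\setminus A^{k-1}|\le s$ force $\varrho_{k}\ge\tfrac{1}{1+s}$, and Lemma \ref{L9.1} then gives $\|d^{k}_{\mathcal{A}^{k}}\|_{1}\|d^{k}_{\mathcal{A}^{k}}\|_{\infty}\ge\tfrac{2C_{*}}{1+s}\,(F(x^{k})-F(\tilde{x}^{*}))$ (trivially when the right-hand side is negative). Since the coefficient $\tfrac{\eta}{\sqrt{s}}-C^{*}\eta^{2}$ is nonnegative, combining the last two displays yields $F(x^{k})-F(x^{k+1})\ge\tfrac{2C_{*}}{1+s}\big(\tfrac{\eta}{\sqrt{s}}-C^{*}\eta^{2}\big)(F(x^{k})-F(\tilde{x}^{*}))=(1-\zeta)(F(x^{k})-F(\tilde{x}^{*}))$, which rearranges to the asserted $F(x^{k+1})-F(\tilde{x}^{*})\le\zeta(F(x^{k})-F(\tilde{x}^{*}))$; that $\zeta\in(0,1)$ follows from $\eta\sqrt{s}\,C^{*}<1$ and the crude estimate $\tfrac{2C_{*}\eta}{(1+s)\sqrt{s}}<1$ (again using $C_{*}\le\tfrac{9}{4}\gamma_{\mathrm{max}}(\Sigma)$ and the bound on $\eta$). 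The delicate point, I expect, will be the middle paragraph: getting the support decomposition of $v=z^{k+1}-x^{k}$ exactly right and, in particular, extracting the mixed bounds $\|v\|_{1}\le 2\eta\|d^{k}_{\mathcal{A}^{k}}\|_{1}$ and $\|v\|_{\infty}=\eta\|d^{k}_{\mathcal{A}^{k}}\|_{\infty}$ from the hard-thresholding rule rather than settling for a cruder $\ell_{2}$ estimate --- it is precisely these, combined with the $\sqrt{s}$ (not $s$) in the norm inequality for $d^{k}_{\mathcal{A}^{k}}$, that reproduce the stated form of $\zeta$.
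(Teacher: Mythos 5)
Your proposal is correct and follows essentially the same route as the paper's proof: it uses the fact that $x^{k+1}$ minimizes $F$ over vectors supported on $A^{k}$ to reduce to the trial point $(x^{k}+\eta d^{k})|_{A^{k}}$, establishes the same mixed-norm bounds $\|v\|_{1}\le 2\eta\|d^{k}_{\mathcal{A}^{k}}\|_{1}$ and $\|v\|_{\infty}=\eta\|d^{k}_{\mathcal{A}^{k}}\|_{\infty}$ from the hard-thresholding rule, and combines the exact quadratic expansion with Lemma \ref{L9.1} and $\varrho_{k}\ge\frac{1}{1+s}$ exactly as the paper does. The only (immaterial) difference is that you carry out the one-step estimate directly at index $k$, whereas the paper derives it at index $k+1$ and then applies it shifted down by one.
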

\begin{lemma}\label{L9.3}
Let $\eta \in (0, \frac{1}{C^*\sqrt{s}})$ and  $x^0=\mathbf{0}$ in  GNA.
Then
we have
\begin{align}\label{ester}
\|x^k/c-x^*\|_{\infty}\leq
\zeta^{k/2}(\sqrt{\frac{2\|\Psi^tR/m\|_{\infty}\|x^*\|_1}{C_*|c|}}+\|x^*\|_1\sqrt{\frac{C^*}{C_*}})+\frac{2\|\Psi^tR/m\|_{\infty}}{C_*|c|}.
\end{align}
\end{lemma}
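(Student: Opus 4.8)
The plan is to control the $\ell_\infty$ error through the decay of the objective gap $G_k:=F(x^k)-F(\tilde x^*)$, which Lemma \ref{L9.2} already bounds geometrically. First I would record the quadratic identity for this gap. Writing $\Delta^k:=x^k-\tilde x^*$ and using $y-\Psi x^k=R-\Psi\Delta^k$ with $R=y-\Psi\tilde x^*$, a direct expansion gives
\begin{equation*}
G_k=F(x^k)-F(\tilde x^*)=\frac{1}{m}\|\Psi\Delta^k\|_2^2-\frac{2}{m}\langle\Psi^tR,\Delta^k\rangle .
\end{equation*}
Since $x^k$ is $s$-sparse (it is supported on $A^{k-1}$) and $x^*$ is $s$-sparse, we have $\|\Delta^k\|_0\le 2s$, so the defining ratios in \eqref{CC} apply to $\Delta^k$. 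Bounding the first term below by $C_*\|\Delta^k\|_1\|\Delta^k\|_\infty$ and the cross term above by $2\|\Psi^tR/m\|_\infty\|\Delta^k\|_1$ via H\"older's inequality yields the single coupled inequality
\begin{equation*}
C_*\|\Delta^k\|_1\|\Delta^k\|_\infty\;\le\;G_k+2\rho\,\|\Delta^k\|_1,\qquad \rho:=\|\Psi^tR/m\|_\infty .
\end{equation*}

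The crux, and the step I expect to be the main obstacle, is decoupling this into a bound on $\|\Delta^k\|_\infty$ alone, since the inequality mixes $\|\Delta^k\|_1$ and $\|\Delta^k\|_\infty$. I would argue by cases on $a:=\|\Delta^k\|_\infty$, exploiting $a\le\|\Delta^k\|_1$. If $C_*a\le 2\rho$ then $a\le 2\rho/C_*$ and we are done; otherwise $C_*a-2\rho>0$, and rearranging the coupled inequality gives $\|\Delta^k\|_1(C_*a-2\rho)\le G_k$, so from $\|\Delta^k\|_1\ge a>0$ and the positivity of the factor $C_*a-2\rho$ we get $a(C_*a-2\rho)\le G_k$, i.e.\ the scalar quadratic $C_*a^2-2\rho a-G_k\le 0$. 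Solving it and using $\sqrt{\rho^2+C_*G_k}\le\rho+\sqrt{C_*G_k}$ gives, in both cases,
\begin{equation*}
\|\Delta^k\|_\infty\;\le\;\frac{2\rho}{C_*}+\sqrt{\frac{G_k}{C_*}} .
\end{equation*}
A minor point is that $G_k$ need not be nonnegative; one replaces $G_k$ by $\max\{G_k,0\}$ throughout (in case $2$ the quadratic forces $G_k>0$ automatically, and case $1$ is unaffected), which is harmless.

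Finally I would feed in the geometric decay. Lemma \ref{L9.2} gives $G_k\le\zeta^kG_0$, and since $x^0=\mathbf 0$ we have $\Delta^0=-\tilde x^*$, so the identity above together with the $C^*$ upper bound from \eqref{CC} and H\"older yields
\begin{equation*}
G_0=\frac{1}{m}\|\Psi\tilde x^*\|_2^2+\frac{2}{m}\langle\Psi^tR,\tilde x^*\rangle\le C^*\|\tilde x^*\|_1\|\tilde x^*\|_\infty+2\rho\|\tilde x^*\|_1\le C^*\|\tilde x^*\|_1^2+2\rho\|\tilde x^*\|_1 .
\end{equation*}
Hence $\sqrt{G_k/C_*}\le\zeta^{k/2}\sqrt{G_0/C_*}$, and applying $\sqrt{u+v}\le\sqrt u+\sqrt v$ splits the right-hand side into $\zeta^{k/2}\big(\|\tilde x^*\|_1\sqrt{C^*/C_*}+\sqrt{2\rho\|\tilde x^*\|_1/C_*}\big)$. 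Combining with the previous display bounds $\|\Delta^k\|_\infty=\|x^k-\tilde x^*\|_\infty$. To conclude I substitute $\tilde x^*=cx^*$, so that $\|\tilde x^*\|_1=|c|\,\|x^*\|_1$ and $\|x^k/c-x^*\|_\infty=\|\Delta^k\|_\infty/|c|$; dividing the assembled inequality by $|c|$ and simplifying the $|c|$ factors inside the square roots reproduces \eqref{ester} exactly, with $\rho=\|\Psi^tR/m\|_\infty$. The hypothesis $\eta\in(0,1/(C^*\sqrt s))$ enters only through Lemma \ref{L9.2}, where it ensures $\zeta\in(0,1)$ so that the factor $\zeta^{k/2}$ genuinely contracts.
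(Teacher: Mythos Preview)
Your proposal is correct and follows essentially the same route as the paper's proof. The paper also reduces to the scalar quadratic $\frac{C_*}{2}a^2-\rho a-(F(x^k)-F(\tilde x^*))\le 0$ for $a=\|x^k-\tilde x^*\|_\infty$ (packaged there via the product inequality $(\|\Delta^k\|_1-\|\Delta^k\|_\infty)(\tfrac{C_*}{2}\|\Delta^k\|_\infty-\rho)\ge 0$, which is exactly your case split), then feeds in Lemma~\ref{L9.2} and bounds $F(x^0)-F(\tilde x^*)$ by the $C^*$ upper bound with $x^0=\mathbf 0$; your direct expansion of $G_k$ is just the explicit form of what the paper calls ``Taylor expansion'' for the quadratic loss.
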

\begin{proof}
By definition  $C^* \leq C_{2s,\mathrm{max}}$. Then by Lemma \ref{lb}, the step size
$\eta \in (0,\frac{4}{9\gamma_{\mathrm{max}}(\Sigma)\sqrt{s}})$ satisfying Lemma \ref{L9.3} with probability at least $1-4/n^2$.
By the assumption  $\|x^*\|_{\Sigma} = 1$ and Cauchy Schwartz inequality,   we get
\begin{equation}\label{l1}
\|x^*\|_1 \leq \sqrt{\frac{s}{\gamma_{\mathrm{min}}(\Sigma)}}.
\end{equation}
By Lemma \ref{noiselinf},
\begin{equation}\label{nlinf}
\|\Psi^t R/m\|_{\infty} \leq   \frac{2(1+|c|C_3)}{\sqrt{C_1}}\sqrt{\frac{\log n}{m}}
\end{equation}
holds with probability at least $1-2/n^3  -2/n^2$.
Substituting \eqref{l1} and  \eqref{nlinf} into \eqref{ester} and  some algebra completes the proof.
\end{proof}

\section{Proof of Lemmas \ref{L9.1}-\ref{L9.3}}\label{plemma}
\subsection{Proof of Lemma \ref{L9.1}}

\begin{proof}
 In the scenario $A^{k}=A^{k-1}$ or $F(x^k)\leq F(\tilde{x}^*)$, the desired result holds trivially.  Therefore, we assume  $A^{k}\neq A^{k-1}$ and $F(x^k)>F(\tilde{x}^*)$. By the definition of $C_{*}, C^*$ in \eqref{CC} and Taylor expansion we have,
$$
\frac{C_*}{2} \|\tilde{x}^*-x^k\|_1\|\tilde{x}^*-x^k\|_{\infty} \leq F(\tilde{x}^{*})-F(x^k)+\langle d^k,{\tilde{x}^*-x^k}\rangle
$$
The above display implies,
\begin{align*}
&\langle d^k,\tilde{x}^*\rangle = \langle d^k,{\tilde{x}^*-x^k}\rangle\\
&\geq\frac{C_*}{2}\|\tilde{x}^*-x^k\|_1\|\tilde{x}^*-x^k\|_{\infty}+F(x^k)-F(\tilde{x}^{*})\\
&\geq \sqrt{2C_*}\sqrt{\|\tilde{x}^*-x^k\|_1\|\tilde{x}^*-x^k\|_{\infty}}\sqrt{F(x^k)-F(\tilde{x}^{*})}.
\end{align*}
By the definition of $A^{k}$ in \eqref{eq4} and $x^k,d^k$ in \eqref{eq5},  $\mathcal{A}^k$ contains the first $|\mathcal{A}^k|$-largest elements in absolute value of $d^k$ and
$\mathrm{supp}(d^k)\bigcap \mathrm{supp}(\tilde{x}^*) = A^{*}\backslash A^{k-1}.$
By the definition of   $\varrho_k$ and Cauchy Schwartz inequality, we have
\begin{align*}
&\langle d^k,\tilde{x}^*\rangle \leq\frac{1}{\sqrt{\varrho_{k}}}\|d^k_{\mathcal{A}^k}\|_2\|\tilde{x}_{A^*\backslash A^{k-1}}^{*}\|_2\\
&=\frac{1}{\sqrt{\varrho_{k}}}\|d^k_{\mathcal{A}^k}\|_2\|(\tilde{x}^{*}-x^k)_{A^*\backslash A^{k-1}}\|_2\\
&\leq\frac{1}{\sqrt{\varrho_{k}}}\sqrt{\|d^k_{\mathcal{A}^k}\|_1\|d^k_{\mathcal{A}^k}\|_{\infty}}\sqrt{\|\tilde{x}^{*}-x^k\|_1\|\tilde{x}^{*}-x^k\|_{\infty}}.
\end{align*}
We finish the proof by combing the above two displays.
\end{proof}

\subsection{Proof of Lemma \ref{L9.2}}

\begin{proof}
Let $u^{k}= x^k+\eta d^k,\ \  k \geq 1$.
By the definition of $u^k$ and $A^{k}$ in \eqref{eq4} and $x^k,d^k$ in \eqref{eq5}, we have
\begin{align*}
&\langle -d^{k+1}, u^{k+1}|_{A^{k+1}}-x^{k+1}\rangle =\langle-d^{k+1}, u^{k+1}|_{A^{k+1}}\rangle =\langle - d^{k+1}_{A^{k+1}\backslash A^{k}}, u^{k+1}_{A^{k+1}\backslash A^{k}}\rangle,\\
&|A^{k}\backslash A^{k+1}|=|A^{k+1}\backslash A^{k}|, \ \  u_{A^{k}\backslash A^{k+1}}^{k+1}=x_{A^{k}\backslash A^{k+1}}^{k+1}, \ \   u_{A^{k+1}\backslash A^{k}}^{k+1} = \eta d^{k+1}_{A^{k+1}\backslash A^{k}},\\
&\|u_{A^{k}\backslash A^{k+1}}^{k+1}\|_1=\|x_{A^{k}\backslash A^{k+1}}^{k+1}\|_1 \leq \|u_{A^{k+1}\backslash A^{k}}^{k+1}\|_1,  \ \
\max\{\|u_{A^{k+1}\backslash A^{k}}^{k+1}\|_{\infty}, \|x_{A^{k}\backslash A^{k+1}}^{k+1}\|_{\infty}\}= \|u_{A^{k+1}\backslash A^{k}}^{k+1}\|_{\infty}.
\end{align*}
Then,
\begin{align}\label{ul1}
&\|u^{k+1}|_{A^{k+1}}-x^{k+1}\|_1= \|u^{k+1}|_{A^{k+1}\backslash A^{k}}+u^{k+1}|_{A^{k+1}\bigcap A^{k}}-x^{k+1}|_{A^{k+1}\bigcap A^{k}}-x^{k+1}|_{A^{k}\backslash A^{k+1}}\|_1\nonumber\\
&=\|u_{A^{k+1}\backslash A^{k}}^{k+1}\|_1 +\|u_{A^{k+1}\bigcap A^{k}}^{k+1}-x_{A^{k+1}\bigcap A^{k}}^{k+1}\|_1+ \|x_{A^{k}\backslash A^{k+1}}^{k+1}\|_1\nonumber\\
&=\|u_{A^{k+1}\backslash A^{k}}^{k+1}\|_1+\|x_{A^{k}\backslash A^{k+1}}^{k+1}\|_1 \leq 2\|u_{A^{k+1}\backslash A^{k}}^{k+1}\|_1 = 2\eta \|d_{A^{k+1}\backslash A^{k}}^{k+1}\|_1,
\end{align}
\begin{align}\label{ulinf}
&\|u^{k+1}|_{A^{k+1}}-x^{k+1}\|_{\infty} = \|u_{A^{k+1}\backslash A^{k}}^{k+1}\|_{\infty}+\|x_{A^{k}\backslash A^{k+1}}^{k+1}\|_{\infty}\nonumber\\
&=\max\{\|u_{A^{k+1}\backslash A^{k}}^{k+1}\|_{\infty}, \|x_{A^{k}\backslash A^{k+1}}^{k+1}\|_{\infty}\} =\|u_{A^{k+1}\backslash A^{k}}^{k+1}\|_{\infty} = \eta \|d_{A^{k+1}\backslash A^{k}}^{k+1}\|_{\infty}.
\end{align}
By  \eqref{ul1}-\eqref{ulinf} and  the  definition of $C^*$ in \eqref{CC} and Taylor expansion, we get
\begin{align*}
&F(u^{k+1}|_{A^{k+1}})-F(x^{k+1})\\
&\leq \langle - d^{k+1}, u^{k+1}|_{A^{k+1}}-x^{k+1}\rangle + \frac{C^*}{2}\|u^{k+1}|_{A^{k+1}}-x^{k+1}\|_1\|u^{k+1}|_{A^{k+1}}-x^{k+1}\|_{\infty}\\
&\leq \langle - d^{k+1}_{A^{k+1}\backslash A^{k}}, u^{k+1}_{A^{k+1}\backslash A^{k}}\rangle + \frac{C^*}{2} 2\eta \|d_{A^{k+1}\backslash A^{k}}^{k+1}\|_1\eta \|d_{A^{k+1}\backslash A^{k}}^{k+1}\|_{\infty}\\
& \leq  -\eta\|d_{A^{k+1}\backslash A^{k}}^{k+1}\|_2^2 +\eta^2C^*\|d_{A^{k+1}\backslash A^{k}}^{k+1}\|_1 \|d_{A^{k+1}\backslash A^{k}}^{k+1}\|_{\infty}\\
& \leq (-\frac{\eta}{\sqrt{s}} + \eta^2C^*)\|d_{A^{k+1}\backslash A^{k}}^{k+1}\|_1 \|d_{A^{k+1}\backslash A^{k}}^{k+1}\|_{\infty}.
\end{align*}
Then the by the definition of $x^{k+1}$ and the above display, we deduce,
 \begin{align*}
  & F(x^{k+1})- F(\tilde{x}^*) - (F(x^{k}) - F(\tilde{x}^*))\leq F(u^{k}|_{A^{k}})-F(x^{k})\\
& \leq (-\frac{\eta}{\sqrt{s}} + \eta^2C^*)\|d_{A^{k}\backslash A^{k-1}}^{k}\|_1 \|d_{A^{k}\backslash A^{k-1}}^{k}\|_{\infty}\\
& \leq (-\frac{\eta}{\sqrt{s}} + \eta^2C^*)2C_*\varrho_{k} (F(x^k)-F(\tilde{x}^*))\\
& \leq (-\frac{\eta}{\sqrt{s}} + \eta^2C^*)2C_*\frac{1}{1+s} (F(x^k)-F(\tilde{x}^*))
\end{align*}
where the third inequality  uses
$\eta <\frac{1}{\sqrt{s}C^*}$ and Lemma \ref{L9.1} and the fourth inequality holds due to $\varrho_{k} \geq \frac{1}{s+1}$.
We finish the proof by rearranging term in the above display.
\end{proof}
\section{Proof of Lemma \ref{L9.3}}

\begin{proof}
If $\|x^k-\tilde{x}^*\|_{\infty}< \frac{2\|\Psi^tR/m\|_{\infty}}{C_*}$, Lemma \ref{L9.3} holds trivially. Therefore,  we consider the case $$\|x^k-\tilde{x}^*\|_{\infty}\geq \frac{2\|\Psi^tR/m\|_{\infty}}{C_*}.$$
It follows from the  the definition of $C_*$ and Taylor expansion that
\begin{align*}
&F(x^{k})-F(\tilde{x}^{*}) \geq \langle \nabla F(\tilde{x}^{*}),x^{k}-\tilde{x}^{*}\rangle+\frac{C_*}{2}\|x^{k}-\tilde{x}^{*}\|_1\|x^{k}-\tilde{x}^{*}\|_{\infty}\\
&\geq-\|\Psi^tR/m\|_{\infty}\|x^{k}-\tilde{x}^{*}\|_1+\frac{C_*}{2}\|x^{k}-\tilde{x}^{*}\|_1\|x^{k}-\tilde{x}^{*}\|_{\infty}.
\end{align*}
The above display and the fact
$$
(\|x^{k}-\tilde{x}^{*}\|_1-\|x^{k}-\tilde{x}^{*}\|_{\infty})(\frac{C_*}{2}\|x^{k}-\tilde{x}^{*}\|_{\infty}-\|\Psi^tR/m\|_{\infty})\geq 0,
$$
imply
$g(\|x^{k}-\tilde{x}^{*}\|_{\infty}) \leq 0$,
where the univariate quadratic function $$g(t) = \frac{C_*}{2}t^2 - \|\Psi^tR/m\|_{\infty}  t -(F(x^k)-F(\tilde{x}^*)).$$
Therefore, we have
\begin{align}\label{apeq1}
\|x^{k}-\tilde{x}^{*}\|_{\infty} &\leq\frac{\|\Psi^tR/m\|_{\infty}+\sqrt{\|\Psi^tR/m\|_{\infty}^2+2C_*(F(x^{k})-F(\tilde{x}^{*}))}}{C_*} \nonumber\\
&\leq \sqrt{\frac{2\max\{F(x^{k})-F(\tilde{x}^{*}),0\}}{C_*}}+\frac{2\|\Psi^tR/m\|_{\infty}}{C_*}.
\end{align}
On the other hand,
\begin{align}\label{eq002}
&F(x^{k})-F(\tilde{x}^{*})\|_{\infty}\leq \zeta^k (F(x^{0})-F(\tilde{x}^{*}))\nonumber\\
&\leq\zeta^k(\langle -\Psi^tR/m, x^0-\tilde{x}^*\rangle + \frac{C^*}{2} \|x^0-\tilde{x}^*\|_1\|x^0-\tilde{x}^*\|_{\infty})\nonumber\\
&\leq\zeta^k(\|\Psi^tR/m\|_{\infty}\|x^0-\tilde{x}^*\|_1+\frac{C^*}{2}\|x^0-\tilde{x}^*\|_1\|x^0-\tilde{x}^*\|_{\infty})\nonumber\\
&\leq \zeta^k(\|\Psi^tR/m\|_{\infty}|c|\|x^*\|_1+\frac{C^*}{2}c^2\|x^*\|_1^2)
\end{align}
where the first inequality uses  Lemma \ref{L9.2} and the second inequality uses the definition of $C^*$ in \eqref{CC} and Taylor expansion, the third  inequality follows from Cauchy Schwartz inequality, and the last one uses  $x^0 = \mathbf{0}$.
Combing the \eqref{apeq1} and \eqref{eq002} we get
\begin{align*}
\|x^{k}-\tilde{x}^{*}\|_{\infty} &\leq \zeta^{k/2}\sqrt{(2\|\Psi^tR/m\|_{\infty}|c|\|x^*\|_1/C_*+\frac{C^*}{C_*}c^2\|x^*\|_1^2)}+\frac{2\|\Psi^tR/m\|_{\infty}}{C_*}\\
&\leq \zeta^{k/2}(\sqrt{2\|\Psi^tR/m\|_{\infty}|c|\|x^*\|_1/C_*}+|c|\|x^*\|_1\sqrt{\frac{C^*}{C_*})})+\frac{2\|\Psi^tR/m\|_{\infty}}{C_*}.
\end{align*}
We  completes the proof by diving $|c|$.
\end{proof}
\section{Proof of Proposition \ref{recsupp}}\label{app:sup}
\begin{proof}
By \eqref{betaerror} in Theorem \ref{th:est} and the assumption on $|x^*|_{\mathrm{min}}$, we get $\mathrm{supp}(x^*) \subseteq A^k$.
Hence,  $\mathrm{supp}(x^*) = \subseteq A^k$  since we assume $\mathrm{supp}(x^*)=s$.  Then  $\mathrm{supp}(x^*)  =A^k = A^{k+1}$ as long as
$k\geq\log_{1/\zeta}(\frac{sm}{\log n} \frac{C^*C_1|c|^2}{16C_*(1+|c|C_3)^2\gamma_{\mathrm{min}}(\Sigma)}).$
This completes the proof.
\end{proof}

\bibliographystyle{abbrv}

\end{document}